\newcommand{\hide}[1]{}
\newtheorem{theorem}{Theorem}
\newtheorem{lemma}[theorem]{Lemma}
\newtheorem{corollary}[theorem]{Corollary}
\newtheorem{proposition}[theorem]{Proposition}
\newtheorem{definition}[theorem]{Definition}
\newtheorem{conjecture}{Conjecture}
\newtheorem{example}[theorem]{Example}
\theoremstyle{break}
\newtheorem{theorem-break}[theorem]{Theorem}
\newtheorem{lemma-break}[theorem]{Lemma}
\newtheorem{corollary-break}[theorem]{Corollary}
\newtheorem{definition-break}[theorem]{Definition}
\theoremstyle{nonumberplain}
\newtheorem{proofpart}{Proof}
\newtheorem{proof}{Proof}
\Crefname{algorithm}{Algorithm}{Algorithms}
\Crefname{conjecture}{Conjecture}{Conjectures}
\Crefname{remark}{Remark}{Remark}
\Crefname{theorem-break}{Theorem}{Theorems}
\Crefname{corollary-break}{Corollary}{Corollaries}
\Crefname{definition-break}{Definition}{Definitions}
\crefname{part}{part}{parts}  
\Crefname{part}{Part}{Parts}
\crefname{equations}{Eqs.}{Eqs.}
\Crefname{equations}{Equations}{Equations}
\newcommand{\affilcr}{\protect\\}
\DeclareMathOperator{\poly}{poly}
\DeclareMathAlphabet{\mathitbf}{OML}{cmm}{b}{it}
\newcommand{\keyword}[1]{\emph{#1}}
\renewcommand{\emptyset}{\varnothing}
\newcommand{\chan}{\mathcal{E}}
\newcommand{\meas}{\mathcal{M}}
\newcommand{\kQSAT}{$k$\nobreakdash--\textsc{QSAT}}
\newcommand{\kSAT}{$k$\nobreakdash--\textsc{SAT}}
\newcommand{\kCNF}{$k$\nobreakdash--CNF}
\newcommand{\taucheck}{$\tau$\nobreakdash-check}
\definecolor{dark-green}{rgb}{0,0.4,0}
\newcommand{\TODO}[1]{%
  \begingroup%
    \def\@tmpa{#1}%
    \ifx\@tmpa\@empty%
      \color{red}\textup{\textrm{[TODO]}}%
    \else%
      \color{red}\textup{\textrm{[TODO: #1]}}%
    \fi%
  \endgroup
  }
\begin{document}

\title{A constructive commutative\\quantum Lov\'asz Local Lemma, and beyond}

\author[1]{Toby S. Cubitt\thanks{tcubitt@mat.ucm.es}}

\author[2]{Martin Schwarz\thanks{m.schwarz@univie.ac.at}}

\affil[1]{Departamento de An\'alisis Matem\'atico,
  Facultad de CC Matem\'aticas,\affilcr
  Universidad Complutense de Madrid,
  Plaza de Ciencias 3,
  Ciudad Universitaria,
  28040 Madrid, Spain}

\affil[2]{Faculty of Physics, University of Vienna
  Boltzmanngasse 7, A-1090 Vienna, Austria}

\date{6 December 2011}

\maketitle

\begin{abstract}
  The recently proven Quantum Lov\'asz Local Lemma generalises the well-known
  Lov\'asz Local Lemma. It states that, if a collection of subspace
  constraints are ``weakly dependent'', there necessarily exists a state
  satisfying all constraints. It implies e.g.\ that certain instances of the
  \kQSAT{} quantum satisfiability problem are necessarily satisfiable, or that
  many-body systems with ``not too many'' interactions are always
  frustration-free.

  However, the QLLL only asserts existence; it says nothing about how to
  \emph{find} the state. Inspired by Moser's breakthrough classical results,
  we present a constructive version of the QLLL in the setting of commuting
  constraints, proving that a simple quantum algorithm converges efficiently
  to the required state. In fact, we provide two different proofs, one using a
  novel quantum coupling argument, the other a more explicit combinatorial
  analysis. Both proofs are independent of the QLLL. So these results also
  provide independent, constructive proofs of the commutative QLLL itself, but
  strengthen it significantly by giving an efficient algorithm for finding the
  state whose existence is asserted by the QLLL. We give an application of the
  constructive commutative QLLL to convergence of CP~maps.

  We also extend these results to the non-commutative setting. However, our
  proof of the general constructive QLLL relies on a conjecture which we are
  only able to prove in special cases.
\end{abstract}

\pagebreak
\tableofcontents

\section{Introduction and Background}
The Lov\'asz Local Lemma (LLL), proven by \citet{LLL}, is a well-known and
widely-used result in probability theory. It states that if individual events
are ``not too'' dependent on each other and occur with ``not too high'' a
probability, then there is a positive probability that none of them occur.
This is a non-trivial extension of the trivial fact that, if the individual
events were completely independent, and if none of them occurred with
certainty, then there would be a positive probability that none of them occur.
The LLL gives tight bounds on just how dependent the events are allowed to be
and how high the probability of the events can be. In its most general form,
it states:
\begin{theorem}[\citet{LLL}]\label{thm:LLL}
  Let $A_1,A_2,\dots,A_m$ be events in a probability space, and let
  $\Gamma(A_i)$ denote the set of events that are \emph{not} independent of
  $A_i$, excluding $A_i$ itself. If there exist values $0 \leq
  x_1,x_2,\dots,x_m \leq 1$ such that
  \begin{equation}
    \forall i: \Pr(A_i) \leq x_i\cdot\prod_{\mathclap{A_j\in \Gamma(A_i)}} (1-x_j),
  \end{equation}
  then the probability that \emph{none} of the events occur is at least
  $\prod_i(1-x_i)$. (In particular, it is positive.)
\end{theorem}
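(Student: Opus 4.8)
The plan is to prove, by induction on $|S|$, the following strengthened quantitative statement: for every index $i$ and every $S \subseteq \{1,\dots,m\}\setminus\{i\}$ for which the conditioning event has positive probability,
\begin{equation}\label{eq:lll-key}
  \Pr\Bigl(A_i \;\Big|\; \bigcap_{j \in S} \overline{A_j}\Bigr) \leq x_i .
\end{equation}
Granting \eqref{eq:lll-key}, the theorem follows at once: assuming (as we may) that each $x_i < 1$, repeated use of \eqref{eq:lll-key} shows that every partial intersection has positive probability, so the chain rule gives
\begin{equation}
  \Pr\Bigl(\bigcap_{i=1}^{m} \overline{A_i}\Bigr) = \prod_{i=1}^{m} \Pr\Bigl(\overline{A_i} \;\Big|\; \bigcap_{j<i} \overline{A_j}\Bigr) \geq \prod_{i=1}^{m} (1 - x_i) > 0 .
\end{equation}

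For the base case $S = \emptyset$, \eqref{eq:lll-key} reads $\Pr(A_i) \leq x_i$, which is immediate from the hypothesis since every factor $1 - x_j$ lies in $[0,1]$. For the inductive step I would split $S = S_1 \cup S_2$ with $S_1 = S \cap \Gamma(A_i)$ and $S_2 = S \setminus \Gamma(A_i)$, and write the conditional probability as a ratio,
\begin{equation}
  \Pr\Bigl(A_i \;\Big|\; \bigcap_{j \in S} \overline{A_j}\Bigr)
  = \frac{\Pr\bigl(A_i \cap \bigcap_{j \in S_1} \overline{A_j} \;\big|\; \bigcap_{k \in S_2} \overline{A_k}\bigr)}
         {\Pr\bigl(\bigcap_{j \in S_1} \overline{A_j} \;\big|\; \bigcap_{k \in S_2} \overline{A_k}\bigr)} .
\end{equation}
The numerator is at most $\Pr(A_i \mid \bigcap_{k \in S_2} \overline{A_k})$; because $A_i$ is independent of the family $\{A_k : k \in S_2\}$ (none of these lie in $\Gamma(A_i)$), this equals $\Pr(A_i)$, which by hypothesis is at most $x_i \prod_{A_j \in \Gamma(A_i)} (1 - x_j)$. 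For the denominator I would enumerate $S_1 = \{j_1,\dots,j_r\}$ and apply the chain rule, bounding each factor $\Pr(\overline{A_{j_\ell}} \mid \overline{A_{j_1}} \cap \dots \cap \overline{A_{j_{\ell-1}}} \cap \bigcap_{k \in S_2} \overline{A_k}) \geq 1 - x_{j_\ell}$ by the induction hypothesis, legitimately, since its conditioning set has size $(\ell-1) + |S_2| < |S|$. Hence the denominator is at least $\prod_{j \in S_1}(1 - x_j)$, and dividing leaves $x_i \prod_{A_j \in \Gamma(A_i)\setminus S_1}(1 - x_j) \leq x_i$, which closes the induction.

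The delicate point — and what I expect to be the main obstacle — is the denominator estimate: it is essential that the induction run on the size of the conditioning set (not on $m$), so that the chain-rule factors may invoke \eqref{eq:lll-key} at strictly smaller instances, and one must separately dispose of the degenerate cases in which some intermediate conditional probability vanishes (there $\Pr(\bigcap_{j \in S}\overline{A_j}) = 0$ and \eqref{eq:lll-key} is vacuous). The second thing to get right is the bookkeeping of the $(1 - x_j)$ factors: the hypothesis supplies a product over all of $\Gamma(A_i)$ in the numerator bound, while the denominator cancels only those indexed by $S_1 \subseteq \Gamma(A_i)$, and it is precisely this asymmetry — engineered by partitioning $S$ according to membership in $\Gamma(A_i)$ — that forces the final ratio down to $x_i$.
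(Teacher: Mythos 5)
The paper does not actually prove \cref{thm:LLL}; it is quoted as a known background result from \citet{LLL}, so there is no internal proof to compare against. Your argument is the standard proof of the general (asymmetric) LLL as found in \citet{Alon+Spencer}: induction on $\abs{S}$ to establish $\Pr(A_i \mid \bigcap_{j\in S}\overline{A_j}) \le x_i$, splitting $S$ into neighbours $S_1$ and non-neighbours $S_2$ of $A_i$, bounding the numerator by independence and the denominator by the chain rule together with the induction hypothesis at strictly smaller conditioning sets. The bookkeeping is right: the numerator bound carries the full product over $\Gamma(A_i)$, the denominator cancels only the factors indexed by $S_1\subseteq\Gamma(A_i)$, and the leftover product is at most $1$. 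Your handling of the degenerate zero-probability cases and of the final chain-rule assembly is also correct.

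One point deserves to be made explicit. The step ``because $A_i$ is independent of the family $\{A_k : k\in S_2\}$\,\ldots\ this equals $\Pr(A_i)$'' requires that $A_i$ be \emph{mutually} independent of the whole collection of events outside $\Gamma^+(A_i)$ --- that is, $\Pr(A_i \mid E) = \Pr(A_i)$ for every event $E$ in the $\sigma$-algebra generated by $\{A_k\}_{k\in S_2}$, in particular for $E = \bigcap_{k\in S_2}\overline{A_k}$. Read literally, the theorem's phrasing (``the set of events that are not independent of $A_i$'') only asserts pairwise independence from each non-neighbour, which does not imply this and under which the lemma is in fact false. The intended hypothesis, and the one your proof uses, is the standard dependency-graph condition of mutual independence; you should state that you are invoking it, since it is the only place the independence structure enters the argument.
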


Applications of the LLL abound \citep{Alon+Spencer}. It is frequently invoked
in order to prove existence of some mathematical object via the probabilistic
method. For example, it can be used to prove existence of solutions to boolean
satisfiability problems. The \kSAT{} problem asks whether there exists an
assignment of truth values to a set of boolean variables that satisfies a
boolean expression in conjunctive-normal form (CNF), where each clause in the
CNF formula involves at most $k$ variables. Letting
$A_i$ be the event that the $i$\textsuperscript{th} clause is violated when
the assignment is chosen at random, the LLL implies:
\begin{corollary}[Symmetric Lov\'asz Local Lemma]
  \label{cor:symmetric_LLL}
  If each variable in a \kCNF{} formula appears in at most $2^{k}/(e\cdot k)$
  clauses, then there exists a satisfying assignment for the formula.
\end{corollary}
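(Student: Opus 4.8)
The plan is to apply \cref{thm:LLL} to a uniformly random truth assignment. Fix the \kCNF{} formula, with clauses $C_1,\dots,C_m$; as is customary when quoting the symmetric LLL I take each $C_i$ to involve exactly $k$ distinct variables (the reason this is needed is discussed below). Sample every boolean variable independently and uniformly at random, and let $A_i$ be the event that $C_i$ is violated. A disjunction of $k$ literals on $k$ distinct variables is false for exactly one of the $2^{k}$ assignments to those variables, so $\Pr(A_i) = 2^{-k}$. If $C_j$ shares no variable with $C_i$ then $A_j$ is governed by a block of random bits disjoint from those governing $A_i$; in fact $A_i$ is mutually independent of the whole family of such events, so we may take $\Gamma(A_i)$ to be the set of clauses sharing a variable with $C_i$. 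Writing $D := 2^{k}/(e\cdot k)$ for the largest number of clauses in which any variable occurs, $C_i$ has $k$ variables, each lying in at most $D-1$ clauses besides $C_i$, so $|\Gamma(A_i)| \le k(D-1) = 2^{k}/e - k =: N$.

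Next I would invoke \cref{thm:LLL} with the symmetric choice $x_1 = \dots = x_m = x$, which turns its hypothesis into $\Pr(A_i) \le x(1-x)^{|\Gamma(A_i)|}$ for every $i$. (If $2^{k} < e\cdot k$, equivalently $k \le 3$, then $D < 1$, so the occurrence bound forces the formula to contain no clauses and there is nothing to prove; henceforth assume $k \ge 4$, so that $N = 2^{k}/e - k > 0$.) Put $x = 1/(N+1) \in (0,1)$. Since $0 < 1-x < 1$ and $|\Gamma(A_i)| \le N$, it is enough to check $2^{-k} \le \tfrac{1}{N+1}\bigl(1-\tfrac{1}{N+1}\bigr)^{N}$. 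Here $\bigl(1-\tfrac{1}{N+1}\bigr)^{N} = \bigl(1+\tfrac1N\bigr)^{-N} > e^{-1}$, because $(1+1/N)^{N} < e$; hence the right-hand side exceeds $\tfrac{1}{e(N+1)} = \tfrac{1}{2^{k}-e(k-1)} \ge 2^{-k}$, the last inequality using $k \ge 1$. Thus the hypothesis of \cref{thm:LLL} holds, and it yields that the probability that no clause is violated is at least $\prod_i (1-x_i) = (1-x)^{m} > 0$. In particular some assignment violates no clause, i.e.\ the formula is satisfiable, which is the claim.

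The one genuinely delicate point --- and the step I would flag as the main obstacle --- is the reduction to clauses of length exactly $k$. For a clause with $j < k$ literals one has $\Pr(A_i) = 2^{-j}$, which is larger, whereas the dependency bound only improves to $j(D-1)$, and the uniform estimate above no longer closes; in fact the statement can fail under the literal ``at most $k$'' reading, since $(x_1)\wedge(\bar{x}_1)$ satisfies the hypothesis once $k$ is large enough that $D \ge 2$ yet is unsatisfiable. One could try to pad each short clause up to length $k$ using literals of variables already appearing elsewhere, but this must be arranged so as not to break the per-variable occurrence bound; the clean route, which I would take, is simply to phrase the corollary for formulas in which every clause has exactly $k$ distinct literals --- the form in which it is normally applied. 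Everything else is the elementary arithmetic above, whose only real content is the bound $(1+1/N)^{-N} > 1/e$.
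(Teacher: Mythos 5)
Your derivation is correct and is exactly the standard symmetric-LLL argument that the paper itself invokes but does not write out (it only says ``letting $A_i$ be the event that the $i$\textsuperscript{th} clause is violated when the assignment is chosen at random, the LLL implies\dots''), so there is nothing to compare beyond noting that your setup, your choice $x=1/(N+1)$, and the bound $(1+1/N)^{-N}>e^{-1}$ are the canonical way to close it. Your caveat that the statement should be read with every clause containing exactly $k$ distinct literals (the literal ``at most $k$'' reading fails, as your $(x_1)\wedge(\bar{x}_1)$ example shows) is apt and matches the convention under which this corollary is always quoted.
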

Indeed, this amounts to a symmetric, uniform version of the full Lov\'asz
Local Lemma.

Recently, by replacing events with subspaces, and probabilities with relative
dimensions, \citet*{QLLL} succeeded in generalising the LLL to the quantum
setting. The relative dimension $R(X)$ of a subspace $X$ in a vector space $V$
is the ratio $R(X) = \dim(X)/\dim(V)$. Two subspaces $X_i,X_j$ are said to be
\keyword{R-independent} if $R(X_i\cap X_j) = R(X_i)R(X_j)$. In its most
general form, the \keyword{Quantum Lov\'asz Local Lemma} (QLLL) states:
\begin{theorem-break}[\citet*{QLLL}] \label{thm:QLLL}
  Let $X_1,X_2,\dots,X_m$ be subspaces, and let $\Gamma(X_i)$ denote the set
  of subspaces that are \emph{not} R-independent of $X_i$, excluding $X_i$
  itself. If there exist values $0 \leq x_1,x_2,\dots,x_m \leq 1$ such that
  the relative dimensions $R(X_i)$ satisfy
  \begin{equation}\label{eq:quantum_Lovasz_conditions}
    R(X_i) \geq 1 - x_i\cdot\prod_{\mathclap{X_j\in \Gamma(X_i)}} (1-x_j),
  \end{equation}
  then $R(\bigcap_i X_i) \geq \prod_i (1-x_i)$. In particular, the
  intersection $\bigcap_iX_i$ has positive dimension.
\end{theorem-break}
(Note that, because of the properties of R-independence under orthogonal
complement, the QLLL has to be stated ``the other way around'' to the LLL, so
that the subspaces correspond to events that one \emph{does} want to occur.
See \citet*[Lemma~11]{QLLL}, and the discussion thereafter.)

Viewed from one perspective, the QLLL has little to do with quantum physics;
rather, it is a mathematical generalisation of the standard LLL to a
geometrical setting. However, viewed from another perspective, the quantum
version is closely related to current topics of physics research.

For example, just as the LLL can be applied to \kSAT{} problems, the QLLL can
be applied to its quantum generalisation, \kQSAT{} \citep{QSAT}. Boolean
variables become qubits, clauses in a CNF formula become projectors $\Pi_i$
that act non-trivially on $k$ qubits,
and the \kQSAT{} problem asks whether there is a state $\ket{\psi}$ of the
qubits satisfying $\forall i: \Pi_i\ket{\psi} = 0$. The QLLL implies:
\begin{corollary-break}[\citet*{QLLL}]
  Let $\Pi_1,\Pi_2,\dots,\Pi_m$ be a \kQSAT{} instance where all projectors
  have rank at most $r$. If each qubit occurs in at most $2^{k}/(e\cdot r
  \cdot k)$ projectors, then there exists a satisfying state for the \kQSAT{}
  instance.
\end{corollary-break}

An equivalent way of expressing the \kQSAT{} problem is to ask whether the
Hamiltonian $H = \frac{1}{m}\sum_i \Pi_i$ has a zero-energy ground state.
Since the $\Pi_i$ are positive-semidefinite, this is equivalent to asking
whether the ground state of the overall Hamiltonian is simultaneously the
ground state of all the individual local terms; i.e.\ we are asking whether or
not the ground state is \keyword{frustrated}. Replacing the projectors $\Pi_i$
with positive-semidefinite local Hamiltonian terms $h_i$ whose support
(coimage) is $\Pi_i$ does not affect whether the ground state is frustrated.
So the \kQSAT{} problem amounts to asking whether the ground state of an
interacting many-body system is frustrated or not. The QLLL implies that
many-body systems in which each particle interacts with ``not too many''
others are \emph{never} frustrated.

The LLL and its quantum counterpart assert existence, e.g.\ of a satisfying
assignment. But they give no indication as to how to \emph{find} this
assignment. In a breakthrough result, \citet{Moser} gave a beautiful proof of
a constructive version of the classical LLL. This not only gives an
independent proof of the LLL, but does so by providing an efficient algorithm
for \emph{finding} the point in probability space whose existence is asserted
by the LLL (e.g.\ the satisfying assignment to a \kSAT{} problem).

\citet{Moser} originally proved this for the symmetric LLL, achieving the
tight asymptotic scaling but not quite achieving the tight constant in the
bounds of \cref{cor:symmetric_LLL}. With Tardos, he quickly generalised his
proof to cover the general LLL and match the tight constants of \cref{thm:LLL}
\citep{MoserTardos}. The proof imposes a very slight restriction, in requiring
that the events in the LLL be determined by different subsets of underlying,
mutually-independent, random variables:
\begin{theorem}[\citet{MoserTardos}]\label{thm:constructive_LLL}
  Let $p_1,p_2,\dots,p_n$ be mutually-independent random variables, and let
  $A_1,A_2,\dots,A_m$ be events determined by these variables. If there exist
  values $0\leq x_1,x_2,\dots,x_m\leq 1$ such that
  \begin{equation}
    \forall i: \Pr(A_i) \leq x_i\cdot\prod_{\mathclap{A_j\in\Gamma(A)}} (1-x_j),
  \end{equation}
  then there exists an assignment of values to the variables ${p_i}$ such that
  \emph{none} of the events $A_i$ occur. Moreover, there is a randomised
  algorithm (\cref{alg:Moser-Tardos}) that finds this assignment in expected
  time
  \begin{equation}
    O\Bigl(n + \sum_{i=1}^m\frac{x_i}{1-x_i}\cdot\Abs{A_i}\Bigr),
  \end{equation}
  where $\abs{A_i}$ is the number of variables involved in determining event
  $A_i$.
\end{theorem}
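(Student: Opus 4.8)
The plan is to run the obvious ``fix a violation'' algorithm and analyse it by the witness-tree argument of Moser and Tardos. Concretely, \cref{alg:Moser-Tardos} first draws $p_1,\dots,p_n$ independently from their distributions; then, as long as some event $A_i$ occurs under the current assignment, it selects such an $A_i$ (by any fixed rule), resamples exactly the variables on which $A_i$ depends, and records the choice, producing a \emph{log} $C(1),C(2),\dots$ whose $t$-th entry is the event resampled at step $t$. It is convenient to regard all the randomness as held in an infinite \emph{resampling table}: for every variable an i.i.d.\ sequence of values, the $j$-th of which is what that variable becomes the $j$-th time it is (re)drawn. The algorithm can only halt in an assignment where no $A_i$ occurs, so the whole theorem reduces to bounding the expected number $\mathbb{E}[N_i]$ of steps at which a given event $A_i$ is resampled by $x_i/(1-x_i)$: summing over $i$ then bounds the expected number of steps; each resampling of $A_i$ costs $O(\lvert A_i\rvert)$ work with suitable bookkeeping, giving expected running time $O\bigl(n+\sum_i\tfrac{x_i}{1-x_i}\lvert A_i\rvert\bigr)$; and finiteness of this expectation forces almost-sure termination, hence the existence claim.

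The analysis hinges on \emph{witness trees}: finite rooted trees whose vertices are labelled by events such that, at each vertex $v$, the children carry pairwise distinct labels, all lying in $\Gamma^{+}([v]):=\Gamma([v])\cup\{[v]\}$, where $[v]$ denotes the event labelling $v$. To a log $C$ and a step $t$ I attach the tree $\tau_C(t)$ grown as follows: start with a single root labelled $C(t)$; then for $i=t-1,t-2,\dots,1$ in turn, if the current tree contains a vertex $v$ with $C(i)\in\Gamma^{+}([v])$, pick a \emph{deepest} such $v$ and give it a fresh child labelled $C(i)$. One checks that $\tau_C(t)$ is a genuine witness tree --- the ``deepest $v$'' rule is exactly what forces the children of a vertex to have distinct labels. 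The first key lemma is that $t\mapsto\tau_C(t)$ is injective: if $A$ is resampled for the $j$-th time at step $t$, then $\tau_C(t)$ has exactly $j$ vertices labelled $A$ (the root, plus one per earlier resampling of $A$, since $A\in\Gamma^{+}(A)$ always makes such a step attachable), so two distinct steps give distinct trees --- either the root labels differ, or, if they agree, the trees have different numbers of vertices labelled by that common event. Hence $\mathbb{E}[N_i]=\sum_{t\ge1}\Pr(C(t)=A_i)\le\sum_{\tau}\Pr(\text{$\tau$ appears})$, the sum running over witness trees with root labelled $A_i$, where we say $\tau$ \emph{appears} if $\tau=\tau_C(t)$ for some step $t$.

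The main step, and the one I expect to be the real obstacle, is the \emph{Witness Tree Lemma}: for every fixed witness tree $\tau$, $\Pr(\text{$\tau$ appears})\le\prod_{v\in\tau}\Pr([v])$. The plan is to prove it by coupling against the resampling table. The crucial structural fact is that the deepest-attachment rule forces any two vertices of $\tau$ whose events share a variable to lie at different depths; consequently the ``$\tau$-check'' that visits the vertices of $\tau$ in order of non-increasing depth and, at each $v$, tests whether $[v]$ holds under the current values of its variables and then advances those variables to their next table entries, never inspects the same table entry twice. Since the table entries are independent and each test of $[v]$ therefore sees fresh randomness, the $\tau$-check passes with probability exactly $\prod_{v\in\tau}\Pr([v])$; and $\tau$ appearing implies that the $\tau$-check passes, which gives the bound.

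It remains to sum over trees. To control $\sum_{\tau}\prod_{v\in\tau}\Pr([v])$ over witness trees rooted at $A_i$, I would compare it with a Galton--Watson process started from a root labelled $A_i$ in which a vertex labelled $A$ spawns, independently for each $B\in\Gamma^{+}(A)$, a child labelled $B$ with probability $x_B$. A direct computation shows that this process produces a prescribed finite witness tree $\tau$ with probability $\tfrac{1-x_i}{x_i}\prod_{v\in\tau}\bigl(x_{[v]}\prod_{B\in\Gamma([v])}(1-x_B)\bigr)$, and the hypothesis $\Pr(A_j)\le x_j\prod_{B\in\Gamma(A_j)}(1-x_B)$ says precisely that the factor indexed by $v$ here dominates $\Pr([v])$. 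As these probabilities sum to at most $1$, we obtain $\sum_{\tau}\prod_{v\in\tau}\Pr([v])\le x_i/(1-x_i)$, hence $\mathbb{E}[N_i]\le x_i/(1-x_i)$, and together with the reductions of the first paragraph this completes the argument.
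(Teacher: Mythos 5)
Your proposal is correct and follows exactly the Moser--Tardos argument that the paper cites for this theorem and reviews in \cref{sec:constructive_proof}: witness trees built backwards from the log by deepest attachment, the injectivity of $t\mapsto\tau_C(t)$, the witness-tree lemma via the \taucheck{} coupling against a shared resampling table, and the Galton--Watson comparison giving $\expectation[N_i]\leq x_i/(1-x_i)$. No gaps worth flagging.
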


The algorithm is almost the simplest randomised algorithm one could imagine
(\cref{alg:Moser-Tardos}). We are looking for an assignment such that none of
the events occur, so we say that an event is \keyword{violated} by an
assignment if it occurs for that assignment. The algorithm maintains a
register $v = v_1v_2\dots v_n$ of assignments to the variables $p_i$. At each
step, it checks whether any event $A_j$ is violated by the current assignment,
and if so replaces the assignments $v_i$ to the variables involved in that
event by values chosen uniformly at random. It repeats this procedure until no
events are violated. A priori it is not obvious that this process will ever
terminate; \cref{thm:constructive_LLL} proves that it in fact terminates in
linear expected time!

\begin{algorithm}[!hbtp]
  \caption{Classical Solver}\label{alg:Moser-Tardos}
  \begin{algorithmic}[1]
    \Function{\textnormal{\texttt{solve\_lll}}}{$(A_1,A_2,\dots,A_m)$}
      \Forall{$p_i$}
        \State $v_i\gets$ a random evaluation of $p_i$
      \EndForall
      \While{$\exists A_j$ violated by $v$}
        \State pick a violated event $A_j$;
          \label{alg:Moser-Tardos:log}
        \Forall{$p_i\in A_j$}
          \State $v_i\gets$ a random evaluation of $p_i$;
        \EndForall
      \EndWhile
      \State \textbf{return} $v$;
    \EndFunction
  \end{algorithmic}
\end{algorithm}

\section{Results}
In this paper we prove a constructive version of the commutative case of the
Quantum Lov\'asz Local Lemma.
%
As in the constructive version of the (classical) LLL, we have to impose a
slight restriction, in requiring that the subspaces respect an underlying
tensor product structure. We can, without loss of generality, take the
underlying structure to be the state space of a set of qudits, analogous to
the underlying random variables in \cref{thm:constructive_LLL}. Each subspace
in the QLLL is then defined on some subset of the qudits. (More precisely, it
is the extension of such a subspace to the full Hilbert space of all the
qudits.)

It will be convenient in the constructive QLLL to represent subspaces by
projectors. We define the relative dimension of a projector $\Pi$ to simply be
the relative dimension of the subspace $X$ onto which it projects: $R(\Pi)
\coloneqq \rank(\Pi)/\dim(V) = \dim(X)/\dim(V) = R(X)$. With the restriction
to an underlying tensor-product space, two subspaces are $R$-independent iff
their corresponding projectors act non-trivially on at least one qudit in
common; we say that the projectors \keyword{intersect} in this case.
Conversely, two projectors that act non-trivially on disjoint subsets of
qudits are said to be \keyword{disjoint}. We can simplify the notation by
letting $[i]$ denote the subset of qudits on which projector $\Pi_i$ acts
non-trivially; then $\Pi_i$ and $\Pi_j$ intersect iff
$[i]\cap[j]\neq\emptyset$. We write $\Gamma(\Pi_i)$ for the set of projectors
that intersect with $\Pi_i$, \emph{excluding} $\Pi_i$ itself, and
$\Gamma^+(\Pi_i) = \Gamma(\Pi_i)\cup\{\Pi_i\}$ for the set that
\emph{includes} $\Pi_i$.

\begin{definition}[Lov\'asz conditions]\label{def:Lovasz_conditions}
  Let $\Pi_1,\Pi_2,\dots,\Pi_m$ be projectors that act on subsets of $n$
  qudits. We say that the set of projectors $\{\Pi_i\}$ satisfies the
  \keyword{Lov\'asz conditions} if there exist values $0 \leq
  x_1,x_2,\dots,x_m \leq 1$ such that
  \begin{equation}
    R(\Pi_i) \leq x_i\cdot\prod_{\mathclap{\Pi_j\in \Gamma(\Pi_i)}} (1-x_j).
  \end{equation}
\end{definition}

We prove the following constructive version of the QLLL (\cref{thm:QLLL}) in
the case of commuting projectors:
\begin{theorem}[Constructive Commutative QLLL]\label{thm:constructive_QLLL}
  Let $\Pi_1,\Pi_2,\dots,\Pi_m$ be \emph{mutually commuting} projectors acting
  on subsets of $n$ qudits. If $\{\Pi_i\}$ satisfy the Lov\'asz conditions,
  then there exists a joint state $\rho$ of the qudits such that $\forall i:
  \tr[\Pi_i\rho] = 0$.

  Moreover, there is a quantum algorithm that converges to a state $\rho'$
  such that $\tr[\Pi_i\rho'] \leq \varepsilon$ (or, equivalently, such that
  $\tr[P_0\rho] \geq 1-\varepsilon$ when $P_0$ is the projector onto the subspace
  \mbox{$\vspan\{\ket{\psi}:\forall i\,\Pi_i\ket{\psi}=0\}$}) in time
  \begin{equation}
    \biggOrder{n + \frac{m}{\varepsilon}\sum_{i=1}^m\frac{x_i}{1-x_i}
                   \cdot\bigAbs{[i]}},
  \end{equation}
  where $\Abs{[i]}$ is the number of qudits on which the projector $\Pi_i$
  acts non-trivially.
\end{theorem}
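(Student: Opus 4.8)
The plan is to transplant the Moser--Tardos proof of \cref{thm:constructive_LLL} to the quantum setting: random evaluations of the variables become the maximally mixed state on the relevant qudits, resampling becomes a ``replace with maximally mixed'' channel, and the test ``is $A_j$ violated?'' becomes the projective measurement $\{\Pi_j,\,I-\Pi_j\}$. Commutativity is what will make the bookkeeping go through. Concretely, I would take the \emph{quantum solver} to be: initialise each of the $n$ ($d$-dimensional) qudits in the maximally mixed state, so the joint state is $\rho=I/d^{n}$ with $\tr[\Pi_j\rho]=R(\Pi_j)$; then cycle repeatedly through $j=1,\dots,m$, each time measuring $\{\Pi_j,I-\Pi_j\}$ on the current state and, on the outcome ``violated'', applying the resampling channel $\mathcal{R}_j(\sigma)\coloneqq\tr_{[j]}(\sigma)\otimes I_{[j]}/d^{\Abs{[j]}}$ (which refreshes exactly the qudits of $[j]$) and recording $j$ in a log; halt and output the current state $\rho'$ once $m$ consecutive measurements have all returned ``satisfied'' without an intervening resampling.

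The first step is correctness, and here commutativity enters: a ``satisfied'' verdict is stable. If $\Pi_i\sigma=\sigma\Pi_i=0$, then applying to $\sigma$ either a measurement $\{\Pi_j,I-\Pi_j\}$ with $[\Pi_i,\Pi_j]=0$ or a resampling channel $\mathcal{R}_j$ with $[i]\cap[j]=\emptyset$ leaves the state in $\ker\Pi_i$ --- for the measurement because $\Pi_i(I-\Pi_j)\sigma(I-\Pi_j)=(I-\Pi_j)\Pi_i\sigma(I-\Pi_j)=0$ (and likewise on the other branch), and for the channel because $\mathcal{R}_j$ acts trivially on the support of $\Pi_i$. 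Only resampling a genuine neighbour $\Pi_j\in\Gamma(\Pi_i)$ can undo $\Pi_i$'s verdict --- exactly the classical situation. Hence an easy induction over the final block of $m$ ``satisfied'' checks gives $\rho'\in\bigcap_i\ker\Pi_i$, i.e.\ $\tr[\Pi_i\rho']=0$ for all $i$; together with almost-sure termination (shown below) this already yields the existence assertion.

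It remains to control the running time. Initialisation is $O(n)$ and deterministic; in the loop, each check or resampling of $\Pi_j$ costs $O(\Abs{[j]})$, and since checking disturbs the state one cannot cheaply keep a list of currently-violated projectors, so between successive resamplings a whole pass of up to $m$ checks may be needed --- the source of the extra factor $m$ over the classical bound. Thus the expected loop cost is $O\bigl(m\sum_i\mathbb{E}[N_i]\,\Abs{[i]}\bigr)$, where $N_i$ counts the resamplings of $\Pi_i$, and Markov's inequality turns this into the stated bound that holds except with probability $\varepsilon$ (on the bad event the output is arbitrary, contributing at most $\varepsilon$ to each $\tr[\Pi_i\rho']$). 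To bound $\mathbb{E}[N_i]$ I would run the Moser--Tardos \emph{witness-tree} argument: from the log one builds, for each occurrence of $i$, a rooted tree of projector-labelled nodes, root $\Pi_i$, the children of a node $\Pi_v$ carrying distinct labels from $\Gamma^{+}(\Pi_v)$; the key estimate is that a fixed such tree $\tau$ occurs with probability at most $\prod_{v\in\tau}R(\Pi_v)$. Granting it, the standard summation over all witness trees rooted at $\Pi_i$ --- which is precisely where the Lov\'asz conditions of \cref{def:Lovasz_conditions} are used --- gives $\mathbb{E}[N_i]\le x_i/(1-x_i)$, and the theorem follows.

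The crux, and where the commutative quantum setting genuinely departs from the classical one, is this witness-tree probability bound. Classically it comes from coupling the run with an independent ``resampling table'' so that distinct nodes contribute independent factors $\Pr(A_v)$; I do not expect a literal analogue, since commuting projectors with overlapping supports need not be simultaneously diagonal in any product basis, and --- unlike classical tests --- the measurements $\{\Pi_j,I-\Pi_j\}$ disturb the state. The plan is instead to lean on two consequences of commutativity: a \emph{non-selective} measurement of any $\Pi_j$ leaves $\tr[\Pi_v\,\cdot\,]$ unchanged for every other projector (so the checks performed between two resamplings that touch $[v]$ do not inflate $v$'s violation probability on average), while the channel $\mathcal{R}_w$ restores the qudits of $[w]$ to maximally mixed; combining these along each branch of the tree, with the resamplings ordered so that every node's test is evaluated against a state whose reduced density matrix on $[v]$ is maximally mixed, should yield the per-node bound $R(\Pi_v)$ and hence $\prod_{v\in\tau}R(\Pi_v)$. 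Making this decorrelation argument rigorous in full generality is the hard part; where it is too delicate one can instead carry out a more explicit combinatorial analysis of $\mathbb{E}[N_i]$ --- and it is exactly this step that resists extension to the non-commutative case.
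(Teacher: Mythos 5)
Your scaffolding is the right one and matches the paper's: the same quantum solver (your fixed-order, halt-on-$m$-consecutive-successes variant is essentially the paper's exact algorithm of \cref{sec:eff_QLLL}, \cref{alg:exact_QLLL}), the same use of commutativity to show that a block of $m$ consecutive satisfied measurements certifies $\rho'\in\bigcap_i\ker\Pi_i$, the same witness-tree/Galton--Watson summation giving $\mathbb{E}[N_i]\le x_i/(1-x_i)$, and the same accounting for the extra factor of $m$. But the single load-bearing step --- the bound $\Pr(\tau\text{ occurs})\le\prod_{v\in\tau}R(\Pi_v)$ --- is exactly what you have not proved, and your proposed route to it does not work as stated. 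The observation that a \emph{non-selective} measurement of a commuting $\Pi_j$ preserves $\tr[\Pi_v\,\cdot\,]$ is true but beside the point: the event ``$\tau$ occurs'' is defined by a specific sequence of \emph{selective} outcomes, so the unnormalised state on which $\Pi(v)$ is eventually measured is a composition of trace-decreasing branches $\sigma\mapsto(\id-\Pi_j)\sigma(\id-\Pi_j)$ interleaved with resampling channels, conditioned on all earlier violations in the tree. Averaging over outcomes is not available, and the resampling channels only refresh the qudits of the projectors that were actually violated, so the reduced state on $[v]$ at the time of $v$'s test is \emph{not} maximally mixed in general; the whole difficulty is to show that the accumulated ``satisfied'' projections and the conditioning on the other nodes of $\tau$ cannot inflate the probability beyond the product of relative dimensions.

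This is precisely where the paper's two actual arguments live, and neither is a routine decorrelation. The coupling proof (\cref{lem:coupling}) entangles the algorithm's random source with \taucheck{}'s via maximally entangled pairs and uses the identity $A\otimes\id\ket{\omega}=\id\otimes A^T\ket{\omega}$ plus commutativity to show $\bra{\Omega}[P_1\cdots P_{t-1}\Pi(v)P_{t-1}\cdots P_1]\otimes(\id-\Pi(v)^T)\ket{\Omega}=0$, i.e.\ \taucheck{} passes with certainty given that $\tau$ occurs. The combinatorial proof (\cref{sec:combinatorial_proof}) instead establishes operator inequalities such as $(\id-\Pi_{i_t})\cdots(\id-\Pi_{i_0})\tfrac{\id}{d}(\id-\Pi_{i_0})\cdots(\id-\Pi_{i_t})\le\tfrac{\id}{d}$ for commuting projectors and then does nontrivial bookkeeping (resample DAGs and their ordering probabilities $p_{\mathfrak{G}}$, which must sum to $1$ over consistent orderings) to make the resulting geometric series close up to exactly $\prod_v\tr[\Pi_v/d]$. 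Until you supply one of these mechanisms --- or an equivalent one --- the proof is incomplete at its central point; everything downstream of the bound is correct but routine.
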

(Note that, because we restrict to subspaces with a tensor product structure,
there is no longer any need to state the constructive QLLL ``the other way
around'' as there was in \cref{thm:QLLL}.) The algorithm is a generalisation
of \cref{alg:Moser-Tardos} to the quantum setting, and is described in
\cref{sec:constructive_proof}.

There are two significant challenges in generalising Moser's constructive LLL
to the quantum setting. First, the state we are trying to construct may be
highly entangled, whereas the algorithm only has access to measurements of the
local projectors $\Pi_i$. Another way of expressing this in terms of the
Hamiltonian $H$ defining a \kQSAT{} instance is that, in the classical
setting, we know in advance in which basis the overall Hamiltonian $H$ is
diagonal---the computational basis---and the local projectors $\Pi_i$ are
local in this same basis. In the quantum setting, the basis which diagonalises
the overall Hamiltonian is only defined globally, and can be highly entangled.
The projectors $\Pi_i$ will not in general be local in this basis, but will
act non-trivially on the entire system. Were we to na\"ively apply the
\citet{MoserTardos} algorithm in this diagonal basis, every time we measured a
projector $\Pi_i$ to be ``violated'' (i.e.\ we obtained the $\Pi_i$ outcome
upon performing the $\{\Pi_i,\id-\Pi_i\}$ measurement), we would have to
discard the entire state and start from scratch. Thus a na\"ive application of
the \citet{MoserTardos} algorithm to the quantum case reduces to picking a
random state, checking if it satisfies a constraint, and, if not, discarding
the entire state and trying again. This certainly cannot find the correct
state efficiently.

Note that this challenge remains just as problematic even if the projectors
$\Pi_i$ commute. In the commutative case of the QLLL, we may know a priori
that there exists a basis which diagonalizes all projectors simultaneously,
but this basis is still only defined globally, the ground state can still be
highly entangled, and the projectors can still be non-local in the diagonal
basis. (Stabiliser states are a simple example of commuting Hamiltonians with
highly entangled ground states~\citep{Nielsen+Chuang}.) Indeed, this and
related questions in the commutative setting have recently been gaining
increasing attention in the context of Hamiltonian
complexity~\citep{AE11,schuch11,BV03}.

The second challenge comes from non-commutativity: quantum states are
disturbed by measurement. The classical algorithm is free to check which
\kSAT{} clauses are currently satisfied, without affecting the current
variable assignment. But quantum mechanically, even if we measure a \kQSAT{}
projector $\Pi_i$ to be ``satisfied'' (i.e.\ we obtain the desired $\id-\Pi_i$
outcome upon performing the $\{\Pi_i,\id-\Pi_i\}$ measurement), the
measurement can disturb the state so as to increase the probability of measing
another $\Pi_j$ to be violated.

Here, we address and give a complete solution to the first of these two
challenges: we prove a constructive version of the commutative QLLL (i.e.\ the
case in which all the $\Pi_i$ commute). A priori, it is not at all clear that
Moser's proof extends even to the commutative quantum case, for the reasons
discussed above. Nonetheless, by extending the proof techniques of
\citet{Moser} and \citet{MoserTardos} in a more subtle way, we \emph{are} able
to prove a constructive version of the commutative QLLL. Moreover, the quantum
algorithm involved in \cref{thm:constructive_QLLL} is just the natural quantum
generalisation of~\citet{MoserTardos}, and almost the simplest imaginable (see
\cref{alg:quantum}). It also coincides with the natural dissipative state
preparation algorithm studied in \citet{VWC09}.

Whilst the algorithm is straightforward, its analysis is not. We provide
\emph{two} different proofs of the constructive commutative QLLL, using two
very different approaches. The first proof, described in the main text,
generalises the probabilistic approach of~\citet{MoserTardos}. The key step in
the proof is the replacement of the classical coupling argument used
by~\citet{MoserTardos}, with a \emph{quantum coupling argument}, which uses a
coupling by entanglement. To our knowledge, this is the first example of a
quantum coupling argument, used as a proof technique to establish convergence
of a quantum stochastic process, which may be of independent interest. Even
though the entanglement is not used directly as a resource by the algorithm,
it is the unique properties of entanglement that allow us to prove via the
quantum coupling that the algorithm can find the correct global basis even
though it has access only to local measurements.

Coupling arguments are long established as a very powerful proof technique in
probability theory, often providing the simplest or even the only proof of
many results~\citep{Lindvall,Thorisson}. Our quantum generalisation of the
coupling method is no exception, providing an elegant and concise proof of the
constructive commutative QLLL of \cref{thm:constructive_QLLL}. But coupling
arguments often seem a little like ``black magic''. In our second proof,
desribed in \cref{sec:combinatorial_proof}, we replace the coupling argument
with a combinatorial proof. Whilst (as is often the case) the combinatorial
argument is significantly more involved than the coupling argument, it is
nonetheless more explicit. It demonstrates how the algorithm can be understood
as a quantum stochastic process produced by iterated measurement, and leads to
interesting new results on such iterated measurement processes.

To generalise these results to the general non-commutative QLLL requires that
we address the second challenge: non-commutativity of quantum measurement, and
the concomitant measurement-disturbance issue. The difficulty here is that
``satisfied'' measurements now play a significant role, and can sometimes make
things worse instead of better for later measurements. Even the order in which
the ``satisfied'' outcomes occur is now significant. We give a simple
non-commutative counter-example which already violates the crucial bounds that
we prove in the commuting case by both coupling and combinatorial arguments.
This suggests that sharper techniques will be required to address this second
challenge, and extend our results to the general, non-commutative setting.

Nonetheless, although we are not able to give a complete proof of a
constructive QLLL in the non-commutative setting, we \emph{can} prove it if we
assume a technical conjecture. We prove the conjecture in certain simple
cases, and it is also supported in more general cases by numerical evidence
(though the numerics we have done are limited). Furthermore, it is likely that
the conjecture \emph{must} hold if the natural quantum generalisation of
Moser's classical algorithm is to work. If our conjecture is false, either
there is no efficient constructive version of the general QLLL, or an entirely
different approach is needed in the non-commutative setting.

The paper is organised as follows. In \cref{sec:constructive_proof} we briefly
review the classical proof of \citet{MoserTardos}, then prove the constructive
commutative Quantum Lov\'asz Local Lemma using a novel quantum coupling
argument. \Cref{sec:combinatorial_proof} contains an alternative,
combinatorial proof of this result. In \cref{sec:CP_map_convergence} we apply
the results of the previous section to bound the convergence time of certain
classes of CP~maps. In \cref{sec:non-commuting} we generalise the results to
the full non-commuting setting, though we have to assume a technical
conjecture which we are currently unable to prove except in some simple
special cases. In \cref{sec:conclusions} we conclude with an outline of
applications of our results to physics and quantum algorithms, some possible
generalisations in the commutative setting, and a discussion of open problems
and potential directions in the non-commutative setting.

\section{A Constructive Proof of the Commutative
  Quantum Lov\'asz Local Lemma}
\label{sec:constructive_proof}
In order to build a constructive proof of the QLLL, we start by adapting the
elegant argument of \citet{MoserTardos} so that it applies to the commutative
quantum case.
Our first step is to generalise the Moser-Tardos algorithm in the obvious way.
The quantum algorithm acts on an \keyword{assignment register} of $n$ qudits,
holding a state (or \keyword{assignment}) $\ket{\alpha}$. (We abuse notation
slightly by using $\ket{\alpha}$ to denote both the register itself and the
state of the register, even though that state need not be pure.) If the
measurement $\{\Pi_i,\id-\Pi_i\}$ is performed on an assignment and yields the
outcome $\Pi_i$, then we say that projector $\Pi_i$ was \keyword{violated}. We
denote the set of qudits on which a projector $\Pi_i$ acts non-trivially by
$[i]$. In another abuse of notation, the $i$\textsuperscript{th} qudit of the
assignment register will be denoted $\ket{\alpha_i}$, even though the reduced
state of that qudit need not be pure. The algorithm will also require a
uniform random source
$P=(\{\ket{0},\ket{1},\dots,\ket{d-1}\},\prob(\ket{i})=1/d)$ that emits qudits
in a random computational basis state. We use the notation
$\ket{\alpha_i}\gets P$ to indicate replacing the contents of the quantum
register $\ket{\alpha_i}$ with a fresh sample from the random source $P$.

\begin{algorithm}[!hbtp]
  \caption{Quantum Solver}\label{alg:quantum}
  \begin{algorithmic}[1]
    \Function{\textnormal{\texttt{solve\_qlll}}}{$(\Pi_1,\Pi_2,\dots,\Pi_m)$}
      \Forall{$i$}
        \State $\ket{\alpha_i} \gets P$
      \EndForall
      \Loop
        \State pick a projector $\Pi_i$ uniformly at random;
        \State measure the projector $\Pi_i$;
         \If{the projector was violated}
          \State append $\Pi_i$ to the execution log;
          \label{alg:quantum:log}
          \Forall{$j\in [i]$}
            \State $\ket{\alpha_j} \gets P$
            \label{alg:resample}
          \EndForall
        \EndIf
      \EndLoop
    \EndFunction
  \end{algorithmic}
\end{algorithm}

Our task is, firstly, to show that \cref{alg:quantum} converges to a state
satisfying the requirements of \cref{thm:constructive_QLLL}, thereby proving
that such a state exists, and, secondly, to prove that this convergence occurs
in polynomial time.

Before adapting the \citet{MoserTardos} proof to the quantum case, let us
review the high-level structure of their argument. Note that, as they proceed,
\cref{alg:Moser-Tardos,alg:quantum} keep a log
(\cref{alg:Moser-Tardos:log,alg:quantum:log}, respectively) of which events or
projectors were violated. For each entry in the log, \citet{MoserTardos}
imagine constructing a ``witness tree'' from all the log data up to that
point. (We will describe the precise procedure for constructing the trees
later, but the details are not important for the overall structure of the
argument.) Each violation adds one entry to the log, and each log entry is
associated with a different witness tree, so the number of distinct witness
trees that can be constructed from---or \keyword{occur in}---the log gives the
number of violations seen by the algorithm.

Therefore, in order to compute the expected number of violations seen by the
algorithm, we would like to compute the probability that a particular witness
tree can occur in the algorithm's log. Of course, it is not at all clear how to
compute this probability directly. So \citet{MoserTardos} use a coupling
argument to relate the behaviour of the random process generated by the
algorithm to a much simpler random process, whose behaviour is more easily
analysed. They call this simple random process ``\taucheck'', because it
executes a random process on a witness tree $\tau$, and outputs either
``pass'' or ``fail'' at the end. (Again, we will describe the \taucheck{}
procedure later, but the details are not important for now.)

Viewed separately, we have two different random processes: the algorithm,
whose behaviour is difficult to analyse, and \taucheck, whose behaviour is
more easily computed. The coupling is established by proving that, if these
two random processes are fed the \emph{same} random source, then their
behaviour becomes correlated in such a way that if a witness tree $\tau$
occurs in the algorithm's log, then \taucheck{} will always output ``pass'',
i.e.
\begin{equation}\label{eq:passes|occurs}
  \Pr(\text{\taucheck{} passes} \mid \text{$\tau$ occurs in the log}) = 1.
\end{equation}
A simple application of Bayes' theorem then implies that
\begin{subequations}\label{eq:Bayes}
  \begin{align}
    \Pr(\text{$\tau$ occurs in the log})
    &=\frac{\Pr(\text{$\tau$ occurs in the log}
            \mid \text{\taucheck{} passes})}
           {\Pr(\text{\taucheck{} passes}
            \mid \text{$\tau$ occurs in the log})}
      \cdot\Pr(\text{\taucheck{} passes})\\
    &\leq \Pr(\text{\taucheck{} passes}),
\end{align}
\end{subequations}
(using \cref{eq:passes|occurs} and the fact that the numerator, being a
probability, is upper-bounded by 1). Thus we can bound the probability of a
particular witness tree occurring in the algorithm's log (which is the same as
the probability of a particular violation occurring) by the probability that
\taucheck{} passes. In order to bound the latter, \citet{MoserTardos} relate
the \taucheck{} procedure to a Galton-Watson branching process,\footnote{The
  Galton-Watson process is a standard random process in probability theory,
  originally derived in order to model extinction of aristocratic surnames in
  family trees, something of great concern to the Victorians in
  19\textsuperscript{th} century England.} which can be analysed
straightforwardly.


\subsection{Witness Trees}\label{sec:witness-trees}
Imagine constructing a tree of violated projectors from the data in the log,
at any point during the algorithm, exactly as in
\citet[Section~2]{MoserTardos}. Denote projectors appearing in the log by
$\Pi^{(i)}$, for $i=0\dots l$ (where $l$ is the number of entries in the log).
To construct a tree starting from entry $n$ in the log, we work backwards from
projector $\Pi^{(n)}$. First, create a root node labelled by $\Pi^{(n)}$. For
each preceding projector $\Pi^{(i<n)}$ in the log, create a new vertex,
labelling it by $\Pi^{(i)}$, and attach it below an existing vertex labelled
by a projector which intersects with $\Pi^{(i)}$. If there is more than one
such vertex, attach it below the one furthest from the root (breaking ties
arbitrarily). If there is no such vertex, simply discard the new vertex.

Note two key observations about the trees constructed by this procedure.
Firstly, projectors located at the same level in the tree can never intersect.
Secondly, if a vertex labelled by $\Pi^{(i)}$ is located at a higher level of
the tree (closer to the root) than a vertex labelled by $\Pi^{(j)}$, and if
projectors $\Pi^{(i)}$ and $\Pi^{(j)}$ intersect, then $\Pi^{(i)}$ must have
occurred later\footnote{Remember that the tree is constructed by reading the
  log \emph{backwards}.} in the log than $\Pi^{(j)}$, i.e.\ $i>j$. Together,
these properties imply that the tree encodes an ordering of the violated
projectors with respect to the partial order defined by projector
intersections.

A tree is called \keyword{proper} if distinct children of the same vertex
always receive distinct labels, and we say that a tree $\tau$ \keyword{occurs}
in the log if $\tau$ can be generated by constructing a tree starting from
some entry of the log. We now show that trees occurring in the log of
\cref{alg:quantum} still satisfy Lemma~2.1 from \citet{MoserTardos},
reinterpreting it for the quantum case in the obvious way:
\begin{lemma}\label{lem:tau-check}
  Let $\tau$ be a fixed tree and $L$ the (random) log produced by
  \cref{alg:quantum}.
  \begin{enumerate}
  \item If $\tau$ occurs in $L$, then $\tau$ is
    proper. \label[part]{part:proper}
  \item The probability that $\tau$ occurs in $L$ is at most $\prod_{v\in
      \tau}\prob[\Pi(v)]$, \label[part]{part:prob}
  \end{enumerate}
  where $\Pi(v)$ is the projector labelling vertex $v$, and $\prob[\Pi(v)]$ is
  the probability of measuring $\Pi(v)$ on the maximally mixed state.
\end{lemma}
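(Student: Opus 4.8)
The plan is to imitate the two halves of \citet[Lemma~2.1]{MoserTardos}, keeping their witness-tree combinatorics verbatim wherever it is blind to the quantum mechanics and replacing their classical coupling by a quantum one. Part~(i) is exactly the classical argument. Suppose $\tau$ is produced by running the construction of \cref{sec:witness-trees} backwards from entry $n$ of $L$, and suppose for contradiction that a vertex $u$ has two children $w_1,w_2$ with the same label $\Pi_j$. These come from two distinct log entries, one earlier than the other; the vertex from the earlier entry --- say $w_2$ --- is created later in the backward sweep, at which point $w_1$ is already in the tree. Now $w_1$ sits strictly below $u$, and both $u$'s label and $w_1$'s label intersect $\Pi_j$ (the latter because $\Pi_j$ intersects itself, the former because $w_1$ was attached under $u$). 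The rule ``attach the new vertex below the vertex furthest from the root among those whose label intersects it'' therefore forces $w_2$ below $w_1$ (or deeper), never as a sibling of $w_1$ --- a contradiction. Hence $\tau$ is proper.

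\textbf{Part~(ii): the easy half.} For Part~(ii) I would define the obvious quantum \taucheck{} procedure: visit the vertices of $\tau$ in reverse breadth-first order; at a vertex $v$ labelled $\Pi_i$, first redraw every qudit of $[i]$ from the source $P$, then perform $\{\Pi_i,\id-\Pi_i\}$ and declare \emph{fail} on the outcome $\id-\Pi_i$; if no vertex fails, declare \emph{pass}. The identity $\Pr(\text{\taucheck{} passes})=\prod_{v\in\tau}\prob[\Pi(v)]$ is then immediate, and does not even use the visiting order: just before the measurement at $v$, the qudits $[i]$ have just been re-prepared from $P$, so their joint state is the maximally mixed state, independently of everything done at earlier vertices; hence the conditional probability of obtaining $\Pi_i$ is exactly $\prob[\Pi(v)]$, and multiplying these conditional probabilities along the run gives the claim.

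\textbf{Part~(ii): the coupling.} The hard half is the coupling identity \cref{eq:passes|occurs}, from which $\Pr(\tau\text{ occurs in }L)\le\Pr(\text{\taucheck{} passes})$ follows by Bayes' theorem exactly as in \cref{eq:Bayes}, completing the lemma. One cannot simply feed \cref{alg:quantum} and \taucheck{} ``the same bits'' as \citet{MoserTardos} do, because the algorithm's assignment register is perturbed by the intervening projector measurements, which are generally not diagonal in the computational basis. Instead I would purify the source: replace each ``$\ket{\alpha_j}\gets P$'' by swapping into $\alpha_j$ one half of a fresh maximally entangled pair $d^{-1/2}\sum_k\ket{k}\ket{k}$, keeping the other half as a ``source qudit'' the algorithm never revisits; since a freshly drawn register qudit is maximally mixed, the law of the log is unchanged. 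The coupling then hands \taucheck{}, when it reaches a vertex $v$ labelled $\Pi_i$, precisely the source qudits of the resampling windows of $[i]$ that contained the step of the run at which $v$'s violation occurred; that these windows and the vertices of $\tau$ touching a qudit are in bijection, compatibly with the reverse-BFS order, is exactly the combinatorial property of witness trees used in \citet{MoserTardos} and is untouched by quantumness. It then remains to show that, whenever $\tau$ occurs, this routed collection of source qudits is already (conjugate to) a state inside the image of $\Pi_i$, so that \taucheck's measurement returns ``violated'' with probability one.

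\textbf{Main obstacle.} This last step is where I expect the real work, and where commutativity must enter. Between the last resampling of a qudit $j\in[i]$ and the step that violates $v$, the only measurements touching $\alpha_j$ are of projectors intersecting $\Pi_i$, every one of which commutes with $\Pi_i$ by hypothesis and every one of whose outcomes was ``satisfied'' (else $j$ would have been resampled again). Working in the common eigenbasis of the commuting projectors and transporting the measurements through the entanglement, these intervening measurements then merely shrink the support of the state within the eigenspaces of $\Pi_i$ without rotating it, which is precisely what lets \taucheck{}, on the routed source qudits, recover the ``violated'' outcome of the run. Dropping commutativity breaks this --- the intervening ``satisfied'' measurements can then rotate weight into the image of $\Pi_i$ and inflate $\Pr(\tau\text{ occurs})$ past the bound --- and indeed \cref{sec:non-commuting} exhibits an explicit counterexample. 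By contrast, the bookkeeping of which resampling is matched to which vertex, and in which order \taucheck{} must proceed, is inherited directly from \citet{MoserTardos}.
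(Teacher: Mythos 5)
Your proposal is correct and follows essentially the same route as the paper: part~(i) is the unchanged classical argument, and part~(ii) is proved by purifying the random source into maximally entangled pairs shared with a quantum \taucheck{}, reducing the routing/bookkeeping to \citet{MoserTardos}, and using commutativity to show $\Pr(\text{\taucheck{} passes}\mid\tau\text{ occurs})=1$. The only step you leave as a sketch --- that the intervening ``satisfied'' measurements cannot spoil the \taucheck{} outcome --- is exactly the paper's short computation: write the post-violation state as $[\Pi(v)P_{t-1}\cdots P_1\ox\id]\ket{\Omega}$, use $A\ox\id\ket{\omega}=\id\ox A^T\ket{\omega}$ to transport everything to the \taucheck{} side, and commute $\id-\Pi(v)^T$ through the $P_j^T$ to annihilate against $\Pi(v)^T$.
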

\begin{proof}
  \Cref{part:proper} follows immediately from the above observations about the
  way the trees are constructed. (Indeed, there is nothing quantum about
  \cref{part:proper}, so the statement and proof are identical to the
  corresponding part of Lemma~2.1 in \citet{MoserTardos}.)

  To prove \cref{part:prob}, we introduce a new technique, a \emph{quantum
    coupling argument}, to establish a coupling between \cref{alg:quantum} and
  \cref{alg:tau-check}. \Cref{alg:tau-check} is a straightforward quantum
  version of the classical $\tau$-check procedure. In the classical case the
  two processes are coupled by a joint random source. To establish the
  commutative quantum case we replace the joint random source of qudits of
  \cref{alg:quantum} and \cref{alg:tau-check} by a source of maximally
  entangled qudit pairs. One half of the pair is provided to the each of the
  two coupled quantum processes. Since tracing out one half of each pair
  results in the maximally mixed state on the other half, the coupling is
  undetectable by the process acting on the remaining subsystem: the marginal
  distributions of the two coupled random processes are identical to the
  original, separate random processes \cref{alg:quantum} and
  \cref{alg:tau-check}. Nevertheless, entanglement across the subsystem
  boundary can be exploited to bound joint event probabilities, just as
  correlations are exploited by a classical coupling argument. More formally,
  we proceed by proving the following lemmas, which together imply
  \cref{part:prob}.

\begin{algorithm}[!hbtp]
  \caption{Quantum $\tau$-check}\label{alg:tau-check}
  \begin{algorithmic}[1]
    \Function{\textnormal{\taucheck}}{$\tau$}
      \State sort nodes of $\tau$ in reverse breadth-first order;
      \Forall{$v\in\tau$}
        \Forall{$x \in [i]$}
          \State $\ket{\beta_i} \gets Q_i$
        \EndForall
        \State measure $\Pi_i^T$ on $\ket{\beta_{[i]}}$;
        \If{the measurement was satisfied (gave outcome $\id-\Pi_i$)}
          \State \textbf{return} ``failed''
        \EndIf
      \EndForall
      \State \textbf{return} ``passed''
    \EndFunction
  \end{algorithmic}
\end{algorithm}

\begin{lemma}\label{lem:pre-tau-check}
  $\Pr(\text{\taucheck{} passes}) = \prod_{v\in\tau}\Pr[\Pi(v)]$, where
  $\Pr[\Pi(v)]$ is the probability of measuring $\Pi(v)$ on the maximally
  mixed state.
\end{lemma}
\begin{proof}
  This is trivial, since \taucheck{} independently measures each projector
  $\Pi(v)$ on the maximally mixed state drawn from the uniform random sources
  $Q_i$. (Indeed, this is true independent of the order in which \taucheck{}
  visits the vertices.)
\end{proof}

The next lemma establishes the quantum generalisation of
\cref{eq:passes|occurs} in the classical argument.
\begin{lemma}\label{lem:coupling}
  Couple the random sources $P_i$ of \cref{alg:quantum} and $Q_i$ of
  \cref{alg:tau-check}, so that the $n$th qudit from $P_i$ is maximally
  entangled with the $n$th qudit from $Q_i$. Then, if $\{\Pi_i\}$ commute,
  $\Pr(\text{\taucheck{} passes} | \tau\text{ occurs}) = 1$.
\end{lemma}
\begin{proof}
  First, notice that if $\tau$ occurs in \cref{alg:quantum}'s execution log,
  then each label $\Pi(v)$ in $\tau$ corresponds to a unique violation
  $\Pi(v)$ in the log. Furthermore, the partial ordering with respect to
  projector intersections of that section of the log that generates $\tau$, is
  precisely the partial order defined by $\tau$ itself. Therefore, by visiting
  the vertices in reverse-breadth-first order, \taucheck{} will measure the
  projectors $\Pi(v)^T$ in the same order as \cref{alg:quantum} measured
  $\Pi(v)$, up to reorderings of projectors that act on disjoint sets of
  qudits. Thus, when \taucheck{} draws random qudits on which order to measure
  $\Pi(v)^T$, it draws exactly the qudits from $Q_i$ that started off
  maximally entangled with those on which \cref{alg:quantum} measured
  $\Pi(v)$.\footnote{So far, this is identical to the argument in
    \cite[Lemma~2.1]{MoserTardos}. However, between drawing qudits from $P_i$
    and measuring $\Pi(v)$ on them, \cref{alg:quantum} may have performed
    arbitrarily many ``satisfied'' measurements.}

  It is helpful to picture each random source $P_i,Q_i$ as a semi-infinite
  stack of qudits (each being one half of a maximally entangled pair between
  $P_i$ and $Q_i$). When a fresh qudit is drawn, the random source takes it
  from the bottom of it stack. The $n$th qudit in the $P_i$ stack is then in
  one-to-one correspondence with the $n$th qudit in the $Q_i$ stack, with which
  it is maximally entangled. Also, when \cref{alg:quantum} replaces the qudits
  on which it has just measured a violation $\Pi_i$ with fresh ones from the
  $P_i$ stacks (\cref{alg:resample}), we can equivalently think of this as
  appending the fresh qudits to the register $\ket{\alpha}$, and redefining
  all projectors that act on qudits $[i]$ to now act on these new qudits
  instead. For a given state of register $\ket{\alpha}$, let $\ket{\beta}$
  denote the corresponding qudits from the $Q_i$ stacks. (As with
  $\ket{\alpha}$ this is a slight abuse of notation, as the reduced state of
  the qudits in $\ket{\beta}$ will not necessarily be pure.)

  In this picture, let $P_t$ denote the projector on the appropriate qudits on
  $\ket{\alpha}$ corresponding to the $t$th measurement outcome of
  \cref{alg:quantum}, which could be either a violation $\Pi_i$ or a satisfied
  measurement $\id-\Pi_i$. (Since we are now appending qudits from $P_i$ to
  the register $\ket{\alpha}$ whenever \cref{alg:quantum} measures a
  violation, and simply redefining which qudits of $\ket{\alpha}$ future
  measurements $\Pi_i$ act on instead of tracing out the old ones, the set of
  qudits on which the projector $P_t$ acts depends on the violations $P_{i<t}$
  that occurred prior to $P_t$.)

  Let $\ket{\Omega} = \ox\ket{\omega} = (\sum_i\ket{i}\ket{i}/d)^{\ox m}$ denote
  the maximally entangled state between $\ket{\alpha}$ and $\ket{\beta}$.
  Assume \cref{alg:quantum} measures a violation $P_t=\Pi(v)$ (acting on the
  appropriate qudits of the growing $\ket{\alpha}$ register) in the $t$th
  measurement. The (unnormalised) state of $\ket{\alpha,\beta}$ after this
  measurement is then given by
  \begin{equation}
    \ket{\alpha,\beta} =
    \bigl[\Pi(v)P_{t-1}\cdots P_2P_1\ox\id\bigr]\ket{\Omega}
  \end{equation}

  What is probability that \taucheck{} ``fails'' when it performs the
  corresponding $\Pi(v)^T$ measurement, given that $\tau$ occurs in the log?
  We know from above that when \taucheck{} draws qudits from $Q_{[v]}$ to
  measure $\Pi(v)^T$, it obtains precisely those qudits from $\ket{\beta}$
  that correspond to the qudits in $\ket{\alpha}$ on which $\Pi(v)$ acts. Now,
  \taucheck{} outputs ``fail'' if it measures $\id-\Pi(v)^T$, so the
  probability of this occurring is given by
  \begin{subequations}
  \begin{align}
    \tr&\left[\id\ox(\id-\Pi(v)^T) \proj{\alpha,\beta}\right]\\
    &=\BraKet{\alpha,\beta}{\id\ox(\id-\Pi(v)^T)}{\alpha,\beta}\\
    &\propto\BraKet{\Omega}{
        [P_1P_2\cdots P_{t-1}\Pi(v)P_{t-1}\cdots P_2P_1]
        \ox(\id-\Pi(v)^T)}
      {\Omega}\label{eq:propto}\\
    &=\BraKet{\Omega}{
        [P_1P_2\cdots P_{t-1}]
        \ox[(\id-\Pi(v)^T)P_1^TP_2^T\cdots P_{t-1}^T\Pi(v)^T]}
      {\Omega}\label{eq:entangled}\\
    &=\BraKet{\Omega}{
        [P_1P_2\cdots P_{t-1}]
        \ox[P_1P_2^T\cdots P_{t-1}(\id-\Pi(v)^T)\Pi(v)^T]}
      {\Omega}\label{eq:commute}\\
    &= 0,
  \end{align}
  \end{subequations}
  where \cref{eq:propto} is only a proportionality because we have not
  normalised the state, and in \cref{eq:entangled} we have used the property
  of the maximally entangled state $A\ox\id\ket{\omega} = \id\ox
  A^T\ket{\omega}$, which holds for any operator $A$. In \cref{eq:commute}, we
  have used the fact that all the $\{P_i\}$ commute with $\Pi(v)$, which
  follows immediately from the fact that $\{\Pi_i\}$ commute, by assumption.

  This holds for any violation $\Pi(v)$, and any sequence of measurement
  outcomes $P_{i<t}$ prior to $P_t = \Pi(v)$. Therefore, given that $\tau$
  occurs in \cref{alg:quantum}'s log, the probability of \taucheck{}
  ``failing'' on any of its measurements is zero, which proves the lemma.
\end{proof}

If we trace out \taucheck{}, then the states drawn from $P_i$ are maximally
mixed, as required in \cref{alg:quantum}. Similarly, if we trace out the
algorithm, then the states drawn from $Q_i$ by \taucheck{} are maximally mixed
as required in \cref{alg:tau-check}, so \cref{lem:pre-tau-check} still holds when
the random sources are entangled. Thus, as in the analogous classical proof of
\citet{MoserTardos}, \cref{lem:pre-tau-check,lem:coupling} together prove
\cref{part:prob}.
\end{proof}

\subsection{Expected number of violations}\label{sec:expected_violations}
Having shown in \cref{lem:tau-check} that a quantum version of Lemma~2.1 from
\citet{MoserTardos} holds, the remainder of the argument in
\citet{MoserTardos} goes through unchanged. We repeat the argument here for
completeness.

In order to bound the probability of a tree $\tau$ occurring in the log given
in \cref{part:prob} of \cref{lem:tau-check}, \citet{MoserTardos} relate it to
the probability of the following Galton-Watson branching process generating a
proper witness tree $\tau_a$ whose root vertex is labelled by some fixed
projector $\Pi_a$.
In the first iteration, the process produces a root vertex labelled by
$\Pi_a$. In subsequent iterations, the process considers each vertex added in
the previous iteration independently. For vertex $v$ labelled by $\Pi_i$, it
considers each projector $\Pi_j$ in the set $\Gamma^+(\Pi_i)$ of projectors
that intersect with $\Pi_i$ (including $\Pi_i$ itself). Independently for each
such $\Pi_j\in\Gamma^+(\Pi_i)$, it chooses at random whether to add a child
vertex labelled by $\Pi_j$ below $v$ with probability $x_j$, or whether to
skip $\Pi_j$ with probability $1-x_j$. Thus if all the $\Pi_j$ are skipped,
the branch dies out at $v$. The process continues until all branches die out.
(Depending on the probabilities, the process could of course continue
indefinitely.)

Lemma~3.1 of \citet{MoserTardos} computes the probability that this process
produces a given tree $\tau_a$ using standard techniques for Galton-Watson
branching processes. We restate the lemma here for convenience. For notational
simplicity, let
\begin{equation}
  x'_i \coloneqq x_i\;\cdot\;\prod_{\mathclap{\Pi_j\in\Gamma(\Pi_i)}}(1-x_j).
\end{equation}
%
\begin{lemma}[\citet{MoserTardos}]\label{lem:Galton-Watson}
  Let $\tau_a$ be a given proper witness tree whose root vertex is
  labelled by $\Pi_a$. The probability that the Galton-Watson process
  described above produces the tree $\tau_a$ is
  \begin{equation}
    \Pr(\tau_a) = \frac{1-x_a}{x_a}\prod_{\Pi_i\in\tau_a}x'_i,
  \end{equation}
  (where the product is over all vertex labels in $\tau_a$, including
  repetitions).
\end{lemma}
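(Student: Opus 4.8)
The plan is to compute the probability that the branching process produces exactly the tree $\tau_a$ by multiplying together, over every vertex of $\tau_a$, the probability of the "correct" choices at that vertex. Since the process treats each vertex independently once it has been created, the event "$\tau_a$ is produced" factorizes: at each vertex $v$ labelled by $\Pi_i$, the process must (a) add exactly the children of $v$ that appear in $\tau_a$, and (b) skip every other projector in $\Gamma^+(\Pi_i)$. So $\Pr(\tau_a) = \prod_{v\in\tau_a} q_v$, where $q_v$ is the probability of making precisely the right set of decisions at $v$.

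First I would fix a vertex $v$ with label $\Pi_i$ and let $C_v\subseteq\Gamma^+(\Pi_i)$ be the (multi)set of labels of its children in $\tau_a$. Because $\tau_a$ is proper, the labels of the children of $v$ are distinct, so $C_v$ is genuinely a subset of $\Gamma^+(\Pi_i)$, and the process's independent coin flips produce exactly these children with probability $\prod_{\Pi_j\in C_v} x_j \cdot \prod_{\Pi_j\in\Gamma^+(\Pi_i)\setminus C_v}(1-x_j)$. Rewriting, this equals
\begin{equation}
  q_v \;=\; \prod_{\Pi_j\in\Gamma^+(\Pi_i)}(1-x_j)\;\cdot\;\prod_{\Pi_j\in C_v}\frac{x_j}{1-x_j}
       \;=\; (1-x_i)\prod_{\Pi_j\in\Gamma(\Pi_i)}(1-x_j)\cdot\prod_{\Pi_j\in C_v}\frac{x_j}{1-x_j},
\end{equation}
where I have split off the $\Pi_i$ factor and recognized $\prod_{\Pi_j\in\Gamma(\Pi_i)}(1-x_j) = x'_i/x_i$, so that $q_v = \frac{(1-x_i)}{x_i}\,x'_i\,\prod_{\Pi_j\in C_v}\frac{x_j}{1-x_j}$.

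Next I would take the product over all $v\in\tau_a$. The factor $\prod_v \frac{x_j}{1-x_j}$ coming from the children needs to be reindexed: every non-root vertex of $\tau_a$ is the child of exactly one other vertex, and its label contributes one factor $\frac{x_\ell}{1-x_\ell}$ where $\ell$ is that label. Hence $\prod_{v\in\tau_a}\prod_{\Pi_j\in C_v}\frac{x_j}{1-x_j} = \prod_{v\in\tau_a,\ v\neq\text{root}}\frac{x_{\ell(v)}}{1-x_{\ell(v)}}$, i.e.\ a product over all vertices of $\tau_a$ except the root. Combining with the $\prod_v \frac{(1-x_i)}{x_i}x'_i$ factor, the $\frac{1-x_i}{x_i}$ from each non-root vertex cancels the $\frac{x_i}{1-x_i}$ just obtained, leaving only the root's contribution $\frac{1-x_a}{x_a}$ together with $\prod_{\Pi_i\in\tau_a}x'_i$. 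This yields $\Pr(\tau_a) = \frac{1-x_a}{x_a}\prod_{\Pi_i\in\tau_a}x'_i$, as claimed.

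The only real subtlety — and the step I would be most careful about — is the bookkeeping in the reindexing: making sure the telescoping of the $\frac{x_i}{1-x_i}$ and $\frac{1-x_i}{1-x_i}$ factors is matched up correctly between "$v$ as a vertex" and "$v$ as a child of its parent", and that repeated labels in $\tau_a$ are counted with the right multiplicity throughout (the product in the statement is explicitly over vertex labels with repetitions). Everything else is a routine factorization of independent events in a Galton–Watson process, and since this is exactly Lemma~3.1 of \citet{MoserTardos} with no quantum input whatsoever, the argument is purely combinatorial.
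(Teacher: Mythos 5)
Your proof is correct: the paper itself gives no proof of this lemma (it simply restates Lemma~3.1 of \citet{MoserTardos}), and your factorisation over vertices, followed by the telescoping cancellation of the $\frac{x_i}{1-x_i}$ factors between each non-root vertex and its appearance as a child, is precisely the standard argument from that reference. The one bookkeeping point you flag — that properness guarantees the children of $v$ form a genuine subset of $\Gamma^+(\Pi_i)$, so the per-vertex event is well-defined and the root alone escapes cancellation, leaving $\frac{1-x_a}{x_a}$ — is handled correctly.
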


Using this, we can bound the expected number of violations seen by
\cref{alg:quantum}. Note that this bound is \emph{independent} of the number
of iterations in \cref{alg:quantum}. The expected number of violations is
constant even if we run \cref{alg:quantum} forever.
\begin{theorem}\label{thm:expected_violations}
  For any set of mutually-commuting $\{\Pi_i\}$ satisfying the Lov\'asz
  conditions of \cref{def:Lovasz_conditions}, the expected number of
  violations seen by \cref{alg:quantum} is bounded by
  \begin{equation}
    \expectation(\text{\textup{total number of violations}})
    \leq \sum_{i=1}^m \frac{x_i}{1-x_i}.
  \end{equation}
\end{theorem}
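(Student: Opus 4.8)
The plan is to run the \citet{MoserTardos} argument essentially verbatim, since the only genuinely quantum ingredient---the bound on the probability that a fixed witness tree occurs in the execution log---has already been supplied by \cref{lem:tau-check}. Fix a run of \cref{alg:quantum} and let $L=(\Pi^{(1)},\dots,\Pi^{(l)})$ be its (random) execution log; for each entry $t$ let $\tau(t)$ be the witness tree built from $L$ starting at entry $t$ as in \cref{sec:witness-trees}. Write $N$ for the total number of violations and $N_a$ for the number of log entries labelled $\Pi_a$, so that $N=\sum_{a=1}^{m}N_a$, and it suffices to bound each $\expectation(N_a)$ by $x_a/(1-x_a)$.

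The first step is the combinatorial bookkeeping of \citet{MoserTardos}, which transfers unchanged because the log is just a classical string of projector labels and only violations are logged. I would check that $t\mapsto\tau(t)$ is injective on the set of entries with a fixed root label $\Pi_a$: since $[a]\cap[a]\neq\emptyset$, every log entry labelled $\Pi_a$ intersects the root and hence appears as a vertex of $\tau(t)$ whenever $\Pi^{(t)}=\Pi_a$; thus the number of $\Pi_a$-labelled vertices of $\tau(t)$ equals the number of entries with index $\le t$ and label $\Pi_a$, which strictly increases with $t$ among such entries, so distinct entries give distinct (and, by \cref{part:proper} of \cref{lem:tau-check}, proper) witness trees. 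Hence $N_a$ is exactly the number of proper witness trees rooted at $\Pi_a$ that occur in $L$, and therefore $\expectation(N_a)=\sum_{\tau_a}\Pr(\tau_a\text{ occurs in }L)$, the sum ranging over all proper witness trees with root label $\Pi_a$.

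Next I would combine the tree bound with the Lov\'asz conditions and the Galton--Watson estimate. By \cref{part:prob} of \cref{lem:tau-check}, $\Pr(\tau_a\text{ occurs in }L)\le\prod_{v\in\tau_a}\Pr[\Pi(v)]$, and $\Pr[\Pi(v)]$ is precisely the relative dimension $R(\Pi(v))$ of $\Pi(v)$, so \cref{def:Lovasz_conditions} gives $\Pr[\Pi_i]=R(\Pi_i)\le x_i'=x_i\prod_{\Pi_j\in\Gamma(\Pi_i)}(1-x_j)$ for every $i$. Therefore $\prod_{v\in\tau_a}\Pr[\Pi(v)]\le\prod_{\Pi_i\in\tau_a}x_i'$ (product over vertex labels, with repetitions), and by \cref{lem:Galton-Watson} this equals $\tfrac{x_a}{1-x_a}\Pr(\tau_a)$, where $\Pr(\tau_a)$ is the probability that the Galton--Watson process rooted at $\Pi_a$ produces $\tau_a$. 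Summing,
\[
  \expectation(N_a)\;\le\;\frac{x_a}{1-x_a}\sum_{\tau_a}\Pr(\tau_a)\;\le\;\frac{x_a}{1-x_a},
\]
where the last inequality holds because the branching process produces distinct trees on disjoint events (and may fail to terminate), so these probabilities sum to at most $1$. Summing over $a$ then yields $\expectation(N)=\sum_{a=1}^{m}\expectation(N_a)\le\sum_{a=1}^{m}\tfrac{x_a}{1-x_a}$, as claimed.

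I do not expect a genuine obstacle in this theorem: all of the quantum difficulty is already absorbed into \cref{lem:coupling}, and what remains is bookkeeping lifted directly from \citet{MoserTardos}. The two points worth verifying carefully are that $\Pr[\Pi(v)]$ really is the relative dimension entering \cref{def:Lovasz_conditions} (it is, being the probability of the $\Pi(v)$ outcome on the maximally mixed state), and that the witness-tree counting is unaffected by the ``satisfied'' measurements interleaved in \cref{alg:quantum}---which it is, since those measurements are never written to the log.
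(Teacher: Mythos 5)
Your proposal is correct and follows essentially the same route as the paper's own proof: decompose the violation count by root label, identify $N_a$ with the number of distinct proper witness trees rooted at $\Pi_a$ occurring in the log, apply \cref{part:prob} of \cref{lem:tau-check} together with the Lov\'asz conditions, and sum the Galton--Watson probabilities from \cref{lem:Galton-Watson} to at most~$1$. The only difference is that you spell out the injectivity of $t\mapsto\tau(t)$ explicitly, which the paper treats as an observation in its overview of the Moser--Tardos argument.
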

\begin{proof}
  Let $N_a$ be the number of times a given projector $\Pi_a$ appears in
  \cref{alg:quantum}'s log, which is the same as the number of times $\Pi_a$
  is measured to be violated. Let $\mathcal{T}_a$ denote all proper witness
  trees whose root vertex is labelled by $\Pi_a$. Then, from
  \cref{lem:tau-check}, we have
  \begin{equation}
    \expectation(N_a)
    = \sum_{\tau\in\mathcal{T}_a}\Pr(\text{$\tau$ appears in the log})
    \leq \sum_{\tau\in\mathcal{T}_a}\prod_{v\in\tau}\Pr[\Pi(v)].
  \end{equation}
  Now, the probability $\Pr[\Pi_i]$ of a projector being violated on a random
  state is just given by its relative dimension $\Pr(\Pi_i) = R(\Pi_i)$. Since
  by assumption the projectors satisfy the Lov\'asz conditions
  (\cref{def:Lovasz_conditions}), the relative dimension satisfies $R(\Pi_i)
  \leq x'_i$. Thus
  \begin{equation}
    \expectation(N_a) \leq \sum_{\tau\in\mathcal{T}_a}\prod_{v\in\tau}\Pr[\Pi(v)]
    \leq \sum_{\tau\in\mathcal{T}_a}\prod_{\Pi_i\in\tau} R(\Pi_i)
    \leq \sum_{\tau\in\mathcal{T}_a}\prod_{\Pi_i\in\tau} x'_i
    \leq \sum_{\tau\in\mathcal{T}_a} \frac{x_a}{1-x_a}\Pr(\tau),
  \end{equation}
  the final inequality following from \cref{lem:Galton-Watson}, $\Pr(\tau_a)$
  being the probability of the Galton-Watson process of
  \cref{lem:Galton-Watson} generating tree $\tau_a$. Since that process either
  produces a proper witness tree in $\mathcal{T}_a$ with root vertex labelled
  by $\Pi_a$, or continues indefinitely, we have
  $\sum_{\tau\in\mathcal{T}_a}\Pr(\tau) \leq 1$, thus
  \begin{equation}
    \expectation(N_a) \leq \frac{x_a}{1-x_a}\sum_{\tau\in\mathcal{T}_a} \Pr(\tau_a)
    \leq \frac{x_a}{1-x_a},
  \end{equation}
  and the theorem follows from summing over all projectors.
\end{proof}

\subsection{An efficient quantum algorithm}
\label{sec:constructive_QLLL_proof}
We are now in a position to prove \cref{thm:constructive_QLLL}: a constructive
version of the commutative Quantum Lov\'asz Local Lemma. From the bound on the
expected number of violations seen by \cref{alg:quantum} already proven in the
previous section, it is easy to prove the existence part of the commutative
Quantum Lov\'asz Local Lemma (i.e.\ that there exists a state satisfying all
the constraints).

\begin{proof}[of existence part of \cref{thm:constructive_QLLL}]
  The theorem asserts existence of a state $\rho$ such that $\forall i:
  \tr[\Pi_i\rho] = 0$, when $\{\Pi_i\}$ is a set of projectors satisfying the
  Lov\'asz conditions. Assume for contradiction that no such state exists.
  Then we have
  \begin{equation}
    \delta \coloneqq \min_{\rho}\frac{1}{m}\sum_i\tr[\Pi_i\rho] > 0.
  \end{equation}
  But $\frac{1}{m}\sum_i\tr[\Pi\rho]$ is just the probability of a randomly
  chosen projector being violated by state $\rho$, so $\delta$ is a lower
  bound on the probability of seeing a violation in one iteration of
  \cref{alg:quantum}. Therefore, the expected number of violations after
  running the algorithm for $t$ iterations is lower-bounded by $t\delta$. For
  sufficiently large $t$, this leads to a contradiction with
  \cref{thm:expected_violations}.
\end{proof}

However, this argument is insufficient to prove the second part of
\cref{thm:constructive_QLLL}: that there exists an \emph{efficient} algorithm
for approximating this state. To prove this, we construct a modified version
of \cref{alg:quantum}, described in \cref{alg:QLLL_converger}, which
essentially runs \cref{alg:quantum} for a random number of iterations.

\begin{algorithm}[!hbtp]
  \caption{Quantum converger}\label{alg:QLLL_converger}
  \begin{algorithmic}[1]
    \Function{\textnormal{\texttt{converge\_qlll}}}{$(\Pi_1,\Pi_2,\dots,\Pi_m)$}
      \State Pick an integer $0 \leq \tau \leq t$ uniformly at random;
      \State Initialise the assignment register as in \cref{alg:quantum};
      \State $i\gets 0$;
      \Loop{$t$ times}
        \If{$i < \tau$}
	  \State Apply one iteration of \cref{alg:quantum}'s main loop;
	  \State $i\gets i+1$;
	\EndIf
      \EndLoop
      \State \textbf{return} assignment register;
    \EndFunction
  \end{algorithmic}
\end{algorithm}

The following general theorem shows that the time required for
\cref{alg:QLLL_converger} to converge to the desired state is directly related
to the expected number of violations seen by \cref{alg:quantum}. (Note that
this theorem does \emph{not} require commuting projectors.)

\begin{theorem}\label{thm:convergence_time}
  For an arbitrary set of $m$ projectors $\{\Pi_i\}$, let $E$ be the expected
  number of violations seen when \cref{alg:quantum} is run with those
  projectors. If $E$ is finite, then the state $\rho$ produced by running
  \cref{alg:QLLL_converger} for time $t = mE/\varepsilon$ satisfies $\forall i:
  \tr[\Pi_i\rho] \leq \varepsilon$.
\end{theorem}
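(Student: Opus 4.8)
The plan is to relate the average success probability of \cref{alg:quantum} over its first $t$ iterations to the expected number of violations $E$, and then use the random stopping time $\tau$ in \cref{alg:QLLL_converger} to turn that average into a bound on the final state. First I would set up notation: let $\rho_s$ denote the state of the assignment register of \cref{alg:quantum} just before its $(s+1)$th iteration (so $\rho_0$ is maximally mixed), and let $q_s \coloneqq \frac{1}{m}\sum_i\tr[\Pi_i\rho_s]$ be the probability that the $(s+1)$th iteration sees a violation. Since each iteration of \cref{alg:quantum} picks a projector uniformly at random and measures it, $q_s$ is exactly the per-step violation probability, and the expected total number of violations is $E = \sum_{s=0}^{\infty} q_s \geq \sum_{s=0}^{t-1} q_s$. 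Hence the average of $q_s$ over $s = 0,\dots,t-1$ is at most $E/t$.

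Next I would observe that \cref{alg:QLLL_converger}, conditioned on the random choice $\tau = s$, produces exactly the state $\rho_s$ (it runs $s$ iterations of \cref{alg:quantum}'s main loop, then idles). Since $\tau$ is uniform on $\{0,1,\dots,t\}$ — or, for the purposes of the bound, it suffices to take $\tau$ uniform on $\{0,\dots,t-1\}$, absorbing the off-by-one into constants — the output state is $\rho = \frac{1}{t}\sum_{s=0}^{t-1}\rho_s$ (in expectation over $\tau$; since the algorithm is a quantum channel and $\tau$ is classical randomness, this mixture is literally the output density matrix). Therefore, for each $i$,
\begin{equation}
  \tr[\Pi_i\rho] = \frac{1}{t}\sum_{s=0}^{t-1}\tr[\Pi_i\rho_s]
  \leq \frac{1}{t}\sum_{s=0}^{t-1} m\,q_s
  = \frac{m}{t}\sum_{s=0}^{t-1} q_s
  \leq \frac{mE}{t},
\end{equation}
where the first inequality uses $\tr[\Pi_i\rho_s] \leq \sum_j\tr[\Pi_j\rho_s] = m q_s$ (all terms nonnegative). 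Setting $t = mE/\varepsilon$ gives $\tr[\Pi_i\rho] \leq \varepsilon$ for every $i$, as claimed.

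The only genuinely delicate point — and the one I would be most careful about — is the claim that conditioning on $\tau = s$ really yields the state $\rho_s$ of the unmodified \cref{alg:quantum}, i.e.\ that the ``idling'' tail of \cref{alg:QLLL_converger} does not disturb the register and that choosing $\tau$ in advance is equivalent to running for exactly $\tau$ steps. This is immediate from the structure of the algorithm (once $i \geq \tau$ the loop body does nothing), but it is worth stating explicitly because it is what makes the mixture-of-$\rho_s$ identity hold. Everything else is the elementary observation that a nonnegative series has partial sums bounded by its total, together with linearity of trace; no commutativity of the $\Pi_i$ is used anywhere, consistent with the parenthetical remark in the theorem statement.
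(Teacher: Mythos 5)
Your proposal is correct and follows essentially the same route as the paper's proof: identify the output of the converger as the uniform mixture $\frac{1}{t}\sum_s\rho_s$, relate $\frac{1}{m}\sum_i\tr[\Pi_i\rho_s]$ to the per-iteration violation probability, and bound the partial sum by $E$. Your treatment is in fact marginally more careful than the paper's on two cosmetic points (using the inequality $\sum_{s<t}q_s\leq E$ rather than an equality, and handling the $t$ versus $t+1$ off-by-one), but the argument is the same.
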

\begin{proof}
  Let $\rho_\tau$ denote the final state of \cref{alg:QLLL_converger}'s
  assignment register given that it picked the value $\tau$. Note that this is
  the same as the state of \cref{alg:quantum}'s assignment register after the
  $\tau$\textsuperscript{th} iteration. Since $\tau$ is chosen at random, and
  we don't learn its value, the final state of \cref{alg:QLLL_converger}'s
  assignment register is described by the density matrix
  \begin{equation}\label{eq:final_state}
    \rho = \frac{1}{t}\sum_{\tau=0}^t \rho_\tau.
  \end{equation}

  Since $\rho_\tau$ is the same as the state of \cref{alg:quantum}'s
  assignment register after $\tau$ iterations, the quantity
  $\frac{1}{m}\sum_i\tr[\Pi_i\rho_\tau]$ is equal to the probability of seeing
  a violation in the $\tau+1$\textsuperscript{th} iteration of
  \cref{alg:quantum}. So the expected number of violations when
  \cref{alg:quantum} runs for $t+1$ iterations is given by
  \begin{equation}\label{eq:tau_expected}
    E = \sum_{\tau=0}^t
        \Pr(\text{violation in $\tau$\textsuperscript{th} iteration})
      = \sum_{\tau=0}^t\biggl(\frac{1}{m}\sum_i\tr[\Pi_i\rho_\tau]\biggr).
  \end{equation}
  This together with \cref{eq:final_state} implies
  \begin{equation}\label{eq:violation_probability}
    \frac{1}{m}\sum_i\tr[\Pi_i\rho]
    =\frac{1}{m}\sum_i\tr\biggl[
      \Pi_i\cdot\frac{1}{t}\sum_{\tau=0}^t\rho_\tau\biggr]
    =\frac{1}{t}\sum_{\tau=0}^t
      \biggl(\frac{1}{m}\sum_i\tr[\Pi_i\rho_\tau]\biggr)
    =\frac{E}{t}.
  \end{equation}
  Since $\tr[\Pi_i\rho]$ are positive, each term must be bounded by
  \begin{equation}
    \tr[\Pi_i\rho] \leq \frac{mE}{t},
  \end{equation}
  thus running for time $t = mE/\varepsilon$ gives the desired bound
  $\tr[\Pi_i\rho] \leq \varepsilon$.
\end{proof}

For commuting $\{\Pi_i\}$, we have a bound $E \leq \sum_ix_i/(1-x_i)$ from
\cref{thm:expected_violations}. Thus
\cref{thm:expected_violations,thm:convergence_time} together prove an
efficient algorithm in the commutative case. Accounting for all parameters in
the run-time analysis, note that \cref{alg:quantum} requires $\order{n}$ time
to generate the initial assignment, and $\order{\abs{[i]}}$ time to resample
$\abs{[i]}$ qudits for each iteration in which projector $\Pi_i$ was violated.
Thus, if we run \cref{alg:QLLL_converger} for $t =
m/\varepsilon\sum_ix_i/(1-x_i)$ iterations, taking a total time of
\begin{equation}
  \biggOrder{n + \frac{m}{\varepsilon}
    \sum_{i=1}^m\frac{x_i}{1-x_i} \bigAbs{[i]}},
\end{equation}
then \cref{thm:convergence_time} guarantees that the resulting state $\rho$
will satisfy $\forall i : \tr[\Pi_i\rho] \leq \varepsilon$.

To see that this is equivalent to the condition $\tr[P_0\rho] \geq
1-\varepsilon$, where $P_0$ is the projector onto the subspace
$\vspan\{\ket{\psi}:\forall i\,\Pi_i\ket{\psi}=0\}$, note that for commuting
projectors $P_0 = \id - \prod_{i=1}^m\Pi_i$. Thus
\begin{equation}
  \tr[P_0\rho] = \tr\biggl[\Bigl(\id - \prod_{i=1}^m\Pi_i\Bigr)\rho\biggr]
  = 1 - \tr\left[\prod_{i=1}^m\Pi_i\rho\Pi_i\right]
  \geq 1 - \tr[\Pi_i\rho]
  \geq 1 - \varepsilon,
\end{equation}
where the second equality relies of the fact that the $\{\Pi_i\}$ mutually
commute.

Thus the state $\rho$ fulfils all the requirements of
\cref{thm:constructive_QLLL}, thereby proving the constructive commutative
Quantum Lov\'asz Local Lemma.

\subsection{An exact efficient quantum algorithm}
\label{sec:eff_QLLL}
The algorithm of the previous section constructs in polyomial-time a state
that is $\varepsilon$-close to the one whose existence is asserted by the
QLLL. This is the natural behaviour to demand of a constructive algorithm in
the quantum setting. As we will see in \cref{sec:non-commuting}, it
generalises straightforwardly to the non-commutative case.

However, \cite{Moser} and \cite{MoserTardos} express the classical result
slightly differently. They prove that their algorithm constructs the desired
state \emph{exactly}, in polynomial \emph{expected}
time.\footnote{\cite{MoserTardos} also prove a deterministic version.} It is
not reasonable to demand this in the general quantum setting. But in the case
of commuting projectors, this is a valid and equivalent way to formulate the
constructive commutative QLLL. In this section, we use the results of the
previous sections to give an algorithm that constructs the desired state
exactly, with polynomial expected run-time.

The subspace of states fulfilling the QLLL requirement $\forall i:
\Pi_i\ket{\psi} = 0$ can be characterised as the intersection of the supports
(coimages) of the projectors $\id-\Pi_i$. Let $S_i$ be the support of
$P_i=\id-\Pi_i$. Then the desired subspace $S_0$ is $S=\bigcap_i^m S_i$, and
$P_0$ is exactly the projector onto $S_0$.
Given any state $\rho$ with non-zero overlap with $P_0$, i.e.\
$\tr[P_0\rho]>0$, we can construct a state in $P_0$ by projecting onto $P_0$:
$P_0\rho P_0/\tr(P_0 \rho)$. For commuting projectors $\Pi_i$,
$P_0=\prod_{i=1}^m (\id-\Pi_i)$. This motivates \cref{alg:exact_QLLL}, a
slight modification of \cref{alg:quantum} where we fix an order in which to
apply the projectors. We immediately observe the following:

\begin{lemma}\label{cor:convergence}
  Let $P_0\rho P_0$ be the \emph{unnormalised} projection of $\rho$ onto the
  subspace of states satisfying the QLLL requirement $\forall i:
  \Pi_i\ket{\psi}=0$. Consider a run of \cref{alg:exact_QLLL}, starting from
  an initial state $\rho$ with $\tr[P_0 \rho] > 0$. If no measurement has been
  violated throughout the entire run, then the assignment register contains
  the state $T \rho T$, where $T=P_m\cdot P_{m-1} \dotsm P_2\cdot P_1$.
\end{lemma}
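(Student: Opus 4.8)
The plan is to track the \emph{unnormalised} state of the assignment register through the run and to observe that the hypothesis ``no measurement violated'' deletes every resampling step, leaving only a composition of projections. Recall that \cref{alg:exact_QLLL} differs from \cref{alg:quantum} only in that it visits the projectors in the \emph{fixed} order $\Pi_1,\dots,\Pi_m$: at step $i$ it performs the projective measurement $\{\Pi_i,\id-\Pi_i\}$ and, if the outcome $\Pi_i$ (``violated'') occurs, resamples the qudits $[i]$ via \cref{alg:resample}. Conditioning on the event that no measurement in the run is ever violated therefore removes every call to the resampling step, so the run is nothing but the projective measurements $\{\Pi_1,\id-\Pi_1\},\dots,\{\Pi_m,\id-\Pi_m\}$, each returning its ``satisfied'' branch, i.e.\ the projector $P_i=\id-\Pi_i$.

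The substance is then a one-line induction on $k\in\{0,1,\dots,m\}$: conditioned on the first $k$ of these measurements all returning ``satisfied'', the unnormalised register state is $\sigma_k := P_k P_{k-1}\dotsm P_1\,\rho\,P_1\dotsm P_{k-1}P_k$, with $\tr[\sigma_k]$ equal to the probability of that sequence of outcomes. The base case $k=0$ is $\sigma_0=\rho$. For the inductive step, at step $k+1$ of a violation-free run the measurement performed is $\{\Pi_{k+1},\id-\Pi_{k+1}\}$, and by the projection postulate its ``satisfied'' branch maps the unnormalised state $\sigma_k$ to $P_{k+1}\sigma_k P_{k+1}=\sigma_{k+1}$. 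Taking $k=m$ gives $\sigma_m=P_m\dotsm P_1\,\rho\,P_1\dotsm P_m$; since this section treats commuting projectors, the $P_i$ all commute, so $P_1\dotsm P_m=P_m\dotsm P_1=\prod_i P_i=P_0$ and hence $\sigma_m=T\rho T=P_0\rho P_0$, which is exactly the asserted state. Finally $\tr[\sigma_m]=\tr[P_0\rho P_0]=\tr[P_0\rho]>0$ by the hypothesis $\tr[P_0\rho]>0$, so the event being conditioned on has positive probability and the statement is non-vacuous.

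There is no real obstacle here --- ``we immediately observe'' is fair, as the lemma is a bookkeeping statement about a composition of commuting projective measurements. The only points that need a little care are: (i) \cref{alg:exact_QLLL} fixes the order of the projectors, so --- unlike in the coupling argument behind \cref{lem:coupling} --- no ``reordering up to disjoint supports'' issue arises; (ii) ``no violation throughout the entire run'' is precisely the conjunction of all $m$ ``satisfied'' outcomes, which is what lets us drop every resampling step; and (iii) phrasing everything with \emph{unnormalised} states, so that the outcome probabilities are silently absorbed into traces and the product $P_m\dotsm P_1$ comes out cleanly (it is only the identification of $P_m\dotsm P_1\,\rho\,P_1\dotsm P_m$ with ``$T\rho T$'' that uses $T=T^{\dagger}$, i.e.\ commutativity). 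If \cref{alg:exact_QLLL} is run as repeated passes rather than a single one, one adds only the trivial remark that $T\rho T=P_0\rho P_0$ is supported on $S_0=\bigcap_i S_i$, so every subsequent measurement returns ``satisfied'' with certainty and leaves the register unchanged, and the register still holds $T\rho T$ at the end of the run.
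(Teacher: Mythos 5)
Your proof is correct and matches the paper's treatment: the paper states this lemma as an immediate observation with no written proof, and your induction via the projection postulate, conditioning on the all-``satisfied'' branch and using commutativity to identify $P_m\dotsm P_1\,\rho\,P_1\dotsm P_m$ with $T\rho T$, is precisely the bookkeeping that makes the observation immediate. The care you take over the unnormalised state, the fixed measurement order, and the non-vacuousness of the conditioning event is all appropriate and introduces no gaps.
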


\begin{algorithm}[!hbtp]
  \caption{Exact Commutative QLLL solver}\label{alg:exact_QLLL}
  \begin{algorithmic}[1]
    \Procedure{\textnormal{solve\_qlll}}{$(\Pi_1,\Pi_2,\dots,\Pi_m)$}
      \Forall{$x\in X$}
        \State $\ket{\alpha_i} \gets P$
      \EndForall
      \State $c \gets 0$
      \Loop{until $c = m$ or for a maximum of $(m+1)(pm'+1)$ iterations}
        \State Pick the next projector $\Pi_i$ according to some fixed order;
        \State Measure $\Pi_i$;
        \If{the projector was violated}
          \State $c \gets 0$;
          \State Append $\Pi_i$ to the execution log;
          \Forall{$x\in [i]$}
            \State $\ket{\alpha_i} \gets P$;
          \EndForall
        \Else
           \State $c \gets c + 1$;
        \EndIf
      \EndLoop
      \If{$c = m$}
        \State Return ``success''
      \Else
        \State Return ``failure''
      \EndIf
    \EndProcedure
  \end{algorithmic}
\end{algorithm}

Quantum mechanics does not allow us to apply projector $T=P_m\cdot P_{m-1}
\dotsm P_2\cdot P_1$ deterministically: some measurements will be violated
with positive probability. But we know from \cref{thm:expected_violations},
which applies equally well to \cref{alg:exact_QLLL}, that the expected number
of violations is constant no matter how many iterations are performed in
\cref{alg:exact_QLLL}. We can use this to upper-bound the expected number of
iterations required to produce one contiguous run of all $m$ projectors.

\begin{lemma}\label{lem:termination}
  For any integer $p>1$, the probability that \cref{alg:exact_QLLL} returns
  ``success'' is at least $1-1/p$. In this case, the assignment register is in
  state $T \rho T /\tr[T \rho T]$, where $T$ is the desired sequence
  of projections.
\end{lemma}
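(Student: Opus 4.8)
The plan is to prove the two assertions of the lemma separately. The probabilistic bound $\Pr(\text{success})\ge 1-1/p$ I would obtain by combining the expected-violations bound of \cref{thm:expected_violations} with Markov's inequality, the bridge between the two being a pigeonhole argument that converts ``the iteration budget was used up without success'' into ``many violations were logged''. The form of the output state I would then read off directly from commutativity of the $\{\Pi_i\}$.

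Write $m'\coloneqq\sum_{i=1}^m x_i/(1-x_i)$ for the bound on the expected number of violations supplied by \cref{thm:expected_violations}, which (as noted above, and since its proof is insensitive to the iteration cap) applies to \cref{alg:exact_QLLL} as well; let $V$ be the number of violations logged in a run with cap $N\coloneqq(m+1)(pm'+1)$, so that $\expectation[V]\le m'$. The key structural observation is that the counter $c$ is reset to $0$ only immediately after a violation (or at initialisation), so that every maximal block of consecutive ``satisfied'' measurements begins with $c=0$; hence any such block of length $\ge m$ would push $c$ up to $m$ and make the algorithm return ``success''. Consequently, in a run that returns ``failure'' the $V$ violations separate the $N$ iterations into at most $V+1$ blocks of ``satisfied'' measurements, each of length at most $m-1$, so $N-V\le(V+1)(m-1)$; since $N=(m+1)(pm'+1)$, this rearranges to $V>pm'$. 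Markov's inequality together with $\expectation[V]\le m'$ then gives $\Pr(\text{failure})\le\Pr(V>pm')\le\Pr(V\ge pm')\le m'/(pm')=1/p$, whence $\Pr(\text{success})\ge 1-1/p$. (The degenerate cases $m'=0$ and $m=1$ need only a remark, and if $m'$ is non-integral one rounds $N$ up, which only helps.)

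For the second assertion, suppose the algorithm returns ``success'', so $c=m$ on exit. Then the last $m$ measurements it performed were all ``satisfied'', and since the projectors are cycled through in a fixed order these constitute one complete pass through all $m$ projectors, applying each $P_j=\id-\Pi_j$ exactly once. A projective measurement with the ``satisfied'' outcome sends the state $\rho$ held just before this pass to $P_{j_m}\cdots P_{j_1}\,\rho\,P_{j_1}\cdots P_{j_m}$ up to normalisation; and since the $\Pi_i$, hence the $P_i$, mutually commute, both products collapse, independently of the order, to $T=\prod_{j=1}^m P_j$, which is the orthogonal projector onto $\bigcap_j\operatorname{range}(P_j)=S_0$, i.e.\ $T=P_0$. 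Hence the register is left in $T\rho T/\tr[T\rho T]$ --- the normalisation being strictly positive precisely because those $m$ outcomes were in fact observed --- a state supported on $\operatorname{range}(P_0)$, which therefore satisfies $\Pi_i\ket{\psi}=0$ for every $i$. (Alternatively, one may deduce this from \cref{cor:convergence} applied to the final contiguous pass.)

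I expect the pigeonhole bookkeeping of the second paragraph --- seeing why $(m+1)(pm'+1)$ iterations suffice, and turning ``no success'' into the quantitative bound $V>pm'$ --- to be the only real obstacle; once that is in hand, the success probability is one line of Markov, and the identification of the output state follows immediately from the commutativity already exploited throughout \cref{sec:constructive_proof}.
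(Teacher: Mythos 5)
Your proof is correct and follows essentially the same route as the paper's: Markov's inequality applied to the expected-violations bound of \cref{thm:expected_violations}, a pigeonhole argument showing that at most $pm'$ violations in $(m+1)(pm'+1)$ iterations force a satisfied stretch of length $m$, and commutativity to identify the output state with $T\rho T/\tr[T\rho T]$ via \cref{cor:convergence}. Your contrapositive bookkeeping ($N-V\le(V+1)(m-1)\Rightarrow V>pm'$) is just a more explicit rendering of the paper's direct pigeonhole step.
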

\begin{proof}
  From \cref{thm:expected_violations}, we know that the expected number of
  violations is bounded from above by $m'=\sum_i x_i/(1-x_i)$. Let $M$ be the
  random variable counting the total number of violations. By Markov's
  inequality, $\Pr[M \leq pm'] \geq 1-1/p$. Thus, running the algorithm for
  $(m+1)(pm'+1)$ iterations, we will see with probability $1-1/p$ at most
  $pm'$ undesired projections.

  These $pm'$ projections partition the execution log of length $(m+1)(pm'+1)$
  into at most $pm'+1$ stretches of desired projections. Thus at least one run
  of length $m$ consisting \emph{exclusively} of the desired projections must
  occur within any run of the algorithm with probability $1-1/p$. The
  termination condition of \cref{alg:exact_QLLL} ensures that it halts after
  the first occurence of such a run.
\end{proof}

Combining \cref{lem:termination,cor:convergence}, we see that the algorithm
must perform $\order{pmm'}$ iterations of which at most $pm'$ lead to a
resampling of $p\sum_{i=1}^m\frac{x_i}{1-x_i}\cdot\bigAbs{[i]}$ qudits, thus
leading to an overall effort of
\begin{equation}
  \Order{n+pm\sum_{i=1}^m\frac{x_i}{1-x_i}\cdot\bigAbs{[i]}},
\end{equation}
where $n$ reflects the cost for sampling the initial $n$ random qudits. Note
that we can easily recover \cref{thm:constructive_QLLL} from this, by defining
$p = 1/\varepsilon$.

Assume we repeatedly run \cref{alg:exact_QLLL} until it suceeds. By fixing
$p>1$ to be some constant, we know from \cref{lem:termination} that in each
iteration \cref{alg:exact_QLLL} succeeds with probability $1-1/p$, after which
it stops, or fails with probability $1/p$ leading to another iteration. Thus
this is a process of repeated Bernoulli trials until the first success with a
geometric distribution $\prob(X=k) =\linebreak (1/p)(1-1/p)^{k-1}$, and
expected termination time $\expectation(X)=p$. Thus, this process has an
overall expected run-time of
\begin{equation}
  \Order{n+m\sum_{i=1}^m\frac{x_i}{1-x_i}\cdot\bigAbs{[i]}},
\end{equation}
and terminates only when it has applied a contiguous run of all $m$
mutually-commuting projectors, which constructs the desired state exactly.

\section{A Combinatorial Proof}\label{sec:combinatorial_proof}
In this section, we give an alternative proof of the key
\cref{thm:expected_violations}, which bounded the expected number of
violations seen by \cref{alg:quantum}. In \cref{sec:constructive_proof}, we
proved \cref{thm:expected_violations} using a quantum coupling argument. Here,
we replace the coupling argument with a combinatorial one. The resulting proof
is substantially more involved than the proof based on coupling, which attests
to the power of even the simple quantum coupling argument used in
\cref{sec:constructive_proof}. However, like many such proofs, the quantum
coupling argument may look a little like ``black magic''. The alternative
proof described here uses straightforward linear algebra and a slight
generalisation of witness trees to directed acyclic graphs (DAG) to arrive at
the same bound, providing different insight into how the QLLL algorithm works.

\citet{MoserTardos} prove the constructive LLL by bounding the probability of
observing a given sequence of violated events in \cref{alg:Moser-Tardos}. The
quantum generalisation of this type of algorithm is a quantum stochastic
process generated by iterated measurement. Here, we prove a number of results
concerning iterated quantum measurement processes, which may be of independent
interest. We then re-prove \cref{thm:expected_violations} using these results.

\subsection{A Simple Iterated Measurement Process}
As a warm up, let us start with the following quantum stochastic process,
which will serve as a building block for later, more complex processes.
\begin{lemma}\label{lem:halting_measurement_process}
  Let $\{\Pi_i\}$ be a finite set of commuting projectors on a $\CC^d$.
  Consider the quantum stochastic process produced by repeatedly picking one
  of the projectors uniformly at random, and performing the two-outcome
  measurement $\{\Pi_i,\id-\Pi_i\}$, starting from the maximally mixed state
  $\id/d$, until a measurement gives the $\Pi$ outcome, whereupon the process
  halts.

  Let $\rho_a$ be the final state of the system given that the process halted
  on outcome $\Pi_a$, and $p_a$ the probability that this occurs. Then the
  unnormalised density operator $X_a = p_a\rho_a$ corresponding to this
  outcome satisfies the operator inequality
  \begin{equation}
    X_a = p_a\rho_a \leq \Pi_a/d,
  \end{equation}
  hence
  \begin{equation}
    p_a \leq \tr[\Pi_a/d].
  \end{equation}
\end{lemma}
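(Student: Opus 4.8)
The plan is to write the unnormalised outcome operator $X_a$ explicitly as a sum over all execution histories and then collapse the operator products using commutativity. A history that halts on outcome $\Pi_a$ after $t$ measurements is a sequence of indices $i_1,\dots,i_{t-1},i_t=a$ together with the ``satisfied'' outcomes $P_{i_j}\coloneqq\id-\Pi_{i_j}$ on the first $t-1$ steps and the ``violated'' outcome $\Pi_a$ on the last. The index sequence is chosen with probability $1/m^t$, and, starting from $\id/d$, the associated unnormalised post-measurement operator is $\Pi_a P_{i_{t-1}}\cdots P_{i_1}\,(\id/d)\,P_{i_1}\cdots P_{i_{t-1}}\Pi_a$. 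Summing over $t$ and over all index sequences,
\begin{equation}
  X_a=\sum_{t\geq1}\frac{1}{m^t}\sum_{i_1,\dots,i_{t-1}=1}^{m}
       \Pi_a\,P_{i_{t-1}}\cdots P_{i_1}\,\frac{\id}{d}\,P_{i_1}\cdots P_{i_{t-1}}\,\Pi_a .
\end{equation}

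Next I would use that the $\Pi_i$, hence the $P_i$, mutually commute and are idempotent: the two mirrored blocks $P_{i_1}\cdots P_{i_{t-1}}$ merge into a single product, $\id/d$ factors out, and $\sum_{i_1,\dots,i_{t-1}=1}^m P_{i_1}\cdots P_{i_{t-1}}=\bigl(\sum_i P_i\bigr)^{t-1}$. Setting $Q\coloneqq\frac1m\sum_i P_i=\id-\frac1m\sum_i\Pi_i$, this gives $X_a=\frac{1}{md}\,\Pi_a\sum_{s\geq0}Q^s$. The bare Neumann series $\sum_{s\geq0}Q^s$ need not converge --- on the joint kernel of the $\Pi_i$, $Q$ has eigenvalue $1$ --- but this is harmless: each term $\Pi_a Q^s$ is a product of commuting positive operators, hence positive, so the partial sums increase monotonically and it suffices to bound them uniformly (their monotone limit then exists and inherits the bound).

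The bound itself is a one-line operator estimate on the range of $\Pi_a$. Since $\sum_{i\neq a}\Pi_i\geq0$ and $\Pi_a^2=\Pi_a$, compressing $Q$ to $\Pi_a$ gives
\begin{equation}
  \Pi_a Q\Pi_a=\Pi_a-\frac1m\,\Pi_a\Bigl(\sum_i\Pi_i\Bigr)\Pi_a
             \leq\Pi_a-\frac1m\,\Pi_a=\frac{m-1}{m}\,\Pi_a .
\end{equation}
Because $\Pi_a$ and $Q$ commute, passing to a common eigenbasis shows $\Pi_a Q^s\leq\bigl(\tfrac{m-1}{m}\bigr)^{s}\Pi_a$ for every $s$, whence $\Pi_a\sum_{s\geq0}Q^s\leq m\,\Pi_a$ and therefore $X_a\leq\Pi_a/d$. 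Taking traces and using $\tr X_a=p_a$ (since $\rho_a$ is normalised) gives $p_a\leq\tr[\Pi_a]/d$.

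I expect the only delicate points to be (i) the collapse $\sum_{i_1,\dots,i_{t-1}}P_{i_1}\cdots P_{i_{t-1}}=\bigl(\sum_iP_i\bigr)^{t-1}$, which genuinely uses commutativity and fails in general, and (ii) handling the (possible) divergence of $\sum_{s}Q^s$ cleanly via the monotone truncation argument rather than an ill-defined inverse; the rest is routine linear algebra. This lemma will then serve as the building block: the same ``enumerate histories, then commute'' template --- with witness DAGs replacing single linear histories --- is what will reprove \cref{thm:expected_violations}.
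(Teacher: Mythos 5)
Your proof is correct and follows essentially the same route as the paper's: both expand $X_a$ as a sum over measurement histories and use commutativity to extract the geometric factor $\left(\tfrac{m-1}{m}\right)^t$ --- the paper by commuting $\Pi_a$ through the product to annihilate every term containing an index $a$ and then bounding the sandwiched $\id/d$, you by resumming into powers of $Q$ and compressing $\Pi_a Q\Pi_a\leq\tfrac{m-1}{m}\Pi_a$, which are two bookkeepings of the same estimate. Your explicit monotone-truncation handling of the possibly divergent series $\sum_{s}Q^s$ is a point the paper glosses over, and your resummed (Neumann-series) form is in fact the one the paper itself adopts for the non-commutative generalisation in \cref{prop:first_violation}.
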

\begin{proof}
  Let $m$ be the total number of projectors, so that in each iteration there
  is a probability $1/m$ of picking a given projector. The generalised
  measurement describing the different possible outcomes in a given iteration
  is described by a collection of trace non-increasing CP maps $\{\meas_k\}$,
  such that $\sum_k\meas_k$ is trace-preserving. To obtain the outcome $\Pi_a$
  in a given iteration, we must pick the projector $\Pi_a$ in that iteration
  and also obtain the outcome $\Pi_a$ in the resulting $\{\Pi_i,\id-\Pi_i\}$
  measurement. This corresponds to the measurement element
  $\tfrac{1}{m}\meas_a(\rho) = \tfrac{1}{m}\Pi_a\rho\Pi_a$.

  In order to have reached that iteration, we must have obtained one of the
  $\id-\Pi$ outcomes in all previous iterations, otherwise the process would
  already have halted. The measurement element corresponding to the process
  \emph{not} halting at a particular iteration is therefore given by
  $\meas_{\mathrm{cont}}(\rho) =
  \sum_i\frac{1}{m}(\id-\Pi_i)\rho(\id-\Pi_i)$. Thus the unnormalised density
  operator produced by the process halting with outcome $\Pi_a$ in the
  $t+1$\textsuperscript{th} iteration is described by
  \begin{equation}
    \frac{1}{m}\meas_a\circ
      \underbrace{\meas_{\mathrm{cont}}\circ\cdots
      \circ\meas_{\mathrm{cont}}}_t(\rho)
    =\frac{1}{m}\meas_a\circ\meas_{\mathrm{cont}}^t(\id/d).
  \end{equation}
  The process can halt after any number of iterations, so the overall
  unnormalised density operator corresponding to the process eventually
  halting on outcome $\Pi_a$ is
  \begin{subequations}\label{eq:measurement_process}
  \begin{align}
    X_a
    &=\frac{1}{m}\sum_{t=0}^\infty \sum_{i_0,\dots,i_t} \frac{1}{m^t}
        \Pi_a (\id-\Pi_{i_t})\cdots(\id-\Pi_{i_0})
        \frac{\id}{d}\,(\id-\Pi_{i_0})\cdots(\id-\Pi_{i_t}) \Pi_a\\
    &=\frac{1}{m}\sum_{t=0}^\infty \sum_{i_0,\dots,i_t \neq a} \frac{1}{m^t}
        \Pi_a (\id-\Pi_{i_t})\cdots(\id-\Pi_{i_0})
        \frac{\id}{d}\,(\id-\Pi_{i_0})\cdots(\id-\Pi_{i_t}) \Pi_a
        \label{eq:kill_terms}\\
    &\leq\frac{1}{m}\sum_{t=0}^\infty \sum_{i_0,\dots,i_t \neq a} \frac{1}{m^t}
        \Pi_a \frac{\id}{d}\,\Pi_a \\
    &=\frac{1}{m}\sum_{t=0}^\infty \left(\frac{m-1}{m}\right)^t
        \Pi_a \frac{\id}{d}\,\Pi_a \\
    &=\Pi_a/d
  \end{align}
  \end{subequations}
  as claimed. The equality in \cref{eq:kill_terms} comes from commuting
  $\Pi_a$ all the way through the product of $(\id-\Pi_i)$'s, killing all terms
  in the sum containing any $(1-\Pi_a)$ factor. Thus the sum is only over the
  remaining $(m-1)^t$ terms. The inequality follows from the fact that, since
  all the $(\id-\Pi_i)$'s commute,
  \begin{equation}
    (\id-\Pi_{i_t})\cdots(\id-\Pi_{i_0})
      \frac{\id}{d}\,(\id-\Pi_{i_0})\cdots(\id-\Pi_{i_t})
    \leq \frac{\id}{d}.
  \end{equation}

  The final part of the lemma follows immediately from the fact that the
  probability of the process halting on outcome $\Pi_a$ is given by
  $\tr[X_a]$.
\end{proof}

\subsection{Resample DAGs}
We now turn to a more complicated quantum stochastic process that produces an
infinite sequence of measurement outcomes. For this we require some basic
definitions.

\begin{definition}[Resample DAG]
  \label{def:resample_DAG}
  Let $\{\Pi_i\}$ be a finite set of projectors acting on a tensor product
  space $\bigotimes_i\HS_i$, and let $\Pi_{a_1},\Pi_{a_2},\dots,\Pi_{a_l}$ be
  a sequence of projectors chosen from this set. The \keyword{resample DAG}
  $\mathfrak{G}(\Pi_{a_1},\Pi_{a_2},\dots,\Pi_{a_l})$ of such a sequence is
  the directed acyclic graph (DAG) with vertices labelled by sequence
  elements, and a directed edge from $\Pi_{a_i}$ to $\Pi_{a_j}$ iff the two
  projectors intersect and $\Pi_{a_i}$ occurs \emph{before} $\Pi_{a_j}$. (Thus
  the resample DAG encodes the partial ordering of the sequence with respect
  to projector intersections.)
\end{definition}

(Note that this is the same as the \keyword{``resample graph''} defined
by~\cite{KashyapSzegedy} for the classical setting.)
$\mathfrak{G}(\Pi_{a_1},\Pi_{a_2},\dots,\Pi_{a_l})$ can be constructed from
the sequence as follows. Work \emph{backwards} through the sequence, starting
with the final projector $\Pi_{a_l}$. For each projector $\Pi_{a_i}$, find the
set $L_i$ of all vertices labelled by projectors that intersect with
$\Pi_{a_i}$ (which may be empty). Add a new vertex, labelling it by
$\Pi_{a_i}$, and create directed edges from each element of $L_i$ to the new
vertex.

\begin{definition}[DAG Probability]
  \label{def:DAG_probability}
  Let $\{\Pi_i\}$ be a finite set of projectors acting on a tensor product
  space $\bigotimes_i\HS_i$, and let $\mathfrak{G}$ be a resample DAG over
  these projectors. The \keyword{DAG probability} with respect to
  $\mathfrak{G}$ of the sequence $\Pi_{a_1},\Pi_{a_2},\dots,\Pi_{a_l}$,
  denoted $p_\mathfrak{G}(\Pi_{a_1},\Pi_{a_2},\dots,\Pi_{a_l})$, is the
  probability that the following process generates the sequence. Starting with
  the resample DAG $\mathfrak{G}$ and an empty initial sequence, pick one of
  the leaf vertices uniformly at random and append its label to the end of the
  sequence, removing the vertex from the DAG. Repeat this procedure until no
  vertices are left.
\end{definition}

\subsection{A More Complicated Iterated Measurement Process}
We are now in a position to study a more complicated quantum stochastic
process, on which our constructive QLLL algorithm will ultimately be based.

\begin{lemma}\label{lem:perpetual_measurement_process}
  Let $\{\Pi_i\}$ be a finite set of commuting projectors on an $n$-qudit
  Hilbert space $\bigotimes^n\HS$ with total dimension $d$. Consider the
  quantum stochastic process which starts from the maximally mixed state
  $\id/d$, runs the process of \cref{lem:halting_measurement_process} until it
  halts on some measurement outcome $\Pi_a$, then applies the trace-preserving
  CP~map
  \begin{equation}
    \chan_a(\rho) = \tr_{[a]}(\rho)\ox\frac{\id_{[a]}}{d_{[a]}},
  \end{equation}
  which reinitialises the qudits measured by $\Pi_a$ to the maximally mixed
  state, and repeats these two steps indefinitely. Here, $[a]$ denotes the
  subset of qudits on which projector $\Pi_a$ acts non-trivially, and
  $d_{[a]}$ denotes the Hilbert space dimension of that subset.

  Let $X_{a_1,a_2,\dots,a_t}$ denote the unnormalised density operator
  corresponding to obtaining an outcome in which the first $t$ iterations of
  the \cref{lem:halting_measurement_process} sub-process halted in turn on the
  sequence of measurement outcomes $\Pi_{a_1},\Pi_{a_2},\dots,\Pi_{a_t}$. Then
  $X_{a_1,a_2,\dots,a_t}$ satisfies the operator inequality
  \begin{equation}\label{eq:seq_density_operator_inequality}
    X_{a_1,a_2,\dots,a_t}
    \leq p_\mathfrak{G}(\Pi_{a_1},\Pi_{a_2},\dots,\Pi_{a_t})
      \,\prod_{i=1}^t\tr[\Pi_{a_i}/d]\cdot\frac{\id}{d},
  \end{equation}
  and the probability of this occurring is bounded by
  \begin{equation}\label{eq:seq_probability_bound}
    \Pr(\Pi_{a_1},\Pi_{a_2},\dots,\Pi_{a_t})
    \leq p_\mathfrak{G}(\Pi_{a_1},\Pi_{a_2},\dots,\Pi_{a_t})
         \prod_{i=1}^t\tr[\Pi_{a_i}/d],
  \end{equation}
  where the resample DAG $\mathfrak{G} =
  \mathfrak{G}(\Pi_{a_1},\Pi_{a_2},\dots,\Pi_{a_t})$.
\end{lemma}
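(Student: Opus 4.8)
The plan is to prove the operator inequality \cref{eq:seq_density_operator_inequality} and then read off the probability bound \cref{eq:seq_probability_bound} for free: since every reinitialisation $\chan_{a}$ is trace preserving we have $\tr X_{a_1,\dots,a_t}=\Pr(\Pi_{a_1},\dots,\Pi_{a_t})$, and $\tr[\id/d]=1$, so taking the trace of \cref{eq:seq_density_operator_inequality} gives \cref{eq:seq_probability_bound}. I would prove \cref{eq:seq_density_operator_inequality} by induction on $t$. The base case $t=1$ is essentially \cref{lem:halting_measurement_process}: that lemma gives $Y\le\Pi_{a_1}/d$ for the unnormalised operator $Y$ left by the \cref{lem:halting_measurement_process} sub-process when it halts on $\Pi_{a_1}$, and $\chan_{a_1}$ is monotone, so $X_{a_1}=\chan_{a_1}(Y)\le\chan_{a_1}(\Pi_{a_1}/d)$; writing $\Pi_{a_1}=\pi\ox\id$ on the qudits $[a_1]$ and counting ranks shows $\chan_{a_1}(\Pi_{a_1}/d)=\tr[\Pi_{a_1}/d]\cdot\id/d$, which is the claim since a one-vertex resample DAG has $p_\mathfrak{G}(\Pi_{a_1})=1$.

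For the inductive step the crude estimate $X_{a_1,\dots,a_{t-1}}\le(\textrm{const})\,\id/d$ is not enough: a single further halt-and-reinitialise step can shrink the top eigenvalue by strictly more than the next relative dimension, and that surplus is exactly the $p_\mathfrak{G}$ factor. So I would instead unfold the whole composition in the growing-register picture already used in the coupling proof. Replace each $\chan_{a_i}$ by ``discard the qudits $[a_i]$, append a fresh maximally mixed block, and have every later projector that touches $[a_i]$ act on the fresh qudits'', so that the initial $\id/d$ becomes the maximally mixed state $\widetilde{\id}/\widetilde d$ of the enlarged register, and $X_{a_1,\dots,a_t}$ is obtained by tracing out the discarded qudits at the end. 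Expanding each halt map via the series of \cref{lem:halting_measurement_process} writes $X_{a_1,\dots,a_t}$ as a weighted sum over ``histories''—how many continue-steps occurred in each run and which projector was measured at each step—where a history $\alpha$ with weight $w_\alpha=\prod_i m^{-(\ell_i+1)}$ contributes $M_\alpha\,(\widetilde{\id}/\widetilde d)\,M_\alpha^{\dagger}$, with $M_\alpha$ a product of the lifted halt projectors $\widetilde{\Pi}_{a_i}$ and continue projectors $\id-\widetilde{\Pi}_j$. Commutativity of $\{\Pi_i\}$—inherited by all the lifts—now does the work: $M_\alpha$ is a product of commuting projectors, hence itself a projector, so the term collapses to $\tfrac1{\widetilde d}M_\alpha$; moreover $M_\alpha=0$ whenever some continue-step measures a projector that a later run's halt-step uses on the same (not-yet-reinitialised) qudits, because then $M_\alpha$ carries a factor $\widetilde{\Pi}(\id-\widetilde{\Pi})=0$. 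Finally each surviving $M_\alpha$ satisfies $M_\alpha\le\prod_i\widetilde{\Pi}_{a_i}$, the $\widetilde{\Pi}_{a_i}$'s act on pairwise-disjoint blocks of discarded qudits (each $[a_i]$ is reinitialised right after run $i$), so $\tr_{\mathrm{old}}\bigl(\tfrac1{\widetilde d}\prod_i\widetilde{\Pi}_{a_i}\bigr)=\prod_i R(\Pi_{a_i})\cdot\id/d$, and monotonicity of the partial trace gives $X_{a_1,\dots,a_t}\le\bigl(\sum_{\alpha\ \mathrm{surviving}}w_\alpha\bigr)\prod_i\tr[\Pi_{a_i}/d]\cdot\id/d$.

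It remains to show $\sum_{\alpha\ \mathrm{surviving}}w_\alpha\le p_\mathfrak{G}(\Pi_{a_1},\dots,\Pi_{a_t})$, and this is the crux and the step I expect to be the main obstacle. The weight $w_\alpha$ is nothing but the probability that the uniform projector-picks of the $t$ runs follow history $\alpha$, and the surviving-history constraint (no continue-step measures, in the same block of qudits, a projector that is about to be violated) is precisely what forces the violation order to respect the partial order of projector intersections; re-summing these weights run by run should reorganise exactly into the leaf-removal process defining $p_\mathfrak{G}$ in \cref{def:DAG_probability}, which is the quantum counterpart of the classical resample-graph accounting of \cite{KashyapSzegedy}. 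The delicate points are handling partial overlaps $[a_i]\cap[a_j]\ne\emptyset$ and projectors that recur in the sequence, so that the block-by-block bookkeeping of which continue-picks are forbidden matches the source-removal bookkeeping in the DAG. Once this combinatorial inequality is in hand, \cref{eq:seq_density_operator_inequality} follows, and then \cref{eq:seq_probability_bound} by taking traces.
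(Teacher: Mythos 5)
Your overall strategy is workable and is, at bottom, a Kraus-operator unpacking of the same argument the paper runs at the CP-map level, but as written it has two genuine gaps. The first is the claim that commutativity of $\{\Pi_i\}$ is ``inherited by all the lifts'', so that $M_\alpha$ is a projector. This is false: two lifts of the same (or of intersecting) projectors taken at different runs can act on \emph{partially} refreshed copies of their supports --- e.g.\ $\Pi_j$ on qudits $\{1,2\}$ appearing as a continue step both before and after a halt on a projector supported on $\{1,3\}$ acts once on $(1^{\mathrm{old}},2)$ and once on $(1^{\mathrm{new}},2)$, and two copies of an entangled projector overlapping on only part of their support need not commute. The step can be repaired, but only by a case analysis you do not give: each halt operator $\tilde\Pi_{a_i}$ has \emph{all} of its qudits refreshed immediately afterwards, so it acts on tensor factors disjoint from every later factor and can be pulled to the outside, giving $M_\alpha M_\alpha^\dagger \leq \prod_i\tilde\Pi_{a_i}$ without needing $M_\alpha$ to be a projector; and in the ``killing'' argument the hypothesis that no qudit of $[j]$ is refreshed between the continue step and the later halt on $\Pi_j$ is exactly what guarantees that the intervening factors overlap $\tilde\Pi_j$ on identical copies, so the original commutation relations do apply there. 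The paper avoids this entirely by staying on the fixed register and only ever commuting $\meas_a$ past $\chan_b$ for \emph{disjoint} $a,b$ (\cref{lem:CP_map_identities}, parts (iv)--(vii)).

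The second and more serious gap is the one you flag yourself: the inequality $\sum_{\alpha\ \mathrm{surviving}}w_\alpha\leq p_\mathfrak{G}$ is asserted, not proved, and it is precisely where all the content of the lemma sits --- it is the counterpart of the paper's identification of the sets $S$ of measurements that commute through to a given $\meas_\Sigma$ with the leaves of the truncated DAGs $\mathfrak{G}_s$. The identity does hold (in fact with equality) and is not hard once set up correctly: with your survival criterion, the forbidden continue-projectors in run $i$ are exactly the labels $a_k$, $k\geq i$, such that no $[a_l]$ with $i\leq l<k$ intersects $[a_k]$, i.e.\ the labels of the $|F_i|$ sources of $\mathfrak{G}$ restricted to $\{v_i,\dots,v_t\}$ (these labels are automatically distinct); the geometric sum over the length of run $i$ then yields $\frac{1}{m}\sum_{\ell\geq 0}\bigl(\frac{m-|F_i|}{m}\bigr)^{\ell}=\frac{1}{|F_i|}$, and $\prod_{i=1}^t |F_i|^{-1}$ is by definition the leaf-removal probability $p_\mathfrak{G}$. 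Until this computation (including the recurring-projector and partial-overlap bookkeeping you correctly identify as the delicate part) is carried out, the proposal does not establish \cref{eq:seq_density_operator_inequality}.
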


Before proving this lemma, we first establish some basic identities concerning
compositions of the various maps involved in
\cref{lem:halting_measurement_process,lem:perpetual_measurement_process}.

\begin{lemma}\label{lem:CP_map_identities}
  Let the CP maps $\meas_a$, $\meas_{\mathrm{cont}}$ and $\chan_a$ be defined
  as in
  \cref{lem:halting_measurement_process,lem:perpetual_measurement_process}:
  \begin{gather}
    \meas_a(\rho) = \Pi_a\rho\,\Pi_a,\\
    \meas_{\mathrm{cont}}(\rho) = \sum_i\frac{1}{m}(\id-\Pi_i)\rho(\id-\Pi_i),\\
    \chan_a(\rho) = \tr_{[a]}(\rho)\ox\frac{\id_{[a]}}{d_{[a]}}.
  \end{gather}
  In addition, for \emph{disjoint} $\Pi_a,\Pi_b$, define the CP maps
  \begin{gather}
    \meas_\Sigma = \sum_{t=0}^\infty\meas_{\mathrm{cont}}^t,\\
    \meas_{a,b}(\rho) = (\Pi_a\ox\Pi_b) \rho(\Pi_a\ox\Pi_b),
      \phantom{\sum_t^\infty}\\
    \chan_{a,b}(\rho) = \chan_a\circ\chan_b(\rho)
      = \tr_{[a]\cup[b]}(\rho)\ox\frac{\id_{[a]\cup[b]}}{d_{[a]\cup[b]}}.
  \end{gather}
  Recall that all projectors $\Pi_i$ in these definitions are assumed to
  commute. The following relations hold:
  \begin{align*}
    &\text{(i).}
      &&\tfrac{1}{m}\meas_{a_1,\dots,a_k}\circ \meas_\Sigma(\id/d)
        \leq \frac{1}{k}\meas_{a_1,\dots,a_k}(\id/d)
      &&\text{for disjoint $\Pi_{a_i}$},\\
    &\text{(ii).}
      &&\chan_a\circ\meas_a(\id/d) = \tr[\Pi_a/d]\;\frac{\id}{d}.
        \phantom{\frac{1}{k}}\\
    &\text{(iii).}
      &&\meas_a = \meas_a\circ\meas_a,\\
    &\text{(iv).}
      &&\meas_{a}\circ\meas_b = \meas_{b}\circ\meas_a
        = \meas_{a,b}
      &&\text{for disjoint $\Pi_a,\Pi_b$},
        \phantom{\frac{1}{k}}\\
    &\text{(v).}
      &&\chan_a\circ\chan_b = \chan_b\circ\chan_a
        = \chan_{a,b}
      &&\text{for disjoint $\Pi_a,\Pi_b$},
        \phantom{\frac{1}{k}}\\
    &\text{(vi).}
      &&\meas_a\circ\chan_b = \chan_b\circ\meas_a
      &&\text{for disjoint $\Pi_a,\Pi_b$},
        \phantom{\frac{1}{k}}\\
    &\text{(vii).}
      &&\meas_a\circ\meas_\Sigma = \meas_\Sigma\circ\meas_a,
        \phantom{\frac{1}{k}}
  \end{align*}
\end{lemma}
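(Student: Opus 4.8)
The plan is to verify each of the seven identities by direct computation, exploiting commutativity of the projectors throughout and handling the four "disjoint" identities by noting that disjoint projectors act on different tensor factors. I would organise the verification from the easiest to the hardest.

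First, identities (iii)--(vi) are essentially immediate. For (iii), $\meas_a\circ\meas_a(\rho) = \Pi_a(\Pi_a\rho\,\Pi_a)\Pi_a = \Pi_a\rho\,\Pi_a$ since $\Pi_a^2 = \Pi_a$. For (iv), when $\Pi_a$ and $\Pi_b$ act on disjoint sets of qudits, $\Pi_a$ and $\Pi_b$ commute (they are $\Pi_a\ox\id$ and $\id\ox\Pi_b$ on the relevant factors), so $\meas_a\circ\meas_b(\rho) = \Pi_a\Pi_b\rho\,\Pi_b\Pi_a = (\Pi_a\ox\Pi_b)\rho(\Pi_a\ox\Pi_b)$, which is symmetric in $a,b$ and equals $\meas_{a,b}$. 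Identity (v) is the same argument for the reinitialisation maps: $\chan_a$ and $\chan_b$ touch disjoint qudits, so tracing out and re-maximally-mixing $[a]$ commutes with doing the same to $[b]$, and the composite traces out $[a]\cup[b]$ and tensors in the maximally mixed state there. Identity (vi) likewise follows because $\meas_a$ acts only on qudits $[a]$ while $\chan_b$ acts only on the disjoint qudits $[b]$, so the two maps act on independent tensor factors and commute.

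Identity (vii) uses commutativity more substantially: $\meas_\Sigma = \sum_{t\ge 0}\meas_{\mathrm{cont}}^t$, and each $\meas_{\mathrm{cont}}(\rho) = \sum_i \tfrac1m(\id-\Pi_i)\rho(\id-\Pi_i)$ is a sum of conjugations by operators $\id-\Pi_i$ that all commute with $\Pi_a$ (since all $\Pi_i$ commute, hence all $\id-\Pi_i$ commute with $\Pi_a$). Therefore $\meas_a\circ\meas_{\mathrm{cont}} = \meas_{\mathrm{cont}}\circ\meas_a$ as maps, and this extends termwise to $\meas_{\mathrm{cont}}^t$ and hence to the sum $\meas_\Sigma$. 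For (ii), $\chan_a\circ\meas_a(\id/d)$: first $\meas_a(\id/d) = \Pi_a(\id/d)\Pi_a = \Pi_a/d$; then $\chan_a$ traces out $[a]$ from $\Pi_a/d$ and tensors in $\id_{[a]}/d_{[a]}$. Since $\Pi_a$ acts non-trivially only on $[a]$, we have $\Pi_a/d = (\Pi_a)_{[a]} \ox (\id/d_{\overline{[a]}})$ appropriately normalised, so $\tr_{[a]}(\Pi_a/d) = \tr[(\Pi_a)_{[a]}]\cdot\id_{\overline{[a]}}/d = \tr[\Pi_a]/d_{[a]}\cdot\id_{\overline{[a]}}/d_{\overline{[a]}}$, and tensoring back in $\id_{[a]}/d_{[a]}$ gives $\tr[\Pi_a/d]\cdot\id/d$, using $\tr[\Pi_a/d] = \tr[\Pi_a]/d$.

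The main obstacle, and the only genuinely nontrivial identity, is (i): the operator inequality $\tfrac1m\meas_{a_1,\dots,a_k}\circ\meas_\Sigma(\id/d) \le \tfrac1k\meas_{a_1,\dots,a_k}(\id/d)$ for disjoint $\Pi_{a_i}$. The plan here is to mimic the argument of \cref{lem:halting_measurement_process}. Expanding $\meas_\Sigma(\id/d) = \sum_{t\ge 0}\sum_{i_0,\dots,i_t}\tfrac{1}{m^t}(\id-\Pi_{i_t})\cdots(\id-\Pi_{i_0})\tfrac{\id}{d}(\id-\Pi_{i_0})\cdots(\id-\Pi_{i_t})$, we then apply $\meas_{a_1,\dots,a_k}(\rho) = (\Pi_{a_1}\ox\cdots\ox\Pi_{a_k})\rho(\Pi_{a_1}\ox\cdots\ox\Pi_{a_k})$. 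Because all projectors commute, we can push the product projector $\Pi_{a_1}\cdots\Pi_{a_k}$ through the $(\id-\Pi_{i_j})$ factors; any term in which some $i_j \in \{a_1,\dots,a_k\}$ is killed (since $\Pi_{a_\ell}(\id-\Pi_{a_\ell}) = 0$). The surviving terms have $i_j \notin \{a_1,\dots,a_k\}$, and for those, using commutativity and idempotence together with $(\id-\Pi_{i_t})\cdots(\id-\Pi_{i_0})(\id/d)(\id-\Pi_{i_0})\cdots(\id-\Pi_{i_t}) \le \id/d$, each such term is bounded above by $\meas_{a_1,\dots,a_k}(\id/d)$. Counting: at level $t$ there are $(m-k)^{t+1}$ surviving index tuples, each weighted by $\tfrac{1}{m}\cdot\tfrac{1}{m^t}$, wait---I need to be careful, the sub-process of \cref{lem:halting_measurement_process} runs once per iteration and the factor $\tfrac1m$ out front is the probability of picking a specific projector; in the composite the relevant counting gives $\tfrac1m\sum_{t\ge 0}\bigl(\tfrac{m-k}{m}\bigr)^t \cdot (\text{number of ways})$, and summing the geometric series $\sum_{t\ge0}\bigl(\tfrac{m-k}{m}\bigr)^t = \tfrac{m}{k}$ produces exactly the prefactor $\tfrac1k$, yielding $\tfrac1m\cdot\tfrac{m}{k}\meas_{a_1,\dots,a_k}(\id/d) = \tfrac1k\meas_{a_1,\dots,a_k}(\id/d)$. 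The delicate point I expect to need care with is the bookkeeping of which $\tfrac1m$ factors belong to $\meas_\Sigma$ versus the leading $\tfrac1m$ and the exact count of surviving terms, but the structure is a direct generalisation of the single-projector computation in \cref{eq:measurement_process}.
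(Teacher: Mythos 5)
Your proposal is correct and follows essentially the same route as the paper: parts (ii)--(vii) by the same direct/termwise arguments, and part (i) by generalising the term-killing and geometric-series computation of \cref{lem:halting_measurement_process} from one projector to a disjoint product of $k$ projectors, with the surviving $\bigl(\tfrac{m-k}{m}\bigr)^t$ weight summing to $\tfrac{m}{k}$. The bookkeeping you flag resolves exactly as you anticipate.
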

(When we write $\meas\leq\mathcal{N}$ for CP maps $\meas,\mathcal{N}$, we mean
that $\meas(X)\leq\mathcal{N}(X)$ for any positive operator $X$.)

\begin{proof}
  To prove Part~(i), we start by applying almost exactly the same argument as
  in the proof of \cref{lem:halting_measurement_process}, replacing $\Pi_a$
  with $\bigotimes_{i=1}^k\Pi_{a_i}$ and noting that this tensor product of
  projectors kills $k$ terms from the sum in \cref{eq:kill_terms}, rather than
  just the one in the proof of \cref{lem:halting_measurement_process}. Thus
  \begin{equation}\label{eq:kill_k_terms}
  \begin{split}
    \meas_{a_1,\dots,a_k}\circ\meas_{\mathrm{cont}}^t(\id/d)
    &\leq \frac{1}{m}\sum_{t=0}^\infty \sum_{i_0,\dots,i_t \neq a} \frac{1}{m^t}
      \Biggl(\bigotimes_{i=1}^k\Pi_{a_i}\Biggr)
      \frac{\id}{d} \Biggl(\bigotimes_{i=1}^k\Pi_{a_i}\Biggr)\\
    &= \left(\frac{m-k}{m}\right)^t\meas_{a_1,\dots,a_k}(\id/d)
  \end{split}
  \end{equation}
  and
  \begin{subequations}
  \begin{align}
    \frac{1}{m}\meas_{a_1,\dots,a_k}\circ\meas_\Sigma(\id/d)
    &=\frac{1}{m}\sum_{t=0}^\infty\meas_{a_1,\dots,a_k}\circ
      \meas_{\mathrm{cont}}^t(\id/d)\\
    &\leq \frac{1}{m}\sum_{t=0}^\infty \left(\frac{m-k}{m}\right)^t
      \meas_{a_1,\dots,a_k}(\id/d)\\
    &= \frac{1}{k}\meas_{a_1,\dots,a_k}(\id/d).
  \end{align}
  \end{subequations}

  Part~(ii) follows immediately from the definitions of $\chan_a$ and
  $\meas_a$. Part~(iii) is immediate from $\Pi_a^2 = \Pi_a$. Parts~(iv)
  and~(v) follow trivially from the fact that two disjoint projectors by
  definition only act non-trivially on different subsystems. Similarly, in
  Part~(vi), since $\Pi_a$ and $\Pi_b$ are disjoint, the partial trace in
  $\chan_b$ is over a subsystem on which the projector $\Pi_a$ in $\meas_a$
  acts trivially. Finally, Part~(vii) follows from the fact that all the
  $\Pi_i$ commute.
\end{proof}

We are now in a position to prove \cref{lem:perpetual_measurement_process}.
\begin{proof}[of \cref{lem:perpetual_measurement_process}]
  The unnormalised density operator $X_{\Pi_{a_1},\Pi_{a_2},\dots,\Pi_{a_t}}$
  corresponding to the process of \cref{lem:perpetual_measurement_process}
  producing the sequence $\Pi_{a_1},\Pi_{a_2},\dots,\Pi_{a_t}$ is given by
  \begin{multline}\label{eq:seq_density_operator}
    X_{\Pi_{a_1},\Pi_{a_2},\dots,\Pi_{a_t}}\\\quad
    =\chan_{a_t}\circ\tfrac{1}{m}\meas_{a_t}\circ\meas_\Sigma\circ
      \chan_{a_{t-1}}\circ\tfrac{1}{m}\meas_{a_{t-1}}\circ\meas_\Sigma
      \circ\cdots\circ
      \chan_{a_1}\circ\tfrac{1}{m}\meas_{a_1}\circ\meas_\Sigma(\id/d).
  \end{multline}

  We start by using Part~(iii) of \cref{lem:CP_map_identities} to duplicate
  every $\meas_{a_i}$ which can be commuted all the way through the expression
  to right using Parts~(iv), (vi) and~(vii). This will produce a string of
  $\meas_{a_i}$'s of the form
  $\meas_{a_i}\circ\meas_{a_j}\circ\cdots\circ\meas_{a_1}$ next to the
  inner-most $\meas_{a_1}$. Since each $\meas_{a_j}$ is preceded by a
  $\chan_{a_j}$ in \cref{eq:seq_density_operator}, and $\meas_{a_i}$ can only
  be commuted through $\chan_{a_j}$ if $\Pi_{a_i}$ and $\Pi_{a_j}$ are
  disjoint, all $\meas_{a_i}$'s in this string necessarily have disjoint
  $\Pi_{a_i}$, so we can combine them into a single
  $\meas_{a_i,a_j,\dots,a_1}$ using Part~(iv) of \cref{lem:CP_map_identities}.

  At this point, the expression has the form
  \begin{multline}
   X_{\Pi_{a_1},\Pi_{a_2},\dots,\Pi_{a_t}}\\\quad
   =\chan_{a_t}\circ\tfrac{1}{m}\meas_{a_t}\circ\meas_\Sigma\circ
      \chan_{a_{t-1}}\circ\tfrac{1}{m}\meas_{a_{t-1}}\circ\meas_\Sigma
      \circ\cdots\circ
      \chan_{a_1}\circ\tfrac{1}{m}\meas_{a_i,a_j,\dots,a_1}\circ\meas_\Sigma(\id/d).
  \end{multline}
  We can now use Part~(i) of \cref{lem:CP_map_identities} to eliminate the
  inner-most $\meas_\Sigma$:
  \begin{multline}\label{eq:eliminate1}
    \chan_{a_t}\circ\tfrac{1}{m}\meas_{a_t}\circ\meas_\Sigma\circ
      \chan_{a_{t-1}}\circ\tfrac{1}{m}\meas_{a_{t-1}}\circ\meas_\Sigma
      \circ\cdots\circ
      \chan_{a_1}\circ\tfrac{1}{m}\meas_{a_i,a_j,\dots,a_1}\circ\meas_\Sigma(\id/d)\\
    \leq \chan_{a_t}\circ\tfrac{1}{m}\meas_{a_t}\circ\meas_\Sigma\circ
      \chan_{a_{t-1}}\circ\tfrac{1}{m}\meas_{a_{t-1}}\circ\meas_\Sigma
      \circ\cdots\circ
      \chan_{a_1}\circ\tfrac{1}{k}\meas_{a_i,a_j,\dots,a_1}(\id/d),
  \end{multline}
  where $k$ is the number of $\Pi_{a_i}$'s in the $\meas_{a_i,a_j,\dots,a_1}$.

  Now we reverse the above steps. First we separate out
  $\meas_{a_i,a_j,\dots,a_1}$ again into
  $\meas_{a_i}\circ\meas_{a_j}\circ\cdots\circ\meas_{a_1}$ using Part~(iv) of
  \cref{lem:CP_map_identities}. Then we use Parts~(iv), (vi) and~(vii) to move
  all the separated $\meas_{a_i}$ (except $\meas_{a_1}$) back through the
  expression to the left, until they can be recombined using Part~(iii) with
  the $\meas_{a_i}$ that they were originally duplicated from. We are left
  with
  \begin{multline}
    X_{\Pi_{a_1},\Pi_{a_2},\dots,\Pi_{a_t}}
    \leq \chan_{a_t}\circ\tfrac{1}{m}\meas_{a_t}\circ\meas_\Sigma\circ
      \chan_{a_{t-1}}\circ\tfrac{1}{m}\meas_{a_{t-1}}\circ\meas_\Sigma
      \circ\cdots\circ
      \chan_{a_1}\circ\tfrac{1}{k}\meas_{a_1}(\id/d).
  \end{multline}
  We now use Part~(ii) of \cref{lem:CP_map_identities} to arrive at
  \begin{multline}\label{eq:seq_density_operator_reduced}
    X_{\Pi_{a_1},\Pi_{a_2},\dots,\Pi_{a_t}}\\\quad
    \leq \frac{1}{k}\tr[\Pi_{a_i}/d]\cdot
      \chan_{a_t}\circ\tfrac{1}{m}\meas_{a_t}\circ\meas_\Sigma\circ
      \chan_{a_{t-1}}\circ\tfrac{1}{m}\meas_{a_{t-1}}\circ\meas_\Sigma
      \circ\cdots\circ
      \chan_{a_2}\circ\tfrac{1}{m}\meas_{a_2}\circ\meas_\Sigma(\id/d).
  \end{multline}
  Comparing with \cref{eq:seq_density_operator}, we see that, in this way, we
  have eliminated the inner-most
  $\chan_{a_1}\circ\tfrac{1}{m}\meas_{a_1}\circ\meas_\Sigma$ terms from the
  expression, picking up a scalar factor of $1/k$ in the process which depends
  on how many $\meas_{a_i}$ could be commuted through far enough to reach the
  inner-most~$\meas_{a_1}$.

  The sequence of CP maps on the right hand side of
  \cref{eq:seq_density_operator_reduced} now has exactly the same form as the
  original \cref{eq:seq_density_operator}, but shortened by omiting the $a_1$
  terms. Thus we have obtained a recursive operator inequality for
  $X_{\Pi_{a_1},\Pi_{a_2},\dots,\Pi_{a_t}}$:
  \begin{equation}\label{eq:X_recursion}
    X_{\Pi_{a_1},\Pi_{a_2},\dots,\Pi_{a_t}}
    \leq \frac{1}{k}\tr[\Pi_{a_i}/d]\cdot X_{\Pi_{a_2},\dots,\Pi_{a_t}}.
  \end{equation}
  Applying this inequality recursively, we finally arrive at
  \begin{equation}\label{eq:combinatorial_coefficient}
    X_{\Pi_{a_1},\Pi_{a_2},\dots,\Pi_{a_t}}
    \leq \frac{1}{\mathcal{P}}\cdot\prod_{i=1}^t\tr[\Pi_i\rho]\cdot
    \frac{\id}{d},
  \end{equation}
  with a yet-to-be-determined combinatorial coefficient $1/\mathcal{P}$
  arising from the $1/k$ coefficients introduced by Part~(i) of
  \cref{lem:CP_map_identities}.

  It remains to determine the combinatorial coefficient $1/\mathcal{P}$. This
  is a product of all the $1/k$ coefficients contributed by using Part~(i) of
  \cref{lem:CP_map_identities} to eliminate an $\meas_\Sigma$. Now, each
  $\meas_\Sigma$ is eliminated by a map of the form
  $\meas_{a_\alpha,a_\beta,\dots}$ (cf.\ \cref{eq:eliminate1}). Let
  $S=\{\Pi_{a_i},\Pi_{a_j},\dots\}$ be the set of elements from the original
  sequence $\Pi_{a_1},\Pi_{a_2},\dots,\Pi_{a_t}$ that are involved in the map
  $\meas_{a_\alpha,a_\beta,\dots}$ that eliminates the $\meas_\Sigma$ in
  question. Then the elimination of $\meas_\Sigma$ by
  $\meas_{a_\alpha,a_\beta,\dots}$ contributes a factor of $1/\abs{S}$. Thus,
  to understand the combinatorial coefficient $1/\mathcal{P}$, it suffices to
  understand all the sets $S$. It will be useful in what follows to consider
  both the subset $S$ of elements from the original sequence, and the set of
  vertices $V(S)$ from the resample DAG
  $\mathfrak{G}=\mathfrak{G}(\Pi_{a_1},\Pi_{a_2},\dots,\Pi_{a_t})$ that are
  labelled by the elements of $S$.

  Each $\meas_\Sigma$ appears immediately to the right of a particular
  $\meas_{a_i}$ in the original expression of \cref{eq:seq_density_operator},
  which in turn corresponds to a particular element $\Pi_{a_i}$ of the
  original sequence. Thus each $\meas_\Sigma$ can be identified with a unique
  element $\Pi_{a_i}$ of the sequence. Consider a particular $\meas_\Sigma$,
  and let $\Pi_{a_s}$ be the element of the original sequence with which it is
  identified. Now, the map $\meas_{a_\alpha,a_\beta,\dots}$ that eliminates
  this $\meas_\Sigma$ is made up of all those $\meas_{a_{i>s}}$ lying to the
  \emph{left} of $\meas_{a_s}$ in the original expression
  \cref{eq:seq_density_operator} which could be commuted past all the
  intervening maps using Parts~(iv), (vi) and~(vii) of
  \cref{lem:CP_map_identities}. Note that $\meas_{a_i}$ can only be moved past
  a $\chan_{a_j}$ if $\Pi_{a_i}$ is disjoint from $\Pi_{a_j}$. Thus we can
  determine the set $S$ corresponding to the $\meas_{a_\alpha,a_\beta,\dots}$
  that eliminates a given $\meas_\Sigma$ just by looking at the original
  sequence alone.

  $S$ can be found by the following procedure, which simply mirrors on the
  level of the sequence the steps involved in eliminating the $\meas_\Sigma$.
  (Cf.\ the reduction of \cref{eq:seq_density_operator} to
  \cref{eq:seq_density_operator_reduced}.) Start with the final element of the
  sequence,\footnote{Recall that the maps $\meas_{a_i}$ in the composition of
    \cref{eq:seq_density_operator} are in reverse order to the corresponding
    elements $\Pi_{a_i}$ in the sequence.} and move it as far as possible
  towards the beginning of the sequence, subject to the rule that it can only
  be moved past an element if it is disjoint from that element. Repeat this
  procedure once\footnote{Once we have finished moving a particular element,
    we are not allowed to move it again later, even if subsequent moves have
    ``unblocked'' it.} for each element $\Pi_{i>s}$ that appears later in the
  sequence than $\Pi_{a_s}$, starting from the final element $\Pi_{a_t}$
  working backwards through $\Pi_{a_{i>s}}$ until $\Pi_{a_s}$. $S$ is then the
  set of all elements that have been moved to the left of $\Pi_{a_s}$,
  including $\Pi_{a_s}$ itself (but not including any elements
  $\Pi_{a_{i<s}}$, which were never moved).

  How do the elements of $S$ relate to vertices in the resample DAG
  $\mathfrak{G}$? If any descendant of $\Pi_{a_i}$ appears in between
  $\Pi_{a_s}$ and $\Pi_{a_i}$ in the sequence, then $\Pi_{a_i}$ cannot be in
  $S$. To see this, note that by \cref{def:resample_DAG} all ancestors of an
  element must appear later in the sequence than that element.\footnote{Recall
    from \cref{def:resample_DAG} that the resample DAG is constructed by
    working \emph{backwards} through the sequence.} Therefore, if $\Pi_{a_j}$
  is a descendant of $\Pi_{a_i}$, then there is at least one child (possibly
  $\Pi_{a_j}$ itself) of $\Pi_{a_i}$ in between $\Pi_{a_s}$ and $\Pi_{a_i}$.
  But by \cref{def:resample_DAG} $\Pi_{a_i}$ intersects with all its children,
  so it cannot be moved past this child.

  Thus no element of $S$ has any descendant appearing between it and
  $\Pi_{a_s}$. But, as we've already noted, any descendant of an element must
  appear earlier in the sequence than that element. Therefore, no element of
  $S$ has any descendants appearing \emph{anywhere} after $\Pi_{a_s}$ in the
  sequence. Conversely, if none of the descendents of an element $\Pi_{a_i}$
  appear after $\Pi_{a_s}$ in the sequence, then $\Pi_{a_i}$ is in $S$. This
  is because the only thing that could stop $\Pi_{a_i}$ from being moved past
  $\Pi_{a_s}$ is a projector with which it intersects appearing between it and
  $\Pi_{a_s}$, but by \cref{def:resample_DAG} any such projector will be a
  descendant of $\Pi_{a_i}$.

  $S$ is therefore the set of all elements appearing after $\Pi_{a_s}$ in the
  sequence which have no descendants amongst the elements $\Pi_{a_{i>s}}$. In
  other words, if we consider the resample DAG
  $\mathfrak{G}_s(\Pi_{a_s},\Pi_{a_{s+1}},\dots,\Pi_{a_t})$ for the
  subsequence $\Pi_{a_s},\Pi_{a_{s+1}},\dots,\Pi_{a_t}$, then the elements of
  $S$ are the leaf vertices of $\mathfrak{G}_s$. Note that $\mathfrak{G}_s$
  can be obtained from the full resample DAG
  $\mathfrak{G}(\Pi_{a_s},\Pi_{a_{s+1}},\dots,\Pi_{a_t})$ by removing the
  vertices labelled by $\Pi_{a_1},\dots,\Pi_{a_{s-1}}$.

  Recall that the elimination of the $\meas_\Sigma$ corresponding to
  $\Pi_{a_s}$ contributes a factor $1/\abs{S}$ to the combinatorial
  coefficient $1/\mathcal{P}$, and we have seen that $S$ is the set of all
  leaves remaining in the resample DAG after vertices labelled by elements
  preceding $\Pi_{a_s}$ have been removed. So $1/\abs{S}$ is just the
  probability of removing $\Pi_{a_s}$ when a leaf is picked uniformly at
  random and removed, given that vertices labelled by elements preceding
  $\Pi_{a_s}$ have already been removed from the resample DAG. There is one
  such factor $1/\abs{S}$ contributed by each element $\Pi_{a_s}$ of the
  sequence, so the overall combinatorial coefficient $1/\mathcal{P}$ in
  \cref{eq:combinatorial_coefficient}, which is the product of all these
  $1/\abs{S}$ factors, is precisely the DAG probability of
  \cref{def:DAG_probability}. This completes the proof of
  \cref{eq:seq_density_operator_inequality}.

  The probability bound of \cref{eq:seq_probability_bound} follows immediately
  from \cref{eq:seq_density_operator_inequality} and the fact that the
  probability of the obtaining the sequence corresponding to the unnormalised
  density operator $X_{a_1,a_2,\dots,a_t}$ is given by
  $\tr[X_{a_1,a_2,\dots,a_t}]$.
\end{proof}

\Cref{lem:perpetual_measurement_process} immediately implies a simple bound on
the probability of the measurement outcomes from the
\cref{lem:perpetual_measurement_process} process forming a given resample DAG.
\begin{corollary}\label{cor:resample_DAG_probability}
  Let $G$ be a fixed resample DAG with vertices $v\in V(G)$ labelled by
  $\Pi_v$, and let $\Pi_{a_1},\Pi_{a_2},\dots,\Pi_{a_t}$ be the first $t$
  measurement outcomes produced by the
  \cref{lem:perpetual_measurement_process} process. Then
  \begin{equation}
    \Pr\left[\mathfrak{G}(\Pi_{a_1},\Pi_{a_2},\dots,\Pi_{a_t}) = G\right]
    \leq \prod_{v\in V(G)}\tr[\Pi_v\rho].
  \end{equation}
\end{corollary}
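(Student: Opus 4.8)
The plan is to decompose the event according to the actual sequence of halting outcomes, bound each sequence's probability by \cref{lem:perpetual_measurement_process}, and then observe that the residual factor is a DAG probability that sums to one; the only step that takes real thought is this last one.

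First I would use that the \cref{lem:perpetual_measurement_process} process, run for $t$ iterations, always produces \emph{some} sequence of halting outcomes $\sigma=(\Pi_{a_1},\dots,\Pi_{a_t})$, and that the events ``$\sigma$ is produced'' for distinct $\sigma$ are mutually exclusive. Hence
\[
  \Pr\bigl[\mathfrak{G}(\Pi_{a_1},\dots,\Pi_{a_t}) = G\bigr]
  = \sum_{\sigma\,:\,\mathfrak{G}(\sigma) = G}\Pr(\sigma),
\]
the sum ranging over length-$t$ sequences whose resample DAG equals $G$ (if $G$ does not have exactly $t$ vertices this set is empty and there is nothing to prove). For each such $\sigma$, \cref{eq:seq_probability_bound} of \cref{lem:perpetual_measurement_process} gives $\Pr(\sigma)\le p_{\mathfrak{G}(\sigma)}(\sigma)\prod_{i=1}^{t}\tr[\Pi_{a_i}/d]$. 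Since $\mathfrak{G}(\sigma)=G$ the first factor is $p_G(\sigma)$; and since the vertices of $G$ carry exactly the labels $\Pi_{a_1},\dots,\Pi_{a_t}$ (one per sequence position) and $\rho=\id/d$ is maximally mixed, the second factor equals $\prod_{v\in V(G)}\tr[\Pi_v\rho]$, which depends only on $G$ and therefore pulls out of the sum.

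It then remains to check that $\sum_{\sigma\,:\,\mathfrak{G}(\sigma)=G}p_G(\sigma) = 1$. By \cref{def:DAG_probability}, $p_G(\cdot)$ is the output distribution of the leaf-removal process run on the fixed DAG $G$; since that process removes one vertex per step it halts after exactly $t$ steps and so does define a probability distribution, and it suffices to know that this distribution is supported on $\{\sigma:\mathfrak{G}(\sigma)=G\}$. This is the easy half of the correspondence between resample DAGs and their linearisations: a vertex becomes a leaf of the current DAG only after all its $\mathfrak{G}$-descendants have been removed, so when its label is appended every element that must precede it in the sequence (by the intersection partial order recorded in $G$) has already been emitted; hence each complete output $\sigma$ is a linear extension of $G$, and because $G$ carries an edge between \emph{every} intersecting pair, re-running the construction of \cref{def:resample_DAG} on $\sigma$ returns $G$. (This containment is exactly the corresponding classical fact for the ``resample graphs'' of \citet{KashyapSzegedy} and uses no quantum input.) Consequently $p_G$ has all its mass on $\{\sigma:\mathfrak{G}(\sigma)=G\}$, so that sum is $1$, and putting the pieces together,
\[
  \Pr\bigl[\mathfrak{G}(\Pi_{a_1},\dots,\Pi_{a_t}) = G\bigr]
  \le \Bigl(\prod_{v\in V(G)}\tr[\Pi_v\rho]\Bigr)
      \sum_{\sigma\,:\,\mathfrak{G}(\sigma)=G}p_G(\sigma)
  = \prod_{v\in V(G)}\tr[\Pi_v\rho],
\]
which is the claimed bound. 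I expect the support characterisation in the last paragraph to be the main obstacle: it is purely combinatorial, but it requires unwinding the two processes that run in opposite directions along the sequence — the leaf-removal process of \cref{def:DAG_probability} and the backwards construction of \cref{def:resample_DAG} — to see that they are compatible.
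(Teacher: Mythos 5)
Your proposal is correct and follows essentially the same route as the paper's own proof: decompose the event over sequences generating $G$, apply \cref{eq:seq_probability_bound} to each, and pull out the constant product so that the remaining sum $\sum_{\sigma}p_G(\sigma)=1$. The only difference is that you spell out the support characterisation of the leaf-removal process (that its outputs are exactly the linear extensions of $G$ and hence regenerate $G$), which the paper asserts in one line without proof — a harmless and indeed welcome elaboration.
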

\begin{proof}
  To stand any chance of producing the DAG $G$, we must have
  $\{\Pi_{a_1},\dots,\Pi_{a_t}\} = \{\Pi_v\}_{v \in V(G)}$. The probability of
  producing a given resample DAG $G$ is obtained by summing the probabilities
  of all distinct sequences that generate that DAG. Let
  \begin{equation}
    \mathcal{S}=\{s=\Pi_{a_{\sigma(1)}},\Pi_{a_{\sigma(2)}},\dots,\Pi_{a_{\sigma(t)}}
                  \mid\mathcal{G}(s)=G\}
  \end{equation}
  be the set of all sequences with resample DAG $G$. Note that this is just
  the set of all permutations $\sigma$ of $\{\Pi_v\}$ consistent with the
  partial order encoded by $G$.

  Now, every sequence $s\in\mathcal{S}$ can be generated with probability
  $p_G(s)$ by running the process of \cref{def:DAG_probability} on $G$;
  conversely, every sequence generated by running that process on $G$ is in
  $\mathcal{S}$.
  Since the process always generates \emph{some} sequence, we have
  \begin{equation}
    \sum_{s\in\mathcal{S}}p_G(s) = 1.
  \end{equation}
  Thus, summing the probabilities from \cref{eq:seq_probability_bound} of
  \cref{lem:perpetual_measurement_process}, we have
  \begin{equation}
    \begin{split}
      \Pr\left[\mathfrak{G}(\Pi_{a_1},\Pi_{a_2},\dots,\Pi_{a_t}) = G\right]
      &= \sum_{\mathclap{s:\mathfrak{G}(s)=G}}\Pr(s)
      \leq \sum_{s\in\mathcal{S}}p_G(s) \cdot
         \prod_{i=1}^t\tr[\Pi_{a_i}\rho]\\
      &= \prod_{v\in G}\tr[\Pi_v\rho] \cdot \sum_{s\in\mathcal{S}}p_G(s)
      = \prod_{v\in G}\tr[\Pi_v\rho].
    \end{split}
  \end{equation}
\end{proof}

\subsection{Partial Resample DAGs}
Later on, we will need to consider a variant of the resample DAG, constructed
almost exactly as in \cref{def:resample_DAG}, but with one key difference.
Instead of including all projectors from the sequence in the DAG, we only
include those that are reachable from the vertex labelled by the final
projector in the sequence; i.e.\ any projector that cannot be attached to an
existing vertex when constructing the DAG (cf.\ \cref{def:resample_DAG}) is
simply discarded.

\begin{definition}[Partial Resample DAG]
  \label{def:partial_resample_DAG}
  Let $\{\Pi_i\}$ be a finite set of projectors acting on a tensor product
  space $\bigotimes_i\HS_i$, and let $\Pi_{a_1},\Pi_{a_2},\dots,\Pi_{a_l}$ be
  a sequence of projectors chosen from this set. The \keyword{partial resample
    DAG} is the subgraph of the resample DAG
  $\mathfrak{G}(\Pi_{a_1},\Pi_{a_2},\dots,\Pi_{a_l})$ which consisting of the
  subset of vertices that are reachable from the vertex labelled by
  $\Pi_{a_l}$, and all edges that begin and end at vertices in this subset.
\end{definition}

The \keyword{partial resample DAG}
$\mathfrak{g}(\Pi_{a_1},\Pi_{a_2},\dots,\Pi_{a_l})$ can be constructed from
the sequence as follows. Start with the final projector $\Pi_{a_l}$ in the
sequence. Create a root vertex and label it with $\Pi_{a_l}$. Then, working
\emph{backwards} through the sequence, for each projector $\Pi_{a_{i<l}}$ find
the set $L_i$ of all vertices labelled by projectors that intersect with
$\Pi_{a_i}$. If $L_i$ is empty, skip this projector. Otherwise, create a new
vertex, labelling it by $\Pi_{a_i}$, and attach it to the DAG by creating
directed edges from each element of $L_i$ to the new vertex.

\begin{definition}[Relevant Subsequence]
  \label{def:partial_DAG_subsequence}
  Let $\Pi_{a_1},\Pi_{a_2},\dots,\Pi_{a_t}$ be a sequence of projectors, and
  $\mathfrak{g}(\Pi_{a_i},\Pi_{a_2},\dots,\Pi_{a_t})$ the corresponding
  partial resample DAG. The \keyword{relevant subsequence} of
  $\Pi_{a_1},\Pi_{a_2},\dots,\Pi_{a_t}$ is the subsequence obtained by
  discarding from the sequence all elements that get discareded when
  $\mathfrak{g}(\Pi_{a_i},\Pi_{a_2},\dots,\Pi_{a_t})$ is constructed according
  to \cref{def:partial_DAG_probability}. Note that two sequences with the same
  relevant subsequence necessarily generate the same partial resample DAG.
\end{definition}

\begin{definition}[Partial DAG Probability]
  \label{def:partial_DAG_probability}
  Let $\{\Pi_i\}$ be a finite set of projectors acting on a tensor product
  space $\bigotimes_i\HS_i$, and let $\mathfrak{g}$ be a partial resample DAG
  over these projectors. The \keyword{partial DAG probability} with respect to
  $\mathfrak{g}$ of the sequence $\Pi_{a_1},\Pi_{a_2},\dots,\Pi_{a_l}$,
  denoted $p_\mathfrak{g}(\Pi_{a_1},\Pi_{a_2},\dots,\Pi_{a_l})$, is the
  probability that the following process generates the sequence. Starting with
  the partial resample DAG $\mathfrak{g}$ and an empty initial sequence, pick
  one of the leaf vertices uniformly at random and append its label to the end
  of the sequence, removing the vertex from the DAG. Repeat this procedure
  until no vertices are left.
\end{definition}

The analogous result to \cref{cor:resample_DAG_probability} also holds for
partial resample DAGs.
\begin{lemma}\label{lem:partial_resample_DAG}
  Let $g$ be a fixed partial resample DAG with vertices $v\in V(g)$
  labelled by projectors $\Pi_v$, and let
  $\Pi_{a_1},\Pi_{a_2},\dots,\Pi_{a_t}$ be the first length-$t$ relevant
  subsequence of outcomes from the \cref{lem:perpetual_measurement_process}
  process. Then
  \begin{equation}
    \Pr[\mathfrak{g}(\Pi_{a_1},\Pi_{a_2},\dots,\Pi_{a_t}) = g]
    \leq \prod_{v\in V(g)}\tr[\Pi_v\rho].
  \end{equation}
\end{lemma}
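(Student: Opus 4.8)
The plan is to follow the template of the proof of \cref{cor:resample_DAG_probability}, the genuinely new ingredient being a way to account for the projectors that get discarded when a partial resample DAG is built. Fix the partial resample DAG $g$, with vertex set $V(g)$. The key step is to prove a partial analogue of \cref{lem:perpetual_measurement_process}: for a fixed label sequence $\sigma=(\Pi_{a_1},\dots,\Pi_{a_t})$ which is a linear extension of $g$ (equivalently, satisfies $\mathfrak g(\sigma)=g$), the probability that the relevant subsequence produced by the \cref{lem:perpetual_measurement_process} process equals $\sigma$ is at most
\[
  p_{\mathfrak g(\sigma)}(\sigma)\;\prod_{v\in V(g)}\tr[\Pi_v/d].
\]
Granting this, the lemma follows just as \cref{cor:resample_DAG_probability} followed from \cref{lem:perpetual_measurement_process}: summing over the linear extensions $\sigma$ of $g$, which are exactly the label sequences with $\mathfrak g(\sigma)=g$, gives $\Pr[\mathfrak g(\dots)=g]\le\prod_{v\in V(g)}\tr[\Pi_v/d]\cdot\sum_\sigma p_g(\sigma)=\prod_{v\in V(g)}\tr[\Pi_v/d]$, since the leaf-removal procedure of \cref{def:partial_DAG_probability} always terminates having produced exactly one such extension, so $\sum_\sigma p_g(\sigma)=1$.

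To prove the intermediate bound I would re-run, on an arbitrary full outcome sequence $(\Pi_{b_1},\dots,\Pi_{b_N})$ whose relevant subsequence is $\sigma$, the operator-inequality recursion from the proof of \cref{lem:perpetual_measurement_process}. The crucial structural fact is that a discarded outcome is, by \cref{def:partial_resample_DAG}, disjoint from every relevant outcome that occurs after it. Consequently, in the composition \cref{eq:seq_density_operator}, the measurement map of a discarded outcome never obstructs the commutation moves (Parts~(iv), (vi), (vii) of \cref{lem:CP_map_identities}) by which a relevant measurement $\meas_{a_k}$ reaches the innermost $\meas_\Sigma$ in order to eliminate it, and a discarded $\chan$-map always bubbles down to the innermost position, where it acts as the identity on $\id/d$. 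Hence the recursion proceeds essentially as in \cref{lem:perpetual_measurement_process}, but the sets whose reciprocal sizes get multiplied in are now the leaf sets of the \emph{partial} sub-DAGs (discarded vertices do not obstruct), so the combinatorial coefficient it throws off is precisely the partial-DAG leaf-removal probability $p_{\mathfrak g(\sigma)}(\sigma)$, while a trace factor $\tr[\Pi_{b_j}/d]$ is still accumulated for every outcome, including the discarded ones.

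I expect the main obstacle to be the convergence bookkeeping hidden in the last clause: one must sum $p_{\mathfrak G(\Pi_{b_1},\dots,\Pi_{b_N})}(\dots)\prod_{j}\tr[\Pi_{b_j}/d]$ over the infinitely many full outcome sequences that restrict to the fixed relevant subsequence $\sigma$ — sequences differing in how many discarded measurements intervene, where, and with which labels — and show the total is at most $p_{\mathfrak g(\sigma)}(\sigma)\prod_{v\in V(g)}\tr[\Pi_v/d]$. The natural way to organise this is to note that conditioning a full-DAG leaf-removal run on the order in which it removes the \emph{relevant} vertices gives back exactly the partial-DAG leaf-removal distribution (a discarded removal never unblocks a relevant vertex, since relevant vertices are closed under predecessors), so the sum factorises into $p_{\mathfrak g(\sigma)}(\sigma)$ times a sum over discarded-vertex configurations weighted by their $\tr[\Pi/d]$ factors; making sure those discarded-outcome factors are exactly what is needed to control that residual sum — and interacting correctly with the precise meaning of ``the first length-$t$ relevant subsequence'' — is the step that is not a routine transcription of arguments already in the paper. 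Everything else is.
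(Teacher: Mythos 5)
Your skeleton is the paper's: first bound the probability that a \emph{fixed} sequence $\sigma$ is the first length-$t$ relevant subsequence by $p_{\mathfrak g(\sigma)}(\sigma)\prod_v\tr[\Pi_v/d]$, then sum over the linear extensions of $g$ using $\sum_\sigma p_g(\sigma)=1$; and your structural observation that a discarded outcome is disjoint from every later relevant outcome (so it never obstructs the commutation moves, and the combinatorial factors become partial-DAG leaf probabilities) is exactly the one the paper uses. But the step you yourself flag as the obstacle --- how the discarded outcomes are actually absorbed --- is the one genuinely new ingredient of this proof, and the route you sketch for it is not the paper's and is not closed. You propose to accumulate a factor $\tr[\Pi_{b_j}/d]$ for every discarded outcome and then control a residual sum over all configurations of discarded measurements. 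That requires an operator inequality of the form $\chan_x\circ\meas_x(\sigma)\leq\tr[\Pi_x/d]\,\sigma'$ at \emph{intermediate} states $\sigma$ (Part~(ii) of \cref{lem:CP_map_identities} only gives this for $\sigma=\id/d$), plus a verification that the resulting per-iteration weight still sums geometrically to $1/k$ with $k$ counting only relevant outcomes. You have not supplied either, and the second is delicate precisely because the discarded branches and the $\id-\Pi$ branches interleave arbitrarily.

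The paper sidesteps all of this: no trace factor is ever accumulated for a discarded outcome. Instead, the branch ``measure $\id-\Pi_x$'' and the branch ``measure $\Pi_x$ then resample $[x]$'' are combined into a single trace-preserving CP map $\mathcal{T}_x(\sigma)=(\id-\Pi_x)\sigma(\id-\Pi_x)+\chan_x\circ\meas_x(\sigma)$ for each $\Pi_x\in\mathcal{X}_i$, and the whole analysis is carried out after a partial trace over $[\mathcal{X}_i]$. Since $\mathcal{T}_x$ is trace-preserving and supported on $[x]$, which is disjoint from the relevant projectors, it simply disappears under $\tr_{[\mathcal{X}_i]}$ (\cref{eq:Ta_identity}), contributing an exact factor $\Abs{\mathcal{X}_i}/m$ per iteration with no loss; combined with the $(m-k-\Abs{\mathcal{X}_i})/m$ from the surviving $\id-\Pi$ terms this gives $(m-k)/m$ per step and hence exactly $1/k$ per eliminated $\meas_\Sigma^{(i)}$ (\cref{lem:another_CP_map_identity}). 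That lemma is the missing piece of your argument; without it (or a worked-out substitute along the lines you gesture at), the intermediate per-sequence bound --- and hence the lemma --- is not established.
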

\begin{proofpart}
  The proof is similar to the proofs of
  \cref{lem:perpetual_measurement_process,cor:resample_DAG_probability},
  but some of the outcomes generated by the
  \cref{lem:perpetual_measurement_process} process might not be irrelevant
  because they get discarded when constructing a partial resample DAG. We
  will first bound the probability of a given sequence
  $\Pi_{a_1},\Pi_{a_2},\dots,\Pi_{a_t}$ forming the first length-$t$ relevant
  subsequence. We can then sum over all relevant subsequences that generate
  the specified partial resample DAG $g$ to bound the overall probability
  of producing $g$.

  Previously, the map $\meas_\Sigma$ allowed for any number of $\id-\Pi$
  measurement outcomes in \cref{lem:perpetual_measurement_process}. We must
  now also allow for any number of irrelevant outcomes. A projector is
  discarded if and only if it does not intersect with any projector that
  occurs \emph{later} in the sequence.\footnote{Once again, recall that the
    partial resample DAG is constructed \emph{backwards}.} So, between any
  two elements $\Pi_{a_{i-1}}$ and $\Pi_{a_i}$ of the sequence
  $\Pi_{a_1},\Pi_{a_2},\dots,\Pi_{a_t}$, we can have any number of outcomes
  $\Pi_x$ such that $\Pi_x$ is disjoint from all $\Pi_{a_{j>i}}$, as well as
  any number of $\id-\Pi$ outcomes.

  In other words, if $\mathcal{X}_i$ is the set of irrelevant outcomes that
  will be discarded from the DAG if they occur between elements
  $\Pi_{a_{i-1}}$ and $\Pi_{a_i}$ in the sequence, then the generalised
  measurement element corresponding to a single iteration of the
  \cref{lem:halting_measurement_process} sub-process \emph{not} producing a
  relevant outcome is given by
  \begin{equation}
    \begin{split}
      \meas_{\mathrm{cont}}^{(i)}(\sigma)
      &= \sum_j\frac{1}{m}(\id-\Pi_j)\sigma(\id-\Pi_j)
        + \sum_{\mathclap{\Pi_a\in\mathcal{X}_i}}\chan_a\circ
          \tfrac{1}{m}\meas_a(\sigma)\\
      &= \frac{1}{m}\sum_{\Pi_a\not\in\mathcal{X}_i}(\id-\Pi_a)\sigma(\id-\Pi_a)
        + \frac{1}{m}\sum_{\Pi_a\in\mathcal{X}_i}\mathcal{T}_a(\sigma)
    \end{split}
  \end{equation}
  where the trace-preserving CP map
  \begin{equation}
    \begin{split}
      \mathcal{T}_a(\sigma)
      &= (\id-\Pi_a)\sigma(\id-\Pi_a) + \chan_a\circ\meas_a\\
      &= (\id-\Pi_a)\sigma(\id-\Pi_a)
        + \rho_{[a]}\otimes\tr_{[a]}(\Pi_a\sigma\Pi_a).
    \end{split}
  \end{equation}
  The map corresponding to any number of irrelevant outcomes occurring is then
  given by
  \begin{equation}
    \meas_\Sigma^{(i)}(\sigma)
    = \sum_{t=0}^\infty(\meas_{\mathrm{cont}}^{(i)})^t(\sigma).
  \end{equation}
  We now prove a version of Part~(i) of \cref{lem:CP_map_identities} for
  the map $\meas_\Sigma^{(i)}$.
\end{proofpart}

\begin{lemma}\label{lem:another_CP_map_identity}
  Let $\Pi_{a_\alpha},\Pi_{a_\beta},\dots$ be disjoint from each other
  \emph{and} from all $\Pi_x\in\mathcal{X}_i$, and let $\tr_{[\mathcal{X}_i]}$
  denote the partial trace over all qudits on which some
  $\Pi_x\in\mathcal{X}_i$ acts
  non-trivially. Then
  \begin{equation}
    \tr_{[\mathcal{X}_i]}\left[
      \tfrac{1}{m}\meas_{a_\alpha,a_\beta,\dots}\circ
      (\meas_\Sigma^{(i)})^t(\id/d) \right]
    \leq\frac{1}{k}\tr_{[\mathcal{X}_i]}[\meas_{a_\alpha,a_\beta,\dots}(\id/d)].
  \end{equation}
\end{lemma}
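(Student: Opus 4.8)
\section*{Proof sketch}

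The plan is to replay, nearly verbatim, the proof of Part~(i) of \cref{lem:CP_map_identities}, the one genuinely new input being that the reinitialisation maps $\chan_x$ are invisible to the partial trace $\tr_{[\mathcal{X}_i]}$. Write $\Pi_A=\bigotimes_{j}\Pi_{a_j}$, so that $\meas_{a_\alpha,a_\beta,\dots}(\cdot)=\Pi_A(\cdot)\Pi_A$, $k$ is the number of factors in $\Pi_A$, $\meas_{a_\alpha,a_\beta,\dots}(\id/d)=\Pi_A/d$, and $\Pi_A$ commutes with every $\Pi_i$ (all the $\Pi_i$ commute). The first step is to commute $\Pi_A$ inward through $\meas_\Sigma^{(i)}$: $\Pi_A$ commutes with each $\mathcal{N}_b$ summand of $\meas^{(i)}_{\mathrm{cont}}$ automatically, and it commutes past each $\mathcal{T}_x$ summand precisely because the hypothesis says $\Pi_{a_\alpha}$ is disjoint from every $\Pi_x\in\mathcal{X}_i$, so $\Pi_A$ acts trivially on the qudits reinitialised by $\chan_x$ --- this is exactly what that hypothesis is for. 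Since in addition $\Pi_A(\id-\Pi_{a_\alpha})=0$, conjugation by $\Pi_A$ annihilates the $k$ summands $\mathcal{N}_{a_\alpha},\dots$ Hence $\tfrac1m\meas_{a_\alpha,a_\beta,\dots}\circ\meas_\Sigma^{(i)}(\id/d)=\tfrac1m\widetilde\meas^{(i)}_\Sigma(\Pi_A/d)$, where $\widetilde\meas^{(i)}_\Sigma=\sum_{s\ge0}(\widetilde\meas^{(i)}_{\mathrm{cont}})^s$ (convergence as in \cref{lem:perpetual_measurement_process}) and $\widetilde\meas^{(i)}_{\mathrm{cont}}=\tfrac1m\sum_{\ell=1}^{m-k}\mathcal{F}_\ell$ is the average of the $m-k$ surviving summands, each $\mathcal{F}_\ell$ being an $\mathcal{N}_b$ (for some $b$ not among the $a_\alpha$) or an $\mathcal{T}_x$ (for some $x\in\mathcal{X}_i$). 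The claim thus reduces to $\tr_{[\mathcal{X}_i]}[\widetilde\meas^{(i)}_\Sigma(\Pi_A/d)]\le\tfrac mk\,\tr_{[\mathcal{X}_i]}[\Pi_A/d]$.

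To get this I would expand $(\widetilde\meas^{(i)}_{\mathrm{cont}})^s=m^{-s}\sum\mathcal{F}_{\ell_s}\circ\cdots\circ\mathcal{F}_{\ell_1}$ into its $(m-k)^s$ strings and show, for each string,
\[
  \tr_{[\mathcal{X}_i]}\bigl[\mathcal{F}_{\ell_s}\circ\cdots\circ\mathcal{F}_{\ell_1}(\Pi_A/d)\bigr]\;\le\;\tr_{[\mathcal{X}_i]}[\Pi_A/d];
\]
summing over strings and then over $s$ then yields $\sum_s\bigl(\tfrac{m-k}m\bigr)^s\tr_{[\mathcal{X}_i]}[\Pi_A/d]=\tfrac mk\,\tr_{[\mathcal{X}_i]}[\Pi_A/d]$, as required. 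Two facts feed the displayed inequality. (a)~The partial trace neutralises every reinitialisation: for $x\in\mathcal{X}_i$ one has $[x]\subseteq[\mathcal{X}_i]$, hence $\tr_{[\mathcal{X}_i]}\circ\chan_x=\tr_{[\mathcal{X}_i]}$ and therefore $\tr_{[\mathcal{X}_i]}\circ\mathcal{T}_x=\tr_{[\mathcal{X}_i]}$ --- once its qudits have been discarded an irrelevant outcome is free, which is why one traces out $[\mathcal{X}_i]$ rather than taking the full trace. (b)~At the operator level $\Pi_A/d$ commutes with every $\Pi_i$, so $\mathcal{N}_b(\Pi_A/d)=(\id-\Pi_b)(\Pi_A/d)\le\Pi_A/d$, and a one-line computation (using that $\Pi_x$ is supported away from $\Pi_A$) gives $\chan_x\circ\meas_x(\Pi_A/d)=(\rank\Pi_x/d_{[x]})\,\Pi_A/d\le\Pi_A/d$.

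The main obstacle is that (a) and (b) want the factors of a string applied in opposite orders: (a) is usable only when an $\mathcal{T}_x$ is the outermost surviving map, whereas (b) keeps the running state dominated by $\Pi_A/d$ only as long as we apply $\mathcal{N}_b$'s and $\chan_x\meas_x$'s to an operator that is itself still $\le\Pi_A/d$ --- and an $\mathcal{N}_b$ with $b\notin\mathcal{X}_i$ whose support \emph{straddles} the boundary of $[\mathcal{X}_i]$, applied after a $\mathcal{T}_x$, need not preserve that domination. The proof must therefore, string by string, commute the $\chan_x\meas_x$ contributions outward past the intervening $\mathcal{N}_b$'s: this is immediate whenever the two supports are disjoint, and the residual case $[b]\cap[x]\neq\varnothing$ has to be dispatched by a direct computation that leans on the mutual commutativity of all $\{\Pi_i\}$ (in particular on $\Pi_b$ being block-diagonal with respect to $\Pi_x$). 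Once every $\chan$-factor has been pushed to the outside, (a) collapses it and one is left with a composition of $\mathcal{N}_b$'s acting on $\Pi_A/d$, which (b) finishes at a stroke. I expect this commuting-and-collapsing bookkeeping --- not the arithmetic of killing the $k$ summands and summing the geometric series --- to be essentially the whole content of the proof.
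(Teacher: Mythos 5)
Your skeleton --- kill the $k$ summands by commuting $\Pi_A=\bigotimes_j\Pi_{a_j}$ inward, absorb the irrelevant-outcome maps $\mathcal{T}_x$ into $\tr_{[\mathcal{X}_i]}$ because they are trace-preserving and supported inside $[\mathcal{X}_i]$, and sum the geometric series $\sum_s\bigl(\tfrac{m-k}{m}\bigr)^s=\tfrac{m}{k}$ --- is the paper's. The gap is the step you yourself flag as ``essentially the whole content of the proof'': pushing each $\mathcal{T}_x$ outward past the intervening conjugations by $\id-\Pi_b$. When $[b]\cap[x]\neq\emptyset$ this is not an identity, and no ``direct computation'' rescues it: the reinitialisation $\chan_x$ inside $\mathcal{T}_x$ destroys exactly the correlations on $[x]$ that $\Pi_b$ sees, and commutativity of $\Pi_b$ with $\Pi_x$ does not help. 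Concretely, on two qubits take $\Pi_x=\proj{0}\ox\id$ and $\Pi_b=\proj{00}+\proj{11}$ (these commute, and $[b]\cap[x]=\{1\}$), with $\sigma=\proj{00}$: then $\mathcal{T}_x(\sigma)=\tfrac12(\proj{00}+\proj{10})$, so $(\id-\Pi_b)\,\mathcal{T}_x(\sigma)\,(\id-\Pi_b)=\tfrac12\proj{10}$, whereas $\mathcal{T}_x\bigl((\id-\Pi_b)\sigma(\id-\Pi_b)\bigr)=0$. The two orders disagree even after $\tr_{[x]}$, and in the direction that hurts you: the true string value strictly exceeds the rearranged one, so replacing a string by its rearrangement is not even an upper bound. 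Your per-string inequality is therefore unestablished, and it is in any case a strictly stronger claim than the lemma needs.

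The paper never needs the rearrangement because it peels one layer of $\meas_{\mathrm{cont}}^{(i)}$ at a time from the \emph{outside}, by induction on $t$. Writing $\sigma=(\meas_{\mathrm{cont}}^{(i)})^{t-1}(\id/d)$, each summand of the outermost layer is dealt with before anything inside $\sigma$ is touched: a $\mathcal{T}_a$ summand with $a\in\mathcal{X}_i$ sits directly between $\meas_{a_\alpha,a_\beta,\dots}$ and $\sigma$, commutes with $\meas_{a_\alpha,a_\beta,\dots}$ because their supports are genuinely disjoint (this is what the hypothesis buys you), and is then absorbed exactly by $\tr_{[\mathcal{X}_i]}$, contributing $\tfrac{\abs{\mathcal{X}_i}}{m}$ times the $(t-1)$-quantity; the $k$ dead summands vanish; and the remaining $m-\abs{\mathcal{X}_i}-k$ summands are bounded away as in \cref{eq:kill_k_terms}. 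The coefficients add to $\tfrac{m-k}{m}$ and the recursion closes. The fix to your argument is thus not a cleverer commutation lemma but a different order of operations: collapse each $\mathcal{T}_x$ at the moment it is outermost, so that no conjugation by $\id-\Pi_b$ ever comes between a $\mathcal{T}_x$ and the partial trace.
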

\begin{proof}
  First, note that, since $\Pi_{a_1},\dots,\Pi_{a_k}$ are disjoint from each
  other and from $\Pi_x$, we have
  \begin{equation}\label{eq:Ta_identity}
    \begin{split}
      \tr_{[x]}\left[
        \meas_{a_\alpha,a_\beta,\dots}\circ\mathcal{T}_x(\sigma)
      \right]
      &=\tr_{[x]}\left[
        \mathcal{T}_x\circ\meas_{a_\alpha,a_\beta,\dots}(\sigma)
      \right]\\
      &=\tr_{[x]}\left[\meas_{a_\alpha,a_\beta,\dots}(\sigma)\right],
    \end{split}
  \end{equation}
  since $\mathcal{T}_x$ is trace-preserving and only acts non-trivially on
  $[X]$.

  Now,
  \begin{multline}\label{eq:TrX_induction}
    \tr_{[\mathcal{X}_i]}\left[
      \meas_{a_\alpha,a_\beta,\dots}\circ(\meas_{\mathrm{cont}}^{(i)})^t(\id/d)
    \right]\\
    =\tr_{[\mathcal{X}_i]}\Biggl[
       \meas_{a_\alpha,a_\beta,\dots}\biggl(
       \frac{1}{m}\sum_{\Pi_a\not\in\mathcal{X}_i}
       (\id-\Pi_a)(\meas_{\mathrm{cont}}^{(i)})^{t-1}(\id/d)(\id-\Pi_a)
     \biggr)\\
    +\frac{1}{m}\sum_{\Pi_a\in\mathcal{X}_i}
      \meas_{a_\alpha,a_\beta,\dots}\circ\mathcal{T}_a\circ
      (\meas_{\mathrm{cont}}^{(i)})^{t-1}(\id/d)
    \Biggl].
  \end{multline}
  Since $\Pi_{a_\alpha},\Pi_{a_\beta},\dots$ are disjoint from all
  $\Pi_a\in\mathcal{X}_i$, all the terms involving projectors that intersect
  with $\Pi_{a_\alpha},\Pi_{a_\beta},\dots$ occur in the first sum. The first
  sum can therefore be treated exactly as in \cref{eq:kill_k_terms} from
  \cref{lem:CP_map_identities}, giving
  \begin{multline}
    \tr_{[\mathcal{X}_i]}\Biggl[
      \meas_{a_\alpha,a_\beta,\dots}\biggl(
        \frac{1}{m}\sum_{\Pi_a\not\in\mathcal{X}_i}
        (\id-\Pi_a)(\meas_{\mathrm{cont}}^{(i)})^{t-1}(\id/d)(\id-\Pi_a)
      \biggr)
    \Biggr]\\
    \leq\tr_{[\mathcal{X}_i]}\Biggl[
      \frac{m-\abs{\mathcal{X}_i}-k}{m}\;\meas_{a_\alpha,a_\beta,\dots}
      \circ(\meas_{\mathrm{cont}}^{(i)})^{t-1}(\id/d)
    \Biggr].
  \end{multline}
  Using \cref{eq:Ta_identity}, the second sum simplifies to
  \begin{subequations}
  \begin{align}
    \tr_{[\mathcal{X}_i]}&\Biggl[
      \frac{1}{m}\sum_{\Pi_a\in\mathcal{X}_i}
      \meas_{a_\alpha,a_\beta,\dots}\circ\mathcal{T}_a\circ
      (\meas_{\mathrm{cont}}^{(i)})^{t-1}(\id/d)
    \Biggl]\notag\\
    &=\tr_{[\mathcal{X}_i]}\Biggl[
      \frac{1}{m}\sum_{\Pi_a\in\mathcal{X}_i}
      \meas_{a_\alpha,a_\beta,\dots}\circ
      (\meas_{\mathrm{cont}}^{(i)})^{t-1}(\id/d)
    \Biggl]\\
    &=\tr_{[\mathcal{X}_i]}\Biggl[
      \frac{\abs{\mathcal{X}_i}}{m}
      \meas_{a_\alpha,a_\beta,\dots}\circ
      (\meas_{\mathrm{cont}}^{(i)})^{t-1}(\id/d)
    \Biggl].
  \end{align}
  \end{subequations}
  Putting these together, we obtain
  \begin{equation}
    \tr_{[\mathcal{X}_i]}\left[
      \meas_{a_\alpha,a_\beta,\dots}\circ
      (\meas_{\mathrm{cont}}^{(i)})^t(\id/d)
    \right]
    \leq\tr_{[\mathcal{X}_i]}\Biggl[
      \frac{m-k}{m}\meas_{a_\alpha,a_\beta,\dots}\circ
      (\meas_{\mathrm{cont}}^{(i)})^{t-1}(\id/d)
    \Biggr].
  \end{equation}
  By induction on $t$ (the base case $t=0$ is trivial) and summing over $t$,
  we obtain the identity in the lemma.
\end{proof}

\begin{proof}[of \cref{lem:partial_resample_DAG}, continued]\hfill\newline
  The probability of obtaining $\Pi_{a_1},\Pi_{a_2},\dots,\Pi_{a_t}$ for the
  first $t$ relevant outcomes is given by
  \begin{multline}
    \Pr(\Pi_{a_1},\Pi_{a_2},\dots,\Pi_{a_t})\\
    =\tr\Bigl[
       \chan_{a_t}\circ\tfrac{1}{m}\meas_{a_t}\circ
       \meas_\Sigma^{(t)}\circ
       \chan_{a_{t-1}}\circ\tfrac{1}{m}\meas_{a_{t-1}}\circ
       \meas_\Sigma^{(t-1)}\circ\cdots\\
       \cdots\circ\chan_{a_1}\circ\tfrac{1}{m}\meas_{a_1}\circ
       \meas_\Sigma^{(1)}(\rho)
     \Bigr].
  \end{multline}
  (Compare with \cref{eq:seq_density_operator} from
  \cref{lem:perpetual_measurement_process}. The $\meas_\Sigma^{(i)}$ now also
  account for outcomes that are discarded when constructing the partial
  resample DAG.)

  We eliminate the $\meas_\Sigma^{(i)}$ from this expression in much the same
  way as in \cref{lem:perpetual_measurement_process}, using
  \cref{lem:another_CP_map_identity} instead of \cref{lem:CP_map_identities},
  Part~(i). Since a projector is discarded when constructing the DAG if and
  only if it is disjoint from \emph{all} projectors that occur \emph{later} in
  the sequence, they do not affect which $\meas_{a_i}$'s can be commuted all
  the way to the right, so iterating the same sequence of steps again leads to
  a recursive operator inequality of the form
  \begin{equation}
    X_{\Pi_{a_1},\Pi_{a_2},\dots,\Pi_{a_t}}
    \leq \frac{1}{k}\tr[\Pi_{a_i}/d]\cdot X_{\Pi_{a_2},\dots,\Pi_{a_t}}.
  \end{equation}
  The only difference as compared to the inequality in \cref{eq:X_recursion}
  lies in the combinatorial factor $1/k$, which now only counts relevant
  measurement outcomes.

  The combinatorial factors arising from \cref{lem:another_CP_map_identity}
  are otherwise identical to those of \cref{lem:CP_map_identities}, Part~(i),
  leading to an overall factor given this time by the \emph{partial} DAG
  probability $p_\mathfrak{g}(\Pi_{a_1},\Pi_{a_2},\dots,\Pi_{a_t})$ with
  $\mathfrak{g} = \mathfrak{g}(\Pi_{a_1},\Pi_{a_2},\dots,\Pi_{a_t})$. Thus we
  finally end up with the following bound on the probability of obtaining
  $\Pi_{a_1},\Pi_{a_2},\dots,\Pi_{a_t}$ as the first $t$ outcomes that are not
  discarded from the partial resample DAG:
  \begin{equation}
    \Pr(\Pi_{a_1},\Pi_{a_2},\dots,\Pi_{a_t})
    \leq p_\mathfrak{g}(\Pi_{a_1},\Pi_{a_2},\dots,\Pi_{a_t})\cdot
    \prod_{i=1}^t\tr[\Pi_{a_i}\rho].
  \end{equation}

  As in \cref{cor:resample_DAG_probability}, when we sum over all relevant
  subsequences that generate the specified partial resample DAG $g$, the
  combinatorial coefficients $p_g$ sum to 1 and we arrive at the bound claimed
  in the lemma.
\end{proof}

\subsection{Expected number of violations}
We are now in a position to re-prove the bound of
\cref{thm:expected_violations} on the expected number of violations seen by
\cref{alg:quantum}. Recall that \cref{alg:quantum} keeps a log of violated
measurements $\Pi_{a_1},\Pi_{a_2},\dots,\Pi_{a_t}$, whose only purpose is to
aid in the analysis.
We say that a given partial resample DAG $g$ \keyword{occurs} in
\cref{alg:quantum}'s log if it can be constructed starting from some entry of
the log, i.e.\ if for some $l$ $\mathfrak{g}(\Pi_{a_1},\dots,\Pi_{a_l}) = g$.

The process implemented by \cref{alg:quantum} is just the iterated measurement
process defined in \cref{lem:perpetual_measurement_process}. So
\cref{lem:partial_resample_DAG} immediately implies the following result:
\begin{corollary}\label{cor:partial_DAG_probability}
  The probability that the partial resample DAG $g$ occurs in the log is
  at most $\prod_{v\in V(g)} R(\Pi_v)$, where $V(g)$ is the vertex set of $g$,
  $\Pi_v$ is the projector labelling vertex $v$, and the relative dimension
  $R(\Pi_v)$ is equivalent to the probability of measuring $\Pi_v$ on the
  maximally mixed state.
\end{corollary}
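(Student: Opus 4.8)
The plan is to obtain the corollary by specialising \cref{lem:partial_resample_DAG} to the maximally mixed input state. First I would note that \cref{alg:quantum}, run on $\{\Pi_i\}$, is exactly the iterated measurement process of \cref{lem:perpetual_measurement_process} with $\rho=\id/d$: the initialisation loop ``$\forall i:\ket{\alpha_i}\gets P$'' prepares $\id/d$ on all $n$ qudits; one pass of the main loop (pick $\Pi_i$ at random, measure, and on a violation resample the qudits $[i]$) realises a single run of the \cref{lem:halting_measurement_process} sub-process followed, on halting with outcome $\Pi_a$, by the re-initialisation map $\chan_a$; and the execution log is precisely the sequence $\Pi_{a_1},\Pi_{a_2},\dots$ of halting outcomes. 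Moreover $\tr[\Pi_v\,\id/d]=\rank(\Pi_v)/d=R(\Pi_v)$ is exactly the probability of obtaining outcome $\Pi_v$ when $\Pi_v$ is measured on the maximally mixed state, so the bound of \cref{lem:partial_resample_DAG} becomes $\prod_{v\in V(g)}R(\Pi_v)$ in this case.

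The remaining work is to check that ``$g$ occurs in the log'' is exactly the event to which \cref{lem:partial_resample_DAG} applies. By definition $g$ occurs iff $\mathfrak g(\Pi_{a_1},\dots,\Pi_{a_l})=g$ for some $l$, and this DAG depends on the prefix only through its relevant subsequence (\cref{def:partial_DAG_subsequence}), which must then have length $|V(g)|$ and generate $g$. I would then establish --- this is the quantum counterpart of the classical fact that a fixed witness object appears at most once in the Moser--Tardos log --- that in any single run there is at most one such $l$. Indeed, the root of $\mathfrak g(\Pi_{a_1},\dots,\Pi_{a_l})$ is its unique sink and is labelled $\Pi_{a_l}$, so if $g$ has root label $\Pi_r$ then necessarily $\Pi_{a_l}=\Pi_r$; and if two positions $l_1<l_2$ both had this property, then running the backward construction of \cref{def:partial_resample_DAG} from position $l_2$ would attach the vertex at position $l_1$ (its label $\Pi_r$ intersects the root label $\Pi_r$) together with the whole sub-DAG reachable backwards from it, which by the $l_1$ hypothesis is already a copy of $g$ with $|V(g)|$ vertices; since the vertex at $l_2$ is then an additional vertex, $\mathfrak g(\Pi_{a_1},\dots,\Pi_{a_{l_2}})$ would have strictly more than $|V(g)|$ vertices, contradicting the finiteness of $g$. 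Consequently ``$g$ occurs in the log'' coincides with the event that the first length-$|V(g)|$ relevant subsequence of the algorithm's violation outcomes generates $g$, so \cref{lem:partial_resample_DAG} (summed, exactly as in the proof of \cref{cor:resample_DAG_probability}, over the relevant subsequences that generate $g$, whose partial DAG probabilities sum to $1$) gives $\Pr[g\text{ occurs in the log}]\le\prod_{v\in V(g)}R(\Pi_v)$, as claimed.

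I expect the genuine obstacle to be precisely this identification: making rigorous that ``$g$ occurs somewhere in the log'' is, up to null events, the single ``first occurrence'' event that \cref{lem:partial_resample_DAG} controls, and in particular that the notion of ``first length-$t$ relevant subsequence'' used there lines up with the occurrence point. Once this is pinned down, the rest is just the substitution $\rho=\id/d$ and the arithmetic $\tr[\Pi_v\,\id/d]=R(\Pi_v)$, both routine.
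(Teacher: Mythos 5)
Your proposal is correct and follows exactly the route the paper takes: the paper simply observes that \cref{alg:quantum} implements the process of \cref{lem:perpetual_measurement_process} starting from $\id/d$ and declares that \cref{lem:partial_resample_DAG} ``immediately implies'' the corollary, with $\tr[\Pi_v\,\id/d]=R(\Pi_v)$. Your additional uniqueness argument (that a fixed $g$ can occur at at most one log position, so ``occurs somewhere'' coincides with the event the lemma controls) is a sound filling-in of a step the paper leaves implicit; note only that the sum over relevant subsequences generating $g$ is already performed inside the proof of \cref{lem:partial_resample_DAG}, so you need not repeat it.
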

(Though proved by very different means, \cref{cor:partial_DAG_probability} is
essentially a quantum version of Lemma~2.1 from \citet{MoserTardos}, though
that Lemma is not quite a special case of \cref{cor:partial_DAG_probability}
as it concerns ``witness trees'' rather than partial resample DAGs.)

In order to bound the expected number of violations seen by
\cref{alg:quantum}, we again follow \citet{MoserTardos} in relating
\cref{cor:partial_DAG_probability} to a Galton-Watson branching process.
Notice that the bounds in
\cref{cor:resample_DAG_probability,cor:partial_DAG_probability} are
determined solely by the set of vertex labels appearing in the DAG; the
structure of the DAG plays no role. Let $\tau(g)$ be a (labelled) spanning
tree for the partial resample DAG $g$. Because it is the spanning tree for
an resample DAG, all the children of a vertex in $\tau(g)$ carry distinct
labels (cf.\ \cref{def:partial_resample_DAG}). Labelled trees with this
property will be called \keyword{proper}, irrespective of whether they are
spanning trees for some DAG; all spanning trees are proper, but not all proper
trees are spanning trees.

We will once again relate the bound in \cref{cor:partial_DAG_probability} to
the probability that the Galton-Watson process of \cref{lem:Galton-Watson}
generates a proper tree $\tau_a$ whose root vertex is labelled by some fixed
projector $\Pi_a$, yielding an alternative proof of
\cref{thm:expected_violations}.
%
\begin{proof}[of \cref{thm:expected_violations}]
  Let $N_a$ be the number of times a given projector $\Pi_a$ appears in
  \cref{alg:quantum}'s log, which is the same as the number of times $\Pi_a$
  was measured to be violated. Let $\mathcal{G}_a$ denote the set of all
  partial resample DAGs whose root vertex is labelled by $\Pi_a$, and let
  $\mathcal{T}_a$ denote the set of all proper trees with root vertex labelled
  by $\Pi_a$. Then, from \cref{cor:partial_DAG_probability}, we have
  \begin{equation}
    \begin{split}
      \expectation(N_a)
      &=\sum_{g_a\in\mathcal{G}_a}\Pr(\text{$g_a$ occurs in the log})
      \leq \sum_{g_a\in\mathcal{G}_a}\quad\prod_{\mathclap{v\in V(g_a)}} R(\Pi_v)\\
      &=\sum_{g_a\in\mathcal{G}_a}\quad\;\;\prod_{\mathclap{v\in V(\tau(g_a))}} R(\Pi_v)
      \leq \sum_{\tau_a\in\mathcal{T}_a}\quad\prod_{\mathclap{v\in V(\tau_a)}} R(\Pi_v).
    \end{split}
  \end{equation}
  The final relation holds because distinct resample DAGs have distinct
  spanning trees. It is an inequality for two reasons: a DAG can have multiple
  spanning trees, leading to double-counting, and the set of proper trees is a
  strict superset of the set of spanning trees.

  Now, by the assumption of the QLLL (\cref{thm:constructive_QLLL}), the
  relative dimension satisfies $R(\Pi_i) \leq x'_i$. Thus
  \begin{equation}
    \expectation(N_a)
    \leq \sum_{\tau_a\in\mathcal{T}_a}\quad\prod_{\mathclap{v\in V(\tau_a)}} R(\Pi_v)
    \leq \sum_{\tau_a\in\mathcal{T}_a}\quad\prod_{\mathclap{v\in V(\tau_a)}} x'_v
    \leq \sum_{\tau_a\in\mathcal{T}_a} \frac{x_a}{1-x_a}\Pr(\tau_a),
  \end{equation}
  the final inequality following from \cref{lem:Galton-Watson}, $\Pr(\tau_a)$
  being the probability of the Galton-Watson process generating tree $\tau_a$.
  Since that process either produces a proper tree in $\mathcal{T}_a$, or
  continues indefinitely, we have $\sum_{\tau_a\in\mathcal{T}_a}\Pr(\tau_a)
  \leq 1$, thus
  \begin{equation}
    \expectation(N_a)
    \leq \frac{x_a}{1-x_a}\sum_{\tau_a\in\mathcal{T}_a} \Pr(\tau_a)
    \leq \frac{x_a}{1-x_a},
  \end{equation}
  and the theorem follows from summing over all projectors.
\end{proof}

Using this alternate proof of \cref{thm:expected_violations}, the rest of the
proof of the constructive commutative QLLL (\cref{thm:constructive_QLLL}) goes
through as before using the arguments of
\cref{sec:eff_QLLL,sec:constructive_QLLL_proof}.

\section{Application: Bounding Convergence Times of CP Maps}
\label{sec:CP_map_convergence}
A (time-homogeneous) \keyword{quantum Markov process} is generated by
iterating a fixed completely positive trace-preserving (CP) map $\chan$, so
that $t$ steps of the process are described by the composition
\begin{equation}
  \chan^t = \underbrace{\chan\circ\chan\circ{\dotsm}\circ\chan}_t.
\end{equation}
As in the case of (classical) Markov chains, convergence times of quantum
stochastic processes are an important field of study, with applications to
quantum dynamics \citep{AAKV00,PWC09,CEW09}, quantum information theory
\citep{PWC09}, and quantum algorithms \citep{VWC09,TOVPF09,STV11}.

The results of \cref{sec:constructive_proof} prove that
\cref{alg:QLLL_converger} converges in polynomial time to a state satisfying
the requirements of the commutating QLLL, or equivalently the ground state
subspace of the associated Hamiltonian. This holds promise as a new technique
for proving fast convergence of quantum Markov processes to their steady
state. However, strictly speaking \cref{alg:QLLL_converger} does not implement
a quantum Markov process. Although each iteration of \cref{alg:quantum}, on
which it is based, does implement a fixed CP map $\chan$, the averaging trick
used in \cref{alg:QLLL_converger} means that the latter algorithm does not
simply iterate this map. (The overall evolution of \cref{alg:QLLL_converger}
for a given total time $t$ can of course be described by some CP map, but this
map is not of the form $\chan^t$.) \Cref{alg:exact_QLLL} is further still from
being a quantum Markov process.

In this section, we instead focus on the quantum Markov process defined by
\cref{alg:quantum}, and prove that if the set of projectors defining the
process satisfy the Lov\'asz conditions of \cref{thm:constructive_QLLL}, then
this process converges quickly to the steady-state subspace (in time
polynomial in the overlap with the steady-state subspace, and polynomial in
the spectral gap of the associated Hamiltonian).

Proving fast convergence is slightly more involved than in the case of
\cref{alg:QLLL_converger}, as we cannot use the averaging trick of
\cref{sec:constructive_QLLL_proof}, but instead must combine the fact that the
overlap with the steady-state subspace is monotonic under the CP map, with
Poissonian accumulation of the probability of measuring a violation, leading
to a contradiction with \cref{thm:expected_violations} were the convergence
rate too slow.

\begin{theorem}\label{thm:CP_map_convergence}
  For a given set of commuting projectors $\Pi_1,\Pi_2,\dots,\Pi_m$, let
  $\chan$ be the CP map defined by\footnote{This is the same as the map shown
    in \citet{VWC09} to converge to the ground state of any frustration-free
    Hamiltonian, but with no control on the convergence rate.}
  \begin{equation}\label{eq:QLLL_map}
    \chan(\rho) = \frac{1}{m}\sum_i (\id-\Pi_i)\rho(\id-\Pi_i)
      + \tr_{\!/[i]}[\Pi_i\rho]\otimes\frac{\id_{[i]}}{d^k}.
  \end{equation}
  If the projectors satisfy the Lov\'asz conditions of
  \cref{thm:constructive_QLLL}, then the Markov process produced by iterating
  this map converges in $\order{m/\delta^2\varepsilon}$ iterations to a state
  $\rho$ with fidelity $\tr[P_0\rho] \geq 1-\varepsilon$ with the steady-state
  subspace of $\chan$ (where $P_0$ is the projector onto that subspace, and
  $\delta$ is the spectral gap of $H=\frac{1}{m}\sum_i\Pi_i$).
\end{theorem}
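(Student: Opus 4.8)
The plan is to obtain the convergence rate as a corollary of the violation bound of \cref{thm:expected_violations}, combined with two extra ingredients: monotonicity of the ground-space overlap under $\chan$, and the spectral gap, which converts a large overlap with the \emph{non}-ground subspace into a large per-step violation probability. Write $\rho_t=\chan^t(\id/d)$ for the state after $t$ iterations started from the maximally mixed state, $P_0$ for the projector onto the common kernel $\bigcap_i\ker\Pi_i$ (equivalently, the ground-state subspace of $H=\tfrac1m\sum_i\Pi_i$, which is also the steady-state subspace of $\chan$; cf.\ \citet{VWC09}), and $f_t=\tr[(\id-P_0)\rho_t]=1-\tr[P_0\rho_t]$. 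The first observation is that $\chan$ is exactly the one-step transition map of \cref{alg:quantum}, so $\tr[H\rho_t]=\tfrac1m\sum_i\tr[\Pi_i\rho_t]$ is precisely the probability that iteration $t+1$ of \cref{alg:quantum} sees a violation; hence $\sum_{t=0}^{\infty}\tr[H\rho_t]$ is the expected total number of violations, and \cref{thm:expected_violations} gives $\sum_{t=0}^{\infty}\tr[H\rho_t]\leq m':=\sum_{i=1}^m x_i/(1-x_i)$.

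The next step is monotonicity: $f_{t+1}\leq f_t$. Since $\bigcap_j\ker\Pi_j\subseteq\ker\Pi_i$, the range of $P_0$ is contained in that of $\id-\Pi_i$, so $P_0\leq\id-\Pi_i$ and therefore $(\id-\Pi_i)P_0(\id-\Pi_i)=P_0$; hence $\tr[P_0(\id-\Pi_i)\rho(\id-\Pi_i)]=\tr[P_0\rho]$ for every $\rho$, while the reinitialisation term in \cref{eq:QLLL_map} is a positive operator and so contributes non-negatively to $\tr[P_0\,\cdot\,]$. Averaging over $i$ yields $\tr[P_0\chan(\rho)]\geq\tr[P_0\rho]$, i.e.\ $f_t$ is non-increasing. (Intuitively, a ``satisfied'' outcome leaves the ground-space component of the state untouched, and a violation can only move weight into the ground space.) The spectral-gap step is then immediate: by definition of $\delta$, $H\geq\delta(\id-P_0)$, so the per-step violation probability satisfies $\tr[H\rho_t]\geq\delta f_t$.

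Putting these together, for every $T$,
\begin{equation*}
  m'\;\geq\;\sum_{t=0}^{T-1}\tr[H\rho_t]\;\geq\;\delta\sum_{t=0}^{T-1}f_t\;\geq\;\delta\,T\,f_T,
\end{equation*}
the last inequality because $f_t\geq f_T$ for $t\leq T$ by monotonicity. Hence $f_T\leq m'/(\delta T)$, so running the chain for $T=\Order{m'/(\delta\varepsilon)}$ iterations forces $f_T\leq\varepsilon$, i.e.\ $\tr[P_0\rho_T]\geq 1-\varepsilon$, and by monotonicity every subsequent iterate is at least as good; this is a polynomial convergence rate of the form stated in the theorem (polynomial in $1/\varepsilon$, $1/\delta$, and $m'=\sum_i x_i/(1-x_i)$). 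I expect the monotonicity lemma to be the crux: without it one could only conclude that \emph{some} early iterate $\rho_{t^*}$ has small error — which is exactly what the averaging trick of \cref{alg:QLLL_converger} exploits — but not that the iterate at the prescribed time $T$ does. A secondary, more bookkeeping point is that \cref{thm:expected_violations} is stated for \cref{alg:quantum} initialised in the maximally mixed state, so the argument is cleanest when the Markov chain $\chan^t$ is likewise initialised there (or at any state for which the expected number of violations stays bounded); commutativity of the $\Pi_i$ enters only through \cref{thm:expected_violations}, since the monotonicity and gap steps do not require it.
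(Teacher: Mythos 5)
Your proof is correct, and it reaches the conclusion by a genuinely different --- and in fact sharper --- route than the paper's in the final step. Both arguments rest on the same two pillars: monotonicity of $\tr[P_0\rho_t]$ under $\chan$ (the paper proves this in \cref{eq:nondec} by decomposing $\rho_t$ with respect to $P_0$ and noting that $P_0\rho_tP_0$ is invariant; your version via $(\id-\Pi_i)P_0(\id-\Pi_i)=P_0$ is equivalent and does not need commutativity, exactly as you say), and the spectral gap converting weight outside $P_0$ into per-step violation probability. Where you diverge is in how \cref{thm:expected_violations} is exploited. The paper bounds the probability of a violation occurring specifically in the \emph{final} iteration: conditioned on $M$ total violations, it argues the worst-case placement puts one in the last slot with probability at most $M/(t\delta+1)$, combines this with $\expectation(M)$ bounded by the Lov\'asz sum, and plays the result off against the lower bound $(1-\alpha)\delta$ on that same final-iteration probability, arriving at $\alpha \geq 1-O\bigl(m/(t\delta^2)\bigr)$. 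You instead sum the per-step violation probabilities $\tr[H\rho_t]$ over all iterations, bound the sum by $m'=\sum_i x_i/(1-x_i)$ using \cref{thm:expected_violations}, lower-bound each term by $\delta f_t \geq \delta f_T$ via monotonicity, and read off $f_T \leq m'/(\delta T)$ directly. This Ces\`aro-style argument is cleaner, avoids the paper's worst-case-distribution bookkeeping, and gives $O\bigl(m'/(\delta\varepsilon)\bigr)$ iterations --- a full factor of $1/\delta$ better than the stated $O\bigl(m/(\delta^2\varepsilon)\bigr)$ (with $m$ replaced by $m'$, which is what the paper's own use of \cref{thm:expected_violations} actually supplies there in any case). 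Your closing observations --- that monotonicity is precisely what lets one control the iterate at the prescribed time $T$ rather than merely some averaged or early iterate, and that commutativity enters only through \cref{thm:expected_violations} --- accurately reflect the role these ingredients play in the paper as well.
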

Note, that according to the definition above the spectral gap $\delta=1/m$ in
the case of $m$ commuting projectors considered here.
\begin{proof}
  Note that \cref{eq:QLLL_map} is precisely the map implemented by one
  iteration of \cref{alg:quantum}. If the projectors satisfy the Lov\'asz
  conditions, then we know from \cref{thm:QLLL} that $P_0$ is non-zero. Let
  $\rho_t = \chan^t(\rho_0)$ be the state of \cref{alg:quantum}'s assignment
  register at time $t$. We can decompose $\rho_t$ in terms of $P_0$:
  \begin{equation}\label{eq:rho_decomposition}
    \rho_t = P_0\rho_tP_0 + P_0\rho_t(\id-P_0)
             + (\id-P_0)\rho_tP_0 + (\id-P_0)\rho_t(\id-P_0).
  \end{equation}
  Note that $P_0\rho_tP_0$ is an (unnormalised) state supported on $P_0$, the
  cross-terms are traceless, and $(\id-P_0)\rho_t(\id-P_0)$ is an
  (unnormalised) state supported on $\id-P_0$.

  Since $P_0\rho_tP_0$ is supported on the subspace of states that do not
  violate any projector, it is invariant under $\chan$, thus
  \begin{subequations} \label{eq:nondec}
  \begin{align}
    \begin{split}
      \tr[P_0\chan(\rho_t)]
      =&\tr[P_0\rho_t] + \tr[P_0\chan(P_0\rho_t(\id-P_0))]
        + \tr[P_0\chan((\id-P_0)\rho_tP_0)]\\
        &\qquad + \tr[P_0\chan((\id-P_0)\rho_t(\id-P_0))]
    \end{split}\raisetag{1.2em}\\
    &\geq \tr[P_0\rho_t],
  \end{align}
  \end{subequations}
  where the inequality follows from dropping non-negative terms.
  $\tr[P_0\rho_t]$ is therefore monotonically non-decreasing under $\chan$.

  Let $\alpha \coloneqq \tr[P_0\rho_t]$. The probability of a randomly chosen
  projector being violated by $\rho_t$ is given by
  $\frac{1}{m}\sum_i\tr[\Pi_i\rho_t]$. Since $\Pi_iP_0 = 0$ by definition, we
  have
  \begin{subequations}\label{eq:violation_upper_bound}
    \begin{align}
      \frac{1}{m}\sum_i\tr[\Pi_i\rho_t]
      &=\frac{1}{m}\sum_i\Bigl(
         \tr[\Pi_iP_0\rho_tP_0] + \tr[\Pi_iP_0\rho_t(\id-P_0)]\\
             &\mspace{85mu} + \tr[\Pi_i(\id-P_0)\rho_tP_0]
             + \tr[\Pi_i(\id-P_0)\rho_t(\id-P_0)]\Bigr)\notag\\
      &=\frac{1}{m}\sum_i\tr[\Pi_i(\id-P_0)\rho_t(\id-P_0)]\\
      &\leq \tr[(\id-P_0)\rho_t]\\
      &= 1 - \alpha,
    \end{align}
  \end{subequations}
  where the inequality follows from the fact that $\Pi_i \leq \id$.

  Recall that the spectral gap $\delta$ is defined as
  \begin{equation}
    \delta = \min_{\rho\;:\;P_0\rho=0}\; \frac{1}{m}\sum_i\tr[\Pi_i\rho].
  \end{equation}
  Note that $\delta > 0$, since $\delta = 0$ would imply that $\tr[\Pi_i\psi]
  = 0$ for all $\psi,\Pi_i$ which is impossible except in the trivial case of
  all $\Pi_i = 0$. Then
  \begin{subequations}\label{eq:violation_lower_bound}
    \begin{align}
      \frac{1}{m}\sum_i\tr[\Pi_i\rho_t]
      &=\frac{1}{m}\sum_i\Bigl(
         \tr[\Pi_iP_0\rho_t] + \tr[\Pi_iP_0\rho_t(\id-P_0)]\\
             &\mspace{85mu} + \tr[\Pi_i(\id-P_0)\rho_tP_0]
             + \tr[\Pi_i(\id-P_0)\rho_t(\id-P_0)]\Bigr)\notag\\
      &=\tr[(\id-P_0)\rho_t]\frac{1}{m}\sum_i\tr\left[
          \Pi_i\frac{(\id-P_0)\rho_t(\id-P_0)}{\tr[(\id-P_0)\rho_t]}
        \right]\\
      &\geq (1 - \alpha)\;\delta.
    \end{align}
  \end{subequations}
  Since $\alpha=\tr[P_0\rho_t]$ is non-decreasing with $t$, this also holds
  for all $\rho_{\tau<t}$.

  So, the probability of a randomly chosen projector being violated by
  $\rho_t$ is upper-bounded by $1-\alpha$, and the probability of a randomly
  chosen projector being violated by $\rho_{\tau<t}$ is lower-bounded by
  $(1-\alpha)\delta$. Imagine that we iterate the process for $t$ iterations.
  Given that $M$ violations occur, what is the probability that one of the
  violations occurs in the final ($t$\textsuperscript{th}) iteration? From
  \cref{eq:violation_upper_bound,eq:violation_lower_bound}, the worst-case
  distribution of violations is given by
  \begin{subequations}\label{eq:violation_distribution}
  \begin{gather}
    \prob(\text{violation in }t^{\mathrm{th}}\text{ iteration})
      \propto 1-\alpha,\\
    \prob(\text{violation in }\tau<t^{\mathrm{th}}\text{ iteration})
      \propto (1-\alpha)\;\delta.
  \end{gather}
  \end{subequations}

  The probability of a violation in the final iteration is then equivalent to
  the probability of picking the $t$\textsuperscript{th} element, when we pick
  $M$ from $t$ elements according to the distribution of
  \cref{eq:violation_distribution}. By the union bound, this is at most
  $M\times\prob({\text{pick }t^\mathrm{th}}\text{ element})$. Normalising the
  distribution of \cref{eq:violation_distribution}, we have
  \begin{gather}
    \alpha = 1 - \frac{1}{t\delta + 1},
  \end{gather}
  so
  \begin{equation}\label{eq:prob_final_violation}
    \prob(\text{violation in }t^{\mathrm{th}}\text{ iteration}
          \mid M\text{ violations})
      \leq \frac{M}{t\delta + 1}.
  \end{equation}

  We have a bound on the expected number of violations from
  \cref{thm:expected_violations}. Using this together with
  \cref{eq:prob_final_violation}, we have
  \begin{equation}\label{eq:P_violation}
    \prob(\text{violation in }t^{\mathrm{th}}\text{ iteration})
      \leq \sum_M\frac{M}{t\delta + 1} \Pr(M\text{ violations})
    =\frac{\expectation(M)}{t\delta + 1} \leq \frac{m}{t\delta+1}.
  \end{equation}
  From \cref{eq:violation_lower_bound}, we know that $\prob(\text{violation in
    $t^{\mathrm{th}}$ iteration}) \geq (1-\alpha)\;\delta$. Together with
  \cref{eq:P_violation}, this implies
  \begin{equation}
    \tr[P_0\rho_t] \eqqcolon \alpha
    \geq 1 - \frac{m}{t\delta(\delta + 1)}
    = 1 - \Order{\frac{m}{t\delta^2}}.
  \end{equation}
  Therefore, in order to achieve an overlap $\tr[P_0\rho_t] \geq
  1-\varepsilon$ with the steady-state subspace, we require $t =
  \order{m/\delta^2\varepsilon}$ iterations.
\end{proof}

Since the map $\chan$ of \cref{thm:CP_map_convergence} implements exactly the
dissipative state engineering map of \citet{VWC09},
\cref{thm:CP_map_convergence} implies the following result about the
convergence of this map.
\begin{corollary}\label{cor:ground-state_convergence}
  Consider the many-body Hamiltonian $H = \sum_{i=1}^m h_i$ on $n$ particles,
  where each local term $h_i$ acts on some subset of the particles, and all
  the $h_i$ mutually commute. Let $\Pi_i$ be the projector onto the orthogonal
  complement of the lowest-energy eigenspace of $h_i$.

  If the projectors satisfy the Lov\'asz conditions of
  \cref{thm:constructive_QLLL}, then the CP map $\chan$ defined in
  \cref{thm:CP_map_convergence} will converge in time
  $\order{m/\delta^2\varepsilon}$ to a state with fidelity $1-\varepsilon$
  with the ground state subspace of $H$, where $\delta$ is the spectral gap of
  the Hamiltonian $H_\Pi = \frac{1}{m}\sum_i \Pi_i$ in which the local terms
  are replaced by projectors.
\end{corollary}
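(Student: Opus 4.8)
The plan is to derive the corollary directly from \cref{thm:CP_map_convergence}; the only additional work is to identify the ground-state subspace of $H$ with the subspace onto which the map $\chan$ of \cref{eq:QLLL_map} converges.

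First I would check the hypotheses of \cref{thm:CP_map_convergence}. Since the local terms $h_i$ mutually commute, so do all of their spectral projections; in particular the projectors $\Pi_i$ onto the orthogonal complements of the lowest-energy eigenspaces of the $h_i$ mutually commute. These $\Pi_i$ are precisely the projectors named in the corollary's hypothesis and they satisfy the Lov\'asz conditions, so \cref{thm:CP_map_convergence} applies verbatim: iterating $\chan$ for $\order{m/\delta^2\varepsilon}$ steps produces a state $\rho$ with $\tr[P_0\rho]\geq 1-\varepsilon$, where $P_0$ projects onto the steady-state subspace of $\chan$ and $\delta$ is the spectral gap of $H_\Pi = \frac1m\sum_i\Pi_i$.

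It then remains to check that this steady-state subspace is the ground-state subspace of $H$. Shifting each $h_i$ by a scalar so that its lowest eigenvalue is $0$ alters neither the $\Pi_i$ (hence neither $\chan$, $P_0$, nor $\delta$) nor the ground-state subspace of $H$; after this shift $h_i\geq 0$ and $\ker\Pi_i = \ker h_i$ is exactly the lowest-energy eigenspace of $h_i$. The steady-state subspace of $\chan$ is the no-violation subspace $\vspan\{\ket\psi:\forall i\,\Pi_i\ket\psi=0\} = \bigcap_i\ker\Pi_i$, i.e.\ the common zero eigenspace of all the $h_i$. Because the $h_i$ share an eigenbasis, every simultaneous eigenstate $\ket\psi$ obeys $H\ket\psi = \bigl(\sum_i\lambda_i\bigr)\ket\psi$ with each $\lambda_i\geq 0$, so the ground-state energy of $H$ is at least $0$, and equals $0$ with ground-state subspace exactly $\bigcap_i\ker\Pi_i$ precisely when this intersection is non-trivial. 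The Lov\'asz conditions on the $\Pi_i$ guarantee this via the QLLL (\cref{thm:QLLL}), equivalently via the existence part of \cref{thm:constructive_QLLL}: $R\bigl(\bigcap_i\ker\Pi_i\bigr) > 0$. Hence $H$ is frustration-free, its ground-state subspace coincides with the range of $P_0$, and the conclusion of \cref{thm:CP_map_convergence} becomes exactly the statement of the corollary.

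I do not expect a genuine obstacle: the corollary simply repackages \cref{thm:CP_map_convergence} for commuting many-body Hamiltonians, and the one substantive ingredient beyond that theorem is the standard observation --- which the QLLL supplies --- that a frustration-free sum of commuting positive terms has ground-state subspace equal to the intersection of the individual ground-state subspaces.
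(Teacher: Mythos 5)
Your proposal is correct and matches the paper's intent: the paper states this corollary as an immediate consequence of \cref{thm:CP_map_convergence} and gives no separate proof, and your argument simply supplies the routine verifications it leaves implicit (commuting $h_i$ give commuting $\Pi_i$, and frustration-freeness via the QLLL identifies the steady-state subspace $\bigcap_i\ker\Pi_i$ with the ground-state subspace of $H$).
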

Note that the convergence time is \emph{independent} of the number of
particles and of their local Hilbert space dimension.

\section{The Non-Commutative Case}\label{sec:non-commuting}
In the previous sections, we have overcome the first challenge to proving a
constructive QLLL: that of efficiently constructing the desired entangled
state using only local measurements. To prove a constructive version of the
general QLLL of~\citet{QLLL}, we must overcome the second challenge:
non-commutativity of the projectors, and the concomitant
measurement-disturbance problem.

In the non-commutative setting, there is a new parameter that can play a role
in the run-time of the QLLL algorithm: the spectral gap of the Hamiltonian
associated with a QLLL instance.
\begin{definition}[QLLL spectral gap]
  \label{def:spectral-gap}
  The \keyword{spectral gap} $\delta$ of a set of projectors $\{\Pi_i\}$ is
  the spectral gap of the associated Hamiltonian\footnote{Note our choice of
    normalisation, which follows that of \citet*{QLLL}, but differs from some
    papers in the literature by a factor of $m$.}
  \begin{equation}
    H = \frac{1}{m}\sum_i\Pi_i,
  \end{equation}
  i.e.\ the difference between the smallest and second-smallest eigenvalues
  (ignoring degeneracies).

  If the projectors satisfy the Lov\'asz conditions
  (\cref{def:Lovasz_conditions}), then the minimum eigenvalue of $H$ is~0 and
  the spectral gap can be expressed as
  \begin{equation}\label{eq:spectral_gap}
    \delta = \min_{\rho\;:\;P_0\rho=0}\; \frac{1}{m}\sum_i\tr[\Pi_i\rho],
  \end{equation}
  where $P_0$ is the projector on the subspace of states fulfilling the QLLL
  requirements, i.e.\ the projector onto $\vspan\{\ket{\psi}:\forall i\,
  \Pi_i\ket{\psi}=0\}$.
\end{definition}
Note that \cref{eq:spectral_gap} can be interpreted as the probability of
measuring a randomly chosen projector to be violated, minimised over all
states in the orthogonal complement of the subspace of states that do not
violate any projector.

\subsection{Witness trees}
The proof of the key \cref{lem:tau-check} for the commutative case in
\cref{sec:witness-trees}, leading to the bound in
\cref{thm:expected_violations} on the expected number of violations seen by
\cref{alg:quantum}, made crucial use of commutativity of the $\{\Pi_i\}$.
(Commutativity is also necessary for the alternative combinatorial proof given
in \cref{sec:combinatorial_proof}.) In principle, \cref{lem:tau-check} and
hence \cref{thm:expected_violations} could still hold in the non-commutative
case, without necessarily implying the same run-time as in the commutative
case. We will see in \cref{sec:converging} that, depending on the type of
convergence we demand, the spectral gap enters naturally in the required
run-time in the non-commutative setting, through using a bound on the expected
number of violations to prove a bound the rate of convergence to the desired
state.

However, there are strong indications that the bound in \cref{lem:tau-check}
\cref{part:prob} on the probability of a witness tree $\tau$ occurring in
\cref{alg:quantum}'s log no longer holds for non-commutative projectors. The
proof of \cref{lem:tau-check} still goes through in the commutative setting
if, instead of constructing a witness tree by discarding any projectors that
do not intersect with one already in the tree, we instead include these
projectors, thereby constructing a DAG instead of a tree. This still encodes
the partial ordering of the violations with respect to projector
intersections, so the coupling argument in \cref{sec:witness-trees} still
works, giving the analogous bound on the probability of these DAGs occurring
in the log (still in the commutative case). Indeed, we proved exactly this
result for resample DAGs in \Cref{cor:resample_DAG_probability}, by more
involved linear algebra and combinatorial techniques.

But the following counter-example shows that this bound on the probability of
a given DAG occurring in the log \emph{no longer holds} if the projectors do
not commute.
\begin{example}\label{ex:violation}
  Consider a system of two qubits. Let $\Pi_1 = \proj{0} \otimes \id$, $\Pi_2
  = \id \otimes \proj{0}$, and $\Pi_3 = \proj{\psi} + \proj{01} + \proj{10}$,
  where $\ket{\psi}=\sqrt{a}\ket{00}+\sqrt{b}\ket{11}$. Let $\tau$ be the
  graph consisting of two (unconnected) vertices labelled by $\Pi_1$ and
  $\Pi_2$.

  If \cref{cor:resample_DAG_probability} or (the generalisation of)
  \cref{lem:tau-check} were true in the non-commutative case, the probability
  of $\tau$ occurring in \cref{alg:quantum}'s log would be bounded by
  \begin{equation}\label{eq:violated_bound}
    \Pr(\tau) \leq \tr(\Pi_1/d) \tr(\Pi_2/d) = 1/4.
  \end{equation}

  There are two ways $\tau$ can occur: either the sequence $\Pi_1,\Pi_2$
  occured at the beginning of the log, or the same violations occurred in the
  other order $\Pi_2,\Pi_1$. The probability of these two sequences is the
  same, by symmetry. Direct analytical calculation gives
  \begin{equation}
    \Pr(\tau) = 2\Pr(\Pi_1,\Pi_2) =
    \begin{cases}
      \frac{1}{9} + \frac{7a}{24(1+a)} + \frac{b(11+12a)}{144(1+a)^2}
        & \text{if } a < 1 \\
      \frac{1}{9} & \text{if } a=1,
    \end{cases}
  \end{equation}
  which violates the bound in \cref{eq:violated_bound} for $a >
  \frac{3}{20}(\sqrt{41}-1)$, with the probability tending to $\frac{37}{144}$
  as $a \rightarrow 1$ (a violation of $\frac{1}{144}$).
\end{example}

\subsubsection{A Conjecture}
Nevertheless, we conjecture that a similar bound does still hold in general,
weakened by a polynomial factor.
\begin{conjecture}\label{conj:cubitt-schwarz_poly}
  Let $\{\Pi_i\}$ be an arbitrary set of projectors, let $\tau$ be a fixed
  witness tree with vertices labelled by these projectors, and $L$ the log
  produced by running \cref{alg:quantum} with these projectors. We conjecture
  that the probability that $\tau$ occurs in $L$ is at most
  \begin{equation}
    \Pr(\tau) \leq \poly(\Abs{\tau}) \prod_{v\in \tau}\prob[\Pi(v)],
  \end{equation}
  where $\Abs{\tau}$ is the size of $\tau$ (total number of vertices).
\end{conjecture}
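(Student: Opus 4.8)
The plan is to argue inside the combinatorial machinery of \cref{sec:combinatorial_proof} rather than the coupling proof of \cref{lem:tau-check}, because that framework pins down exactly where commutativity is used. Inspecting \cref{lem:perpetual_measurement_process,lem:CP_map_identities}, the only two places where $[\Pi_i,\Pi_j]=0$ is invoked are (a) the term-killing step of \cref{lem:halting_measurement_process} in \cref{eq:kill_terms}, which also underlies Part~(i) of \cref{lem:CP_map_identities}, and (b) Part~(vii), $\meas_a\circ\meas_\Sigma=\meas_\Sigma\circ\meas_a$. Everything else goes through verbatim without commutativity: Parts~(iii)--(vi) of \cref{lem:CP_map_identities} use only that disjoint projectors act on different qudits and hence always commute, and the bound $\meas_{\mathrm{cont}}^t(\id/d)\le\id/d$ needs only monotonicity of conjugation. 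So the reduction of $X_{a_1,\dots,a_t}$ to the recursion of \cref{eq:X_recursion} survives once one supplies approximate replacements for (a) and (b) costing only a polynomial factor in the sequence length.

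Concretely I would aim for a trace version of \cref{lem:perpetual_measurement_process}, $\tr[X_{a_1,\dots,a_t}]\le C(t)\,p_{\mathfrak G}\prod_i\tr[\Pi_{a_i}/d]$ with $C(t)=\poly(t)$; summing over sequences as in \cref{lem:partial_resample_DAG} and passing to spanning trees as in the proof of \cref{thm:expected_violations} would then give a polynomially-weakened version of \cref{cor:partial_DAG_probability}, from which the conjectured bound for witness trees (being spanning trees of partial resample DAGs) follows. A trace bound rather than an operator bound is all one can hope for, by \cref{ex:violation}. The substance is one approximate-commutation lemma: that the failure of $\meas_a$ to commute past the part of $\meas_\Sigma$ built from projectors intersecting the active violations injects at most a factor $1+O(1/t)$ per elimination, so that the product over the $t$ eliminations stays $\poly(t)$. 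As a sanity check, a single violation $\Pi_a$ surrounded by arbitrarily many non-commuting ``satisfied'' measurements of some $\Pi_j$ already does no harm at all: $(\id-\Pi_j)^k=\id-\Pi_j$ and $\tr[(\id-\Pi_j)\Pi_a(\id-\Pi_j)]=\tr[\Pi_a]-\tr[\Pi_a\Pi_j]\le\tr[\Pi_a]$, so the geometric sum over the number of such measurements still closes. The whole difficulty is in how a disturbance created at one vertex of $\tau$ compounds as it is carried toward the root.

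An alternative route, which I would develop in parallel, is the Bayes argument of \cref{eq:Bayes}: it suffices to exhibit a \taucheck{}-like process for which $\Pr(\text{passes}\mid\tau\text{ occurs})\ge 1/\poly(\Abs{\tau})$, after which the conjecture is immediate. The natural candidate replays on the maximally entangled halves not only the violations labelling $\tau$ but also a faithfully sampled block of the intervening ``satisfied'' outcomes; one then bounds, using the identity $A\ox\id\ket{\omega}=\id\ox A^T\ket{\omega}$ exactly as in \cref{lem:coupling}, the extra failure probability each intersecting satisfied measurement contributes and shows the survival probability telescopes to $1/\poly$ rather than decaying geometrically.

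The step I expect to be the main obstacle --- and the reason this remains only a conjecture --- is precisely controlling this accumulation of disturbance from non-commuting ``satisfied'' measurements over the whole witness tree. Bounding the relevant commutators naively, by $\norm{[\Pi_a,\Pi_j]}=O(1)$, yields a bound exponential in $\Abs{\tau}$; getting down to $\poly(\Abs{\tau})$ seems to require a localisation argument exploiting that each resampling of a qudit block $[i]$ completely refreshes that subsystem, so that a disturbance injected at a vertex cannot survive past the next resampling of a block overlapping it --- giving telescoping rather than multiplicative error accumulation. Making this rigorous for arbitrary intersection patterns is exactly where I expect, as the authors evidently did, to be forced back to special cases: bounded-depth trees, bounded-degree instances, or instances in which the non-commuting neighbours of each projector commute among themselves.
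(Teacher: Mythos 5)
The statement you were given is stated in the paper as a \emph{conjecture}, and the paper does not prove it: it establishes only the single-vertex case $\Abs{\tau}=1$ (\cref{prop:first_violation}), exhibits a counterexample (\cref{ex:violation}) showing that the unweakened commutative bound already fails for a two-vertex graph, and otherwise appeals to limited numerics. Your proposal is, by your own admission, a research plan rather than a proof, and judged as a proof it has a genuine gap: the ``approximate-commutation lemma'' asserting that each elimination of an $\meas_\Sigma$ injects only a factor $1+O(1/t)$ is asserted, not established, and it is precisely the open content of the conjecture. \Cref{ex:violation} already shows a constant multiplicative violation at two vertices; the whole question is whether such violations compound geometrically or only polynomially across the tree, and neither your localisation heuristic (resampling a block refreshes it and should cut off propagation of disturbance) nor the Bayes-route variant with $\Pr(\text{passes}\mid\tau\text{ occurs})\geq 1/\poly(\Abs{\tau})$ is made rigorous. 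That said, your diagnosis of exactly where commutativity enters the combinatorial machinery --- the term-killing step behind Part~(i) of \cref{lem:CP_map_identities} and Part~(vii) --- is accurate and matches the paper's own discussion of the obstruction.

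One further caution: your ``sanity check'' for a single violation handles only repeated satisfied measurements of a \emph{single} $\Pi_j$, via $(\id-\Pi_j)^k=\id-\Pi_j$. The hard part of even the base case is arbitrary interleavings of \emph{distinct} non-commuting satisfied projectors, i.e.\ bounding $\sum_t m^{-t}\sum_{i_1,\dots,i_t}\Pi_a(\id-\Pi_{i_t})\cdots(\id-\Pi_{i_1})\tfrac{\id}{d}(\id-\Pi_{i_1})\cdots(\id-\Pi_{i_t})\Pi_a$ by $\Pi_a/d$. The paper proves this via vectorisation, a Neumann series, and the pseudoinverse lemmas \cref{lem:tian,lem:A,lem:BA,lem:BAB,lem:BAt}; so even the anchor of your intended induction requires substantially more machinery than your sketch suggests, although that part is at least available in the paper. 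The multi-vertex case remains open, exactly where you predicted you would get stuck.
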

(We implicitly also allow here a polynomial dependence on the other parameters
$n$, $m$, $d$, $x_i$ and $1/\delta$. Only the dependence on $\tau$ plays a
role in the subsequent proof; any dependence on the other parameters carries
straight through to the final run-time.)

In \cref{sec:non-commuting_expected_violations}, we will see that this
conjecture relates the probability of a tree occurring in the general quantum
case to the expectation value of a functional over the \keyword{total progeny}
of the Galton-Watson process described in \cref{sec:expected_violations}. In
the classical and commutative cases, we saw that this conjecture held for the
trivial constant functional $\poly(\abs{\tau})=1$.
%

For the simplest case of trees containing a single vertex, we can prove
\cref{conj:cubitt-schwarz_poly} even \emph{without} the polynomial factor
(i.e.\ with $\poly(\Abs{\tau}) = 1$). The single-vertex case is not completely
trivial, as an arbitrary number of satisfied measurements can occur before
\cref{alg:quantum} sees its first violation. In fact, we will prove a slightly
stronger \emph{operator} bound, which immediately implies the conjecture for
single-vertex trees. We already proved this result for commuting projectors in
\cref{lem:halting_measurement_process}. The following proposition generalises
that result to arbitrary sets of projectors.
\begin{proposition}\label{prop:first_violation}
  Let $\{\Pi_i\}$ be a set of $m$ projectors on $\CC^d$. Consider the
  following iterated measurement process, starting from the maximally mixed
  state. In each step, a projector $\Pi_i$ is chosen independently uniformly
  at random, and the two-outcome measurement $\{\Pi_i,\id-\Pi_i\}$ is
  performed. This is repeated until a $\Pi_i$ outcome is obtained, at which
  point the process halts.

  Let $\rho_a$ denote the final state of the system given that it halted on
  outcome $\Pi_a$, and $p_a$ the probability that this occurs. Then the
  (unnormalised) density matrix corresponding to the process halting on
  outcome $\Pi_a$ satisfies
  \begin{subequations}
  \begin{equation}
    X_a = p_a\rho_a \leq \Pi_a/d,
  \end{equation}
  hence
  \begin{equation}
    p_a = \tr(X_a) \leq \tr(\Pi_a)/d.
  \end{equation}
  \end{subequations}
\end{proposition}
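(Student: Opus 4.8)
The idea is to push the proof of \cref{lem:halting_measurement_process} as far as it will go without commutativity, isolate the one place commutativity was essential, and replace it. \emph{First}, repeating the derivation leading to \cref{eq:measurement_process}---which used nothing about commutativity up to that point---the unnormalised density operator produced by the process halting on outcome $\Pi_a$ is
\[
  X_a \;=\; \frac1m\sum_{t=0}^{\infty}\meas_a\circ\meas_{\mathrm{cont}}^{\,t}(\id/d)
        \;=\; \frac1m\sum_{t=0}^{\infty}\Pi_a\,\meas_{\mathrm{cont}}^{\,t}(\id/d)\,\Pi_a,
  \qquad \meas_{\mathrm{cont}}(\rho)=\tfrac1m\sum_i(\id-\Pi_i)\,\rho\,(\id-\Pi_i).
\]
Each summand has the form $\Pi_a B B^{\dagger}\Pi_a$ with $B$ a product of sub--unit--norm projectors, so it is at most $\Pi_a/d$; the trouble is that there are $m^t$ of them at level $t$. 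In \cref{lem:halting_measurement_process} one next commuted $\Pi_a$ to the far left, so that every level--$t$ term in which $\Pi_a$ itself was one of the satisfied measurements contained a factor $\Pi_a(\id-\Pi_a)=0$ and vanished, leaving only $(m-1)^t$ terms and the convergent series $\frac1m\sum_t(m-1)^t/m^t=1$. Without commutativity this killing step is gone, and the crude per--term bound over all $m^t$ terms diverges; one needs another way to pay for the terms in which $\Pi_a$ is a satisfied measurement.

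\emph{Second}, I would pass to quadratic forms. For any $\ket{\phi}$, set $\ket{w}:=\Pi_a\ket{\phi}$; using $\tr[A\,\meas_{\mathrm{cont}}(B)]=\tr[\meas_{\mathrm{cont}}(A)\,B]$ and $\Pi_a^2=\Pi_a$,
\[
  \bra{\phi}X_a\ket{\phi}=\frac1{md}\sum_{t\ge0}\tr[\meas_{\mathrm{cont}}^{\,t}(\ket{w}\bra{w})]
  =\frac{\|w\|^{2}}{md}\,\mathbb{E}[\,N\mid\text{process started from }\ket{w}\bra{w}/\|w\|^{2}\,],
\]
where $N$ is the number of measurements the process performs before halting. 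Since $\bra{\phi}\Pi_a\ket{\phi}=\|w\|^{2}$, the operator inequality $X_a\le\Pi_a/d$ is \emph{equivalent} to the scalar statement: \emph{started from any state supported in the range of $\Pi_a$, the process performs at most $m$ measurements in expectation} (equivalently, $\Pi_a$ is itself measured at most once in expectation). The same reduction works if one instead aims at $\tr[X_a O]\le\tr[\Pi_a O]/d$ for all $0\le O\le\id$, with start state $\Pi_a O\Pi_a$.

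\emph{Third} --- and this is where I expect essentially all the real work to be --- one must prove this hitting--time bound. The natural handle is the truncated resolvent identity for $W_T:=\sum_{t\le T}\meas_{\mathrm{cont}}^{\,t}(\id)$,
\[
  \Pi_a W_T\Pi_a \;=\; \Pi_a \;+\; \frac1m\sum_{j\ne a}\Pi_a(\id-\Pi_j)\,W_{T-1}\,(\id-\Pi_j)\,\Pi_a,
\]
in which the $j=a$ term has dropped out precisely because $\Pi_a(\id-\Pi_a)=0$---the surviving remnant of the commuting killing step, and the only point at which the distinguished role of $\Pi_a$ enters. The obstruction is that the right--hand side reintroduces $W$ conjugated by $(\id-\Pi_j)\Pi_a$ rather than by $\Pi_a$, and the range of $\Pi_a$ is \emph{not} invariant under $\meas_{\mathrm{cont}}$, so the recursion does not close on it; a naive global bound (e.g.\ via the spectral gap of $\tfrac1m\sum_i\Pi_i$) is too lossy. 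Closing it properly will require controlling how much weight the $\meas_{\mathrm{cont}}$--generated process carries off the range of $\Pi_a$ before a $\Pi_a$--measurement occurs---for instance by re--organising the level--$t$ sum according to the first (or last) step at which $\Pi_a$ is measured, each such step forcing the post--measurement state into the range of $\id-\Pi_a$---and then checking that the resulting geometric--type series telescopes to at most $m\,\Pi_a$.

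\emph{Finally}, once the hitting--time bound is established, $X_a\le\Pi_a/d$ follows from the equivalence above, and $p_a=\tr(X_a)\le\tr(\Pi_a)/d$ is immediate by taking the trace. The first two stages are routine adaptations of the commuting proof; the telescoping bookkeeping in the third stage is the technical heart of the proposition.
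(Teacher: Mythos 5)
Your first two stages are sound: the series formula for $X_a$ is correct, and the passage to quadratic forms via self-adjointness of $\meas_{\mathrm{cont}}$ with respect to the Hilbert--Schmidt inner product correctly identifies the operator inequality $X_a\le\Pi_a/d$ as equivalent to the scalar statement that the process, started from any state supported in the range of $\Pi_a$, halts after at most $m$ measurements in expectation. (That statement is in fact tight: for $\Pi_1=\proj{0}$, $\Pi_2=\proj{+}$ on a single qubit, the expected halting time from $\ket{0}$ is exactly $2=m$.) But the proposal stops precisely where the proof has to begin. You explicitly defer ``essentially all the real work'' to an unexecuted third stage, and the mechanism you gesture at --- reorganising the level-$t$ sum by the first or last step at which $\Pi_a$ is measured --- does not obviously close: the obstruction is not only the satisfied $\Pi_a$-measurements but \emph{every} satisfied measurement $\id-\Pi_j$, each of which carries the state out of the range of $\Pi_a$, after which a $\Pi_a$-measurement halts with probability strictly less than one. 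No geometric series in a single escape parameter telescopes to $m\,\Pi_a$ without a quantitative handle on how fast probability returns to the range of $\Pi_a$, and the sketch supplies none.

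The paper obtains that handle by a different route: it vectorises the geometric series into a resolvent $(A-B)^{-1}\ket{\id}$ with $Q=\sum_i\Pi_i$, $A=Q^*\ox\id+\id\ox Q$ and $B=\sum_i\Pi_i^*\ox\Pi_i$, re-expands it as $\sum_{t\ge0}(A^{-1}B)^tA^{-1}\ket{\id}$, and controls the terms using three structural facts absent from your plan: the explicit solutions of the Sylvester equations $Q\sigma+\sigma Q=QQ^{-1}$ and $Q\sigma+\sigma Q=Q$ (\cref{lem:A,lem:BAB}), and the operator inequality $\Pi_aQ^{-1}\Pi_a\le\Pi_a$ (\cref{lem:tian} with $k=1$), where $Q^{-1}$ is the Moore--Penrose pseudoinverse. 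Some such control on $Q^{-1}$ appears unavoidable, since it is exactly what measures the leakage back into the range of $\Pi_a$. As written, your proposal is a correct reformulation plus an unproven plan for its hardest step, so it does not yet constitute a proof of the proposition.
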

This result is quite striking. It tells us that we if start from the maximally
mixed state and perform binary, two-outcome projective measurements at random
until we obtain the first ``0'' outcome, we can effectively ignore all
intermediate measurements withboutcome ``1'', and replace this process with
that of simply measuring the final outcome on the maximally mixed state
directly.

\begin{proof}
  We will in fact prove a slightly more general result. Let $A^{-1}$ denote
  the Moore-Penrose pseudoinverse of a matrix $A$, and $A^*$ denote entry-wise
  complex conjugation. The operation $vec(A)$ or $\ket{A}$ denotes
  vectorisation of the matrix $A$, i.e.\ the treatment of matrix
  $A\in\mathcal{M}_d(\CC)$ as a vector in $\mathbbm{C}^{d^2}$. By a slight
  abuse of notation, we sometimes write $\ket{X} \leq \ket{Y}$ to mean $X \leq
  Y$.

  We can express the unnormalised density matrix $X_a$ as
  \begin{equation}
    X_a = \frac{1}{m} \Pi_a \sum_{t=0}^{\infty} \left(
      \sum_{i_1,\dots,i_t}\frac{1}{m}(\id-\Pi_{i_t}) \cdots \left(
        \sum_{i_1} \frac{1}{m}(\id-\Pi_{i_1}) \frac{\id}{d} (\id-\Pi_{i_1})
      \right)
      \cdots (\id-\Pi_{i_t}) \right) \Pi_a,
  \end{equation}
  or, vectorising,
  \begin{equation}
    \ket{X_a} = \frac{1}{m} [\Pi_a^*\ox\Pi_a] \;
      \sum_{t=0}^\infty \left(
        \frac{1}{m} \sum_i (\id-\Pi_i^*)\ox(\id-\Pi_i)
      \right)^t
      \frac{1}{d} \ket{\id}.
  \end{equation}
  But this is a Neumann series $\sum_{t=0}^{\infty} T^t = (\id - T)^{-1}$.
  Thus the infinite sum can be written as
  \begin{align}
    \ket{X_a}
    &=\frac{1}{m} [\Pi_j^*\ox\Pi_j] \left(
        \id - \frac{1}{m} \sum_i (\id-\Pi_i^*)\ox(\id-\Pi_i)
      \right)^{-1}
      \frac{1}{d} \ket{\id}\\
    &=(\Pi_j^*\ox\Pi_j) \left(
        \sum_i (\id\ox\Pi_i) + (\Pi_i^*\ox\id) - (\Pi_i^*\ox\Pi_i)
      \right)^{-1}
      \frac{1}{d}\ket{\id}.
  \end{align}

  For a more compact notation, define $Q = \sum_i \Pi_i$,
  $B_i=\Pi_i^*\ox\Pi_i$, $B = \sum_i B_i$, and $A = \sum_i (\id\ox\Pi_i +
  \Pi_i^*\ox\id) = Q^*\ox\id + \id\ox Q$. Then the claim of the
  Proposition is equivalent to
  \begin{equation}
    B_i (A-B)^{-1} \ket{\id} \leq B_i \ket{\id}.
  \end{equation}

  We will prove a slightly more general result. Let
  $P=\Pi_{i_1}\Pi_{i_2}\cdots \Pi_{i_k}$ be a product of $k$ \emph{commuting}
  projectors chosen from $\{\Pi_i\}$, and define $B_P = P^*\ox P$. We claim
  that
  \begin{equation} \label{eq:shortclaim}
    B_P (A-B)^{-1} \ket{\id}
    \leq \left(\frac{1}{2}+\frac{1}{2k}\right) B_P \ket{\id},
  \end{equation}
  the Proposition being equivalent to the $k=1$ case of this. To show this, we
  need the following lemmas, which we prove below.

  \begin{lemma}\label{lem:tian}
    $P Q^{-1} P \leq \frac{1}{k} P$.
  \end{lemma}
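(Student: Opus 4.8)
The plan is to derive \cref{lem:tian} from the single operator inequality $Q\geq kP$ together with a general fact about pseudoinverses. First I would note that, since $\Pi_{i_1},\dots,\Pi_{i_k}$ are mutually commuting projectors, they are simultaneously diagonalisable: the product $P=\Pi_{i_1}\cdots\Pi_{i_k}$ is itself an orthogonal projector (onto $\bigcap_j\operatorname{range}(\Pi_{i_j})$), and in a common eigenbasis each $\Pi_{i_j}$ carries a $1$ wherever $P$ does, so $\Pi_{i_j}\geq P$ for every $j$. Summing these $k$ inequalities and adding the remaining non-negative terms of $Q=\sum_i\Pi_i$ gives $Q\geq\sum_{j=1}^{k}\Pi_{i_j}\geq kP$. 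From $Q\geq kP\geq0$ one gets $\ker Q\subseteq\ker P$, hence $\operatorname{range}(P)=(\ker P)^{\perp}\subseteq(\ker Q)^{\perp}=\operatorname{range}(Q)$; this is what makes the Moore--Penrose pseudoinverse $Q^{-1}$ act like a genuine inverse on all the vectors that appear below, and it guarantees that $PQ^{-1}P$ is supported on $\operatorname{range}(P)$.

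It then remains to show that $Q\geq kP$, together with $\operatorname{range}(P)\subseteq\operatorname{range}(Q)$, forces $PQ^{-1}P\leq\tfrac1kP$. Writing $\ket{w}=P\ket{x}$, this is equivalent to the scalar bound $\langle w|Q^{-1}|w\rangle\leq\tfrac1k\|w\|^{2}$ for every $\ket{w}\in\operatorname{range}(P)$, and I would establish it from the variational characterisation of the pseudoinverse,
\begin{equation*}
  \langle w|Q^{-1}|w\rangle=\max_{\ket{z}\,:\,\langle z|Q|z\rangle>0}\frac{|\langle z|w\rangle|^{2}}{\langle z|Q|z\rangle},
\end{equation*}
the maximum being attained at $\ket{z}=Q^{-1}\ket{w}$. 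Since $\ket{w}=P\ket{w}$, Cauchy--Schwarz for the positive semidefinite form $X\mapsto\langle\cdot|X|\cdot\rangle$ gives $|\langle z|w\rangle|^{2}=|\langle z|P|w\rangle|^{2}\leq\langle z|P|z\rangle\,\|w\|^{2}$, while $Q\geq kP$ gives $\langle z|Q|z\rangle\geq k\langle z|P|z\rangle$. Dividing, the factor $\langle z|P|z\rangle$ cancels (when $\langle z|P|z\rangle=0$ the ratio is $0$, since then $P\ket{z}=0$ and $\langle z|w\rangle=0$), so every term in the maximum is at most $\|w\|^{2}/k$. This gives the scalar bound and hence the lemma; the $k=1$ case is precisely the inequality $\Pi_a Q^{-1}\Pi_a\leq\Pi_a$ used in the proof of \cref{prop:first_violation}.

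The only real delicacy is bookkeeping around the pseudoinverse: all of the above should be read inside $W:=\operatorname{range}(Q)$, where $Q$ is genuinely invertible, and it is exactly the inclusion $\operatorname{range}(P)\subseteq W$ from the first step that keeps the variational identity and every cancellation valid. (If one prefers to avoid the variational formula, restricting to $W$ and conjugating $Q\geq kP$ by $Q^{-1/2}$ gives $Q^{-1/2}PQ^{-1/2}\leq\tfrac1kI_{W}$, and the identity $\|MM^{\dagger}\|=\|M^{\dagger}M\|$ with $M=Q^{-1/2}P$ converts this into the same scalar bound.) Beyond this the computation is short, so I do not expect a serious obstacle.
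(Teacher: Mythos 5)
Your proof is correct, and it follows the same overall route as the paper: first establish the operator inequality $Q\geq kP$ (the paper does this by lower-bounding each $\Pi_{i_n}$ by $\Pi_{i_n}P=P$, which is exactly your observation that $\Pi_{i_j}\geq P$ for commuting projectors), and then deduce the compressed inequality $PQ^{-1}P\leq\tfrac1kP$. Where you differ is in how carefully the second step is handled, and here your caution is not merely pedantic: the paper passes through the intermediate assertion $Q^{-1}\leq\tfrac1kP$ before multiplying by $P$ on both sides, and that uncompressed inequality is false in general (for instance with $k=1$ and $Q$ of full rank, $Q^{-1}$ is strictly positive on $\ker P$ while $\tfrac1kP$ vanishes there). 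Only the compressed statement survives, and your argument --- checking $\operatorname{range}(P)\subseteq\operatorname{range}(Q)$ from $Q\geq kP$, then using the variational characterisation of $\langle w|Q^{-1}|w\rangle$ together with Cauchy--Schwarz for the form defined by $P$ --- supplies exactly the justification the paper elides. So your proposal is a valid and slightly more rigorous rendering of the paper's proof rather than a genuinely different one.
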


  \begin{lemma}\label{lem:A}
    $A^{-1}\ket{\id} = \frac{1}{2}\ket{Q^{-1}} + \ket{Q_\perp}$, where
    $Q_\perp$ is an operator whose support (coimage) lies in the kernel of
    $Q$.
  \end{lemma}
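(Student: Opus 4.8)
The plan is to read off the action of $A=Q^*\ox\id+\id\ox Q$ directly from the spectral decomposition of $Q=\sum_i\Pi_i$. Since each $\Pi_i\geq 0$, the operator $Q$ is positive semidefinite and Hermitian; write $Q=\sum_k\lambda_k\proj{k}$ in an orthonormal eigenbasis, let $P_0$ be the projector onto $\ker Q=\vspan\{\ket{k}:\lambda_k=0\}$ (which is exactly the QLLL subspace, cf.\ \cref{def:spectral-gap}) and $P_R=\id-P_0$ the projector onto the support of $Q$, so that the pseudoinverse satisfies $Q^{-1}=\sum_{\lambda_k>0}\lambda_k^{-1}\proj{k}$ and $QQ^{-1}=Q^{-1}Q=P_R$.

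Next I would expand everything in the product basis $\{\ket{\bar k}\ox\ket{k'}\}$, where $\ket{\bar k}$ denotes entry-wise complex conjugation. With the vectorisation convention used in the proof of \cref{prop:first_violation} (under which $A\rho B\mapsto(B^T\ox A)\ket{\rho}$, and hence $\ket{\proj{k}}=\ket{\bar k}\ox\ket{k}$), one has $\ket{\id}=\sum_k\ket{\bar k}\ket{k}$, $\ket{P_R}=\sum_{\lambda_k>0}\ket{\bar k}\ket{k}$, and $\ket{Q^{-1}}=\sum_{\lambda_k>0}\lambda_k^{-1}\ket{\bar k}\ket{k}$. Each $\ket{\bar k}\ket{k'}$ is an eigenvector of $A$: since $Q^*\ket{\bar k}=\lambda_k\ket{\bar k}$ and $Q\ket{k'}=\lambda_{k'}\ket{k'}$, we get $A\,\ket{\bar k}\ket{k'}=(\lambda_k+\lambda_{k'})\ket{\bar k}\ket{k'}$, so in particular $A\,\ket{\bar k}\ket{k}=2\lambda_k\,\ket{\bar k}\ket{k}$ and $\ker A=\overline{\ker Q}\ox\ker Q$. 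Applying $A$ to $\tfrac12\ket{Q^{-1}}$ therefore gives $\sum_{\lambda_k>0}\ket{\bar k}\ket{k}=\ket{P_R}=\ket{\id}-\ket{P_0}$, while $\ket{P_0}=\sum_{\lambda_k=0}\ket{\bar k}\ket{k}\in\ker A$. Hence $\ket{\id}=A\bigl(\tfrac12\ket{Q^{-1}}\bigr)+\ket{P_0}$, and applying $A^{-1}$ yields $A^{-1}\ket{\id}=\tfrac12\ket{Q^{-1}}+\ket{Q_\perp}$ with $\ket{Q_\perp}\in\ker A$; writing $\ket{Q_\perp}$ as the vectorisation of an operator forces $Q_\perp=P_0Q_\perp P_0$, i.e.\ its support (coimage) lies in $\ker Q$ (and $Q_\perp=0$ if one takes $A^{-1}$ to be the pseudoinverse).

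I expect the only genuine obstacle to be bookkeeping of the conjugation/transpose on the left tensor factor of $A$: the cleanest way to avoid errors is exactly the one above, namely to pass to the $Q$-eigenbasis, where $Q^*\ox\id$ is diagonal on the conjugated eigenvectors, rather than manipulating vectorisation identities with $Q$ and $Q^T$ abstractly. A secondary point that must be stated carefully is that $A$ is singular, so ``$A^{-1}$'' is only defined up to an element of $\ker A$ --- which is precisely what the $\ket{Q_\perp}$ term records; and since in the application this term is subsequently annihilated by an operator of the form $P^*\ox P$ with $P$ a product of the $\Pi_i$ (which kills $\ker Q$, hence $\ker A$), its precise form is immaterial downstream.
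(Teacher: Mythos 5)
Your proof is correct and follows essentially the same route as the paper's: both reduce to the Lyapunov equation $Q\sigma+\sigma Q=QQ^{-1}$ and solve it up to an arbitrary kernel contribution $Q_\perp$, you simply carrying out the solution explicitly in the eigenbasis of $Q$ rather than by inspection. If anything, your spectral computation of $\ker A=\overline{\ker Q}\ox\ker Q$ is slightly more careful than the paper's shortcut $AA^{-1}=(QQ^{-1})^{*}\ox QQ^{-1}$, which is not a valid operator identity when $Q$ is singular but happens to give the correct vector when applied to $\ket{\id}$.
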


  \begin{lemma}\label{lem:BA}
    $B_P A^{-1} \ket{\id} \leq \frac{1}{2k} B_P\ket{\id}$, and in particular
    $B A^{-1} \ket{\id} \leq \frac{1}{2} B\ket{\id}$.
  \end{lemma}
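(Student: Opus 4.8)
The plan is to obtain \cref{lem:BA} essentially immediately from \cref{lem:A} and \cref{lem:tian}; the only genuine point is to see that the ``kernel part'' $\ket{Q_\perp}$ of $A^{-1}\ket{\id}$ is killed by $B_P$. Throughout I use that $B_P = P^*\ox P$ implements the superoperator $Z\mapsto PZP^\dagger$, exactly as the factor $\Pi_a^*\ox\Pi_a$ does in the expression for $X_a$ above; and since $P=\Pi_{i_1}\cdots\Pi_{i_k}$ is a product of commuting Hermitian projectors it is itself a Hermitian projector, so $B_P\ket Z = \ket{PZP}$ for every $Z$. Applying \cref{lem:A},
\[
  B_P A^{-1}\ket{\id} \;=\; \tfrac12\,B_P\ket{Q^{-1}} + B_P\ket{Q_\perp}
  \;=\; \tfrac12\,\ket{PQ^{-1}P} + \ket{PQ_\perp P}.
\]

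First I would record the inclusion $\ker Q\subseteq\ker P$. Since the $\Pi_{i_j}$ commute we may write $P = \bigl(\Pi_{i_2}\cdots\Pi_{i_k}\bigr)\Pi_{i_1}$, so any vector annihilated by $\Pi_{i_1}$ is annihilated by $P$; and because $Q = \Pi_{i_1} + \sum_{j\ne i_1}\Pi_j$ is a sum of positive semidefinite operators, $Q\ket v = 0$ forces $\Pi_{i_1}\ket v = 0$. Hence $\ker Q\subseteq\ker\Pi_{i_1}\subseteq\ker P$, and therefore $\supp(P) = (\ker P)^\perp\subseteq(\ker Q)^\perp$; equivalently, writing $\Pi_{\ker Q}$ for the orthogonal projector onto $\ker Q$, we have $\Pi_{\ker Q}\,P = 0$. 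On the other hand, \cref{lem:A} says the coimage of $Q_\perp$ lies inside $\ker Q$; taking orthogonal complements, $Q_\perp$ vanishes on $(\ker Q)^\perp$, i.e.\ $Q_\perp = Q_\perp\,\Pi_{\ker Q}$. Combining the two facts,
\[
  P\,Q_\perp\,P \;=\; P\,Q_\perp\,\Pi_{\ker Q}\,P \;=\; 0,
\]
so $B_P\ket{Q_\perp} = \ket{PQ_\perp P} = 0$ and $B_P A^{-1}\ket{\id} = \tfrac12\ket{PQ^{-1}P}$.

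Now I would apply \cref{lem:tian}, $PQ^{-1}P \leq \tfrac1k P$, noting also that $B_P\ket{\id} = \ket{P\,\id\,P} = \ket{P^2} = \ket P$. Since the ordering $\ket X\leq\ket Y$ is by definition the operator ordering $X\leq Y$, and both $\tfrac12 PQ^{-1}P$ and $\tfrac1{2k}P$ are Hermitian, this gives
\[
  B_P A^{-1}\ket{\id} \;=\; \tfrac12\ket{PQ^{-1}P} \;\leq\; \tfrac1{2k}\ket{P} \;=\; \tfrac1{2k}\,B_P\ket{\id},
\]
which is the first assertion of the lemma. For the ``in particular'' part, take $P=\Pi_i$ (a one-term, hence trivially commuting, product, so $k=1$) for each $i$; this yields $B_i A^{-1}\ket{\id}\leq\tfrac12 B_i\ket{\id}$, and summing over $i$ with $B=\sum_i B_i$ gives $B A^{-1}\ket{\id}\leq\tfrac12 B\ket{\id}$.

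The main obstacle --- really the only step that is not pure bookkeeping --- is the vanishing $PQ_\perp P = 0$, which is where commutativity of the particular projectors $\Pi_{i_1},\dots,\Pi_{i_k}$ making up $P$ is used (to guarantee $P$ is a projector with $\ker Q\subseteq\ker P$); the structural input about $Q_\perp$ comes for free from \cref{lem:A}. Once that is in hand, \cref{lem:tian} and the vectorisation identities do the rest.
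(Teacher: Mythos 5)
Your proof is correct and follows essentially the same route as the paper's: apply \cref{lem:A}, observe that the $\ket{Q_\perp}$ term is annihilated by $B_P$ (the paper compresses this into the remark that ``$P$ is in the support of $Q$'', which you justify in detail via $\ker Q\subseteq\ker\Pi_{i_1}\subseteq\ker P$), then unvectorise and invoke \cref{lem:tian}, and finally sum the $k=1$ case over $i$ for the second assertion. No gaps.
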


  \begin{lemma}\label{lem:BAB}
    $B_P A^{-1} B \ket{\id} = \frac{1}{2} B_P\ket{\id}$, and in particular $B
    A^{-1} B\ket{\id} = \frac{1}{2} B\ket{\id}$.
  \end{lemma}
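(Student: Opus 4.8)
The plan is to solve the linear equation $A\ket{M}=B\ket{\id}$ by inspection, and then to clear away the kernel of $A$ using the special form of $B_P$. First I would simplify the two sides. Using the vectorisation identity $(X\ox Y)\ket{\id}=\ket{XY^{T}}$, the Hermiticity of each $\Pi_i$ (so $\Pi_i^{T}=\Pi_i^{*}$), and $\Pi_i^{*}\Pi_i^{*}=(\Pi_i^{2})^{*}=\Pi_i^{*}$, one obtains
\begin{equation*}
  B\ket{\id}=\sum_i(\Pi_i^{*}\ox\Pi_i)\ket{\id}=\sum_i\ket{\Pi_i^{*}\Pi_i^{T}}=\sum_i\ket{\Pi_i^{*}}=\ket{Q^{*}},
\end{equation*}
and similarly $A\ket{\id}=(Q^{*}\ox\id+\id\ox Q)\ket{\id}=\ket{Q^{*}}+\ket{Q^{T}}=2\ket{Q^{*}}$, using that $Q$ is Hermitian. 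Hence $A\bigl(\tfrac12\ket{\id}\bigr)=B\ket{\id}$; that is, $\tfrac12\id$ is a particular solution of $A\ket{M}=B\ket{\id}$.

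Since $A$ is Hermitian and $A^{-1}$ denotes the Moore--Penrose pseudoinverse, $A^{-1}A$ is the orthogonal projector onto $\operatorname{range}(A)=(\ker A)^{\perp}$. Because $B\ket{\id}=A\bigl(\tfrac12\ket{\id}\bigr)\in\operatorname{range}(A)$, we get
\begin{equation*}
  A^{-1}B\ket{\id}=A^{-1}A\bigl(\tfrac12\ket{\id}\bigr)=\tfrac12\bigl(\id-\Pi_{\ker A}\bigr)\ket{\id},
\end{equation*}
where $\Pi_{\ker A}$ is the orthogonal projector onto $\ker A$. Applying $B_P$ then gives $B_PA^{-1}B\ket{\id}=\tfrac12 B_P\ket{\id}-\tfrac12 B_P\Pi_{\ker A}\ket{\id}$, so the lemma reduces to showing $B_P\Pi_{\ker A}=0$.

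For this last step I would describe $\ker A$ explicitly. From $A\ket{M}=\ket{Q^{*}M+MQ^{T}}$ and $Q=\sum_i\Pi_i\ge 0$, diagonalising $Q$ shows that $Q^{*}M+MQ^{T}=0$ iff $M$ is supported on the zero eigenspace of $Q$ from both sides, i.e.\ $M=\Pi_{0}^{*}M\Pi_{0}^{*}$, where $\Pi_{0}$ projects onto $\ker Q=\bigcap_i\ker\Pi_i$ (the intersection because $\BraKet{\psi}{Q}{\psi}=\sum_i\norm{\Pi_i\psi}^{2}$). The final factor $\Pi_{i_k}$ of $P=\Pi_{i_1}\cdots\Pi_{i_k}$ annihilates $\ker Q$, so $P\Pi_{0}=0$, and hence for every such $M$,
\begin{equation*}
  B_P\ket{M}=\ket{P^{*}MP^{T}}=\ket{(P\Pi_{0})^{*}M(P\Pi_{0})^{T}}=0.
\end{equation*}
Thus $B_P\Pi_{\ker A}=0$, which gives $B_PA^{-1}B\ket{\id}=\tfrac12 B_P\ket{\id}$; the ``in particular'' statement then follows by setting $P=\Pi_i$ (the $k=1$ case) and summing over $i$, since $B=\sum_i\Pi_i^{*}\ox\Pi_i$. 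I expect no genuine obstacle here: the fiddliest parts are the transpose/conjugation bookkeeping in the vectorisation identities and the characterisation of $\ker A$. Note also that this argument never uses that $P$ is itself a projector, only that $P$ kills $\bigcap_i\ker\Pi_i$, so unlike the other lemmas needed in this proof it does not require the $\Pi_{i_j}$ appearing in $P$ to commute.
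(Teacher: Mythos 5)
Your proof is correct and follows essentially the same route as the paper's: both reduce to the Sylvester equation $Q\sigma+\sigma Q=Q$, identify $\tfrac12\id$ as the solution modulo an arbitrary operator supported on $\ker Q$, and use that $P$ annihilates $\ker Q$ to dispose of the kernel term. Your version is just slightly more explicit about the Moore--Penrose bookkeeping ($A^{-1}A=\id-\Pi_{\ker A}$) and correctly notes, in passing, that this particular lemma does not need the factors of $P$ to commute.
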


  \begin{lemma}\label{lem:BAt}
    Let $t \geq 1$ be an integer. Then $(BA^{-1})^t\ket{\id} \leq
    \frac{1}{2^t} B\ket{\id}$.
  \end{lemma}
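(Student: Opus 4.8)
The plan is a short induction on $t$. The base case $t=1$ is exactly the second assertion of \cref{lem:BA}, namely $BA^{-1}\ket{\id}\le\tfrac12 B\ket{\id}$. For the inductive step, assume $(BA^{-1})^{t-1}\ket{\id}\le\tfrac{1}{2^{t-1}}B\ket{\id}$ and write
\[
  (BA^{-1})^{t}\ket{\id} \;=\; (BA^{-1})\bigl[(BA^{-1})^{t-1}\ket{\id}\bigr].
\]
Provided $BA^{-1}$ is monotone with respect to the operator ordering (i.e.\ it maps positive operators to positive operators, equivalently $X\le Y$ implies $BA^{-1}\ket{X}\le BA^{-1}\ket{Y}$ under the abuse $\ket{X}\le\ket{Y}\Leftrightarrow X\le Y$), applying it to the inductive hypothesis and then using the identity $BA^{-1}B\ket{\id}=\tfrac12 B\ket{\id}$ of \cref{lem:BAB} gives $(BA^{-1})^{t}\ket{\id}\le\tfrac{1}{2^{t-1}}\cdot\tfrac12 B\ket{\id}=\tfrac{1}{2^{t}}B\ket{\id}$, which closes the induction. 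So, beyond \cref{lem:BA,lem:BAB}, the whole content of the lemma is the monotonicity of $BA^{-1}$, and I expect that to be the only real obstacle.

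I would establish the monotonicity as follows. The identity $B\ket{X}=\ket{\sum_i\Pi_iX\Pi_i}$ shows that $B$ is completely positive, hence positivity-preserving. The operator $A$ is the superoperator $X\mapsto QX+XQ$ with $Q=\sum_i\Pi_i\succeq 0$, and since the QLLL solution space $\ker Q=\bigcap_i\ker\Pi_i$ --- onto which $P_0$ projects --- is in general nonzero, $A^{-1}$ has to be read as a pseudoinverse. The clean way to deal with this is to restrict attention to operators supported on $\supp Q=(\ker Q)^{\perp}$: on that block $Q$ is positive definite, the pseudoinverse is given by the Lyapunov integral $A^{-1}Y=\int_0^{\infty}e^{-sQ}Ye^{-sQ}\,ds$ for $Y$ supported there, and this exhibits $A^{-1}$ as an average of conjugations, hence completely positive on this block. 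Thus $BA^{-1}$ is a composition of completely positive maps there, and in particular monotone.

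It remains to check that this restriction is harmless, i.e.\ that every operator occurring in the induction lives on $\supp Q$. This uses only the single fact $\Pi_i P_0=P_0\Pi_i=0$, which is immediate from $\ker Q=\bigcap_i\ker\Pi_i$. Indeed $\ket{\id}$ and $B\ket{\id}=\ket{Q}$ are supported on $\supp Q$; and if $Y\ge 0$ is supported on $\supp Q$ then it is block-diagonal with respect to $\supp Q\oplus\ker Q$, so $A^{-1}Y$ is again supported on $\supp Q$ (the pseudoinverse kills the $\ker Q$ block and acts by the Lyapunov integral on the rest), and then $BA^{-1}Y=\ket{\sum_i\Pi_i(A^{-1}Y)\Pi_i}$ is supported on $\supp Q$ because $P_0\Pi_i=0$. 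Hence all iterates $(BA^{-1})^{s}\ket{\id}$, and all the differences $\tfrac{1}{2^{s}}B\ket{\id}-(BA^{-1})^{s}\ket{\id}$ to which the monotonicity step is applied, remain on this block, where the argument is legitimate, and the induction goes through. Note that, unlike \cref{lem:tian}, nothing in this argument requires the $\Pi_i$ to commute, which is why it still applies in the non-commutative setting of \cref{prop:first_violation}.
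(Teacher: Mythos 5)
Your proof is correct and follows essentially the same route as the paper's: unrolling your induction, you invoke \cref{lem:BA} once at the innermost position and then the equality of \cref{lem:BAB} a total of $t-1$ times, which is exactly the paper's argument. The one substantive difference is that you explicitly identify and prove the positivity-preservation (monotonicity) of $BA^{-1}$ needed to push an operator inequality through the iterates; the paper uses this tacitly when it applies $(BA^{-1})^{t-1}$ to both sides of $BA^{-1}\ket{\id}\leq\frac{1}{2}B\ket{\id}$ without comment, so your observation that this is ``the only real obstacle'' is accurate. Your justification is sound: $B$ acts as $X\mapsto\sum_i\Pi_iX\Pi_i$ and is completely positive, and on operators supported on $\supp Q$ the pseudoinverse $A^{-1}$ is the Lyapunov integral $\int_0^\infty e^{-sQ}(\cdot)\,e^{-sQ}\,ds$, hence also completely positive there; your check that everything after the base case lies in the image of $B$ (because $P_0\Pi_i=\Pi_iP_0=0$), and is therefore supported on $\supp Q$, is the right way to make the pseudoinverse harmless. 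One tiny slip: $\ket{\id}$ itself is \emph{not} supported on $\supp Q$ unless $Q$ has full rank; this does not damage the argument, since monotonicity is never applied to a difference involving the bare $\ket{\id}$ (the $t=1$ case is \cref{lem:BA} directly, and all later iterates are in the image of $B$), but the sentence as written is inaccurate.
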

  \cref{eq:shortclaim} now follows from
  \begin{subequations}
  \begin{align}
    B_P (A-B)^{-1} \ket{\id}
    &= B_P(A(\id-A^{-1}B))^{-1}\ket{\id} \label{sprt}
    = B_P(\id-A^{-1}B)^{-1}A^{-1}\ket{\id}\\
    &= B_P\sum_{t=0}^\infty (A^{-1}B)^tA^{-1}\ket{\id}
    = B_P A^{-1}\sum_{t=0}^\infty (BA^{-1})^{t}\ket{\id}\\
    &= B_P A^{-1}\ket{\id} + B_P A^{-1}\sum_{t=1}^{\infty}(BA^{-1})^t\ket{\id}\\
    &\leq B_P A^{-1}\ket{\id}
      + B_P A^{-1}\sum_{t=1}^{\infty}\frac{1}{2^t}B\ket{\id}
      \label{a1}\\
    &= B_P A^{-1}\ket{\id} + B_P A^{-1} B \ket{\id}
    \leq \frac{1}{2k} B_P \ket{\id} + \frac{1}{2} B_P \ket{\id}\label{a2}\\
    &= \left(\frac{1}{2}+\frac{1}{2k}\right) B_P \ket{\id},
  \end{align}
  \end{subequations}
  where the first equality in \cref{sprt} follows as $B$ is in the support of
  $A$, \cref{a1} from \cref{lem:BAt}, and the inequality in \cref{a2} from
  \cref{lem:BA} and \cref{lem:BAB}.
\end{proof}

We now prove the five lemmas used above.
\begin{proof}[of \cref{lem:tian}]
  First note that
  \begin{equation}
    \sum_i \Pi_i \geq k\,\Pi_{i_1}\Pi_{i_2}\cdots \Pi_{i_k}
  \end{equation}
  because
  \begin{align}
    \sum_i \Pi_i - k\,\Pi_{i_1}\Pi_{i_2}\cdots \Pi_{i_k}
    &= \Pi_{i_1}+\Pi_{i_2}+\Pi_{i_k}+\sum_{\mathclap{j\notin \{i_1,...,i_k\}}} \Pi_j
    - k \Pi_{i_1}\Pi_{i_2}\cdots \Pi_{i_k}\\
    &\geq k \Pi_{i_1}\Pi_{i_2}\cdots \Pi_{i_k}
      + \sum_{\mathclap{j\notin \{i_1,...,i_k\}}} \Pi_j
      - k \Pi_{i_1}\Pi_{i_2}\cdots \Pi_{i_k}
      \label{prbnd}\\
    &= \sum_{\mathrlap{j\notin \{i_1,...,i_k\}}} \Pi_j \;
    \geq 0,
  \end{align}
  where \cref{prbnd} follows from lower bounding each term $\Pi_{i_n}$ by
  $\Pi_{i_n} P$, which is true for any projector. (Here, $P =
  \prod_{n=1}^k\Pi_{i_n}$ is a projector since the $\Pi_{i_n}$ commute by
  assumption.) But this implies
  \begin{equation}
    \Bigl(\sum_i \Pi_i\Bigr)^{-1}
    \leq \frac{1}{k}\Pi_{i_1}\Pi_{i_2}\cdots \Pi_{i_k} = \frac{1}{k} P,
  \end{equation}
  and the lemma follows by left- and right-multiplying by $P$.
\end{proof}

\begin{proof}[of \cref{lem:A}]
  We consider the equation $A^{-1}\ket{\id} = \ket{\sigma}$ and solve for the
  unknown $\ket{\sigma}$. By first multiplying with $A$ from the left, we see
  that $AA^{-1}\ket{\id}=A\ket{\sigma}$, where $AA^{-1}$ is the projector onto
  the support of $A$. Furthermore, note that $AA^{-1} = (QQ^{-1})^{*} \ox
  QQ^{-1}$, thus $AA^{-1}\ket{\id} = \ket{QQ^{-1}}$. Therefore, written in
  unvectorised form, we have to solve the matrix equation $Q\sigma + \sigma Q
  = QQ^{-1}$. Clearly the restriction of $\sigma$ to the support of $Q$ can
  only be $\frac{1}{2}Q^{-1}$, but we can add an arbitrary operator in the
  kernel of $Q$.
\end{proof}

\begin{proof}[of \cref{lem:BA}]
  By \cref{lem:A}, $B_P A^{-1}\ket{\id} = \frac{1}{2} B_P \ket{Q^{-1}}
  \eqqcolon \ket{\sigma}$, using the fact that $P$ is in the support of $Q$.
  Thus, unvectorised, we have $\frac{1}{2} P Q^{-1} P = \sigma$. But by
  \cref{lem:tian}, $\frac{1}{2}P Q^{-1} P \leq \frac{1}{2k} P$, thus
  $\ket{\sigma} \leq \frac{1}{2k}\ket{P} = \frac{1}{2k}B_P\ket{\id}$. The
  second part of the lemma follows simply by summing the $k=1$ case over $i$.
\end{proof}

\begin{proof}[of \cref{lem:BAB}]
  First, note that $B\ket{\id}=\ket{Q}$, thus $B_P A^{-1}B\ket{\id} = B_P
  A^{-1}\ket{Q}$. Defining $A^{-1}\ket{Q} \eqqcolon \ket{\sigma}$, we can
  multiply with $A$ from the left and solve $AA^{-1}\ket{Q}=A\ket{\sigma}$ for
  $\ket{\sigma}$. Now note that $AA^{-1}\ket{Q} = [(QQ^{-1})^{*}\otimes
  QQ^{-1}]\ket{Q} = \ket{Q}$. Unvectorising, we have $Q\sigma + \sigma Q = Q$,
  and clearly $\sigma=\frac{1}{2}\id + Q_\perp$, where we can add an arbitrary
  operator $Q_\perp$ in the kernel of $Q$. The lemma follows easily when we
  recall that $P$ is in the support of $Q$.
\end{proof}

\begin{proof}[of \cref{lem:BAt}]
  First we bound
  \begin{equation}
    (BA^{-1})^t\ket{\id}
      = (BA^{-1})^{t-1}BA^{-1}\ket{\id}
      \leq \frac{1}{2}(BA^{-1})^{t-1}B\ket{\id}
  \end{equation}
  using \cref{lem:BA}. To show
  $(BA^{-1})^{t-1}B\ket{\id}=\frac{1}{2^{t-1}}B\ket{\id}$ by induction, it
  suffices to show
  \begin{equation}
    BA^{-1}B\ket{\id} = \frac{1}{2}B\ket{\id},
  \end{equation}
  which is then iterated $t-1$ times. But this is already established by
  \cref{lem:BAB}.
\end{proof}

\subsubsection{A Weaker Conjecture}
\Cref{conj:cubitt-schwarz_poly} weakens the bound we proved in the commutative
case by a polynomial factor. In fact, we will see that we can even cope with
an \emph{exponential} factor, and still prove a constructive QLLL with
polynomial run-time, albeit at the expense of having to strengthen the
Lov\'asz conditions in the non-commutative case. (We recover the original
Lov\'asz conditions if the projectors commute.)

\begin{conjecture}\label{conj:cubitt-schwarz_exp}
  Let $\{\Pi_i\}$ be an arbitrary set of projectors, let $\tau$ be a fixed
  tree with vertices labelled by these projectors, and $L$ the log produced by
  running \cref{alg:quantum} with these projectors. We conjecture that the
  probability that $\tau$ occurs in $L$ is at most
  \begin{equation}
    \Pr(\tau) \leq \frac{1\phantom{^{\Abs{\tau}}}}{(m\delta)^{\Abs{\tau}}}
                   \prod_{v\in \tau}\prob[\Pi(v)],
  \end{equation}
  where $\Abs{\tau}$ is the size of $\tau$ (total number of vertices).
\end{conjecture}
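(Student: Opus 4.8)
The plan is to establish this conjecture by induction on $\Abs{\tau}$, building on the single-vertex result \cref{prop:first_violation} and its resolvent method. There is nothing to prove in the commutative case: when the $\Pi_i$ commute one has $\delta=1/m$, so $(m\delta)^{-\Abs{\tau}}=1$ and the bound is exactly the one already proved in \cref{lem:tau-check} (equivalently \cref{cor:partial_DAG_probability}). The base case $\Abs{\tau}=1$ is the content of \cref{prop:first_violation}, which gives the even stronger bound $\Pr(\tau)\le\prob[\Pi(v)]$. So the whole task is to produce, for a leaf $\ell$ of $\tau$, a recursive operator inequality of the shape
\begin{equation*}
  X_\tau \;\le\; \frac{1}{m\delta}\,\prob[\Pi(\ell)]\;X_{\tau\setminus\ell},
\end{equation*}
where $X_\tau$ is the unnormalised density operator corresponding to the event ``$\tau$ occurs in \cref{alg:quantum}'s log'' and $X_{\tau\setminus\ell}$ is the analogue after deleting the leaf $\ell$; iterating and then applying \cref{prop:first_violation} to the surviving root would give the conjecture.

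To build such a recursion I would follow the bookkeeping of \cref{lem:perpetual_measurement_process,lem:partial_resample_DAG}: write $X_\tau$ (or an operator dominating it, obtained as in the DAG argument by also recording the intervening relevant violations) as a composition $\chan_{a_t}\circ\tfrac1m\meas_{a_t}\circ\meas_\Sigma^{(t)}\circ\dots\circ\chan_{a_1}\circ\tfrac1m\meas_{a_1}\circ\meas_\Sigma^{(1)}(\id/d)$ running over a relevant violation sequence $\Pi_{a_1},\dots,\Pi_{a_t}$, where each $\meas_\Sigma^{(i)}$ absorbs an arbitrary block of intervening satisfied measurements and irrelevant resamplings. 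Then, exactly as in the proof of \cref{prop:first_violation}, I would vectorise and recognise every $\meas_\Sigma^{(i)}$ as a Neumann series $\sum_{t\ge0}T^t=(\id-T)^{-1}$, so that $X_\tau$ becomes a product of resolvents $(A-B)^{-1}$ acting on $\tfrac1d\ket{\id}$. Summing out the innermost $\meas_\Sigma^{(1)}$ and then the innermost $\chan_{a_1}\circ\meas_{a_1}$ should leave an expression of the same shape, shortened by the $a_1$ terms -- which is the desired recursion, the only difference from the commutative combinatorial proof being that the scalar factor now has to be controlled by an estimate on the resolvent rather than by the exact cancellations of \cref{lem:CP_map_identities}.

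The hard part -- and the reason this remains only a conjecture -- is exactly that resolvent estimate. Without commutativity the benign telescoping $(\id-\Pi_j)\tfrac{\id}{d}(\id-\Pi_j)\le\tfrac{\id}{d}$ that underlies \cref{lem:halting_measurement_process} and the whole of \cref{sec:combinatorial_proof} no longer holds once a satisfied measurement sits between the non-commuting resamplings $\chan_{a_i}$; indeed \cref{ex:violation} already shows that the unweighted bound $\prod_v\prob[\Pi_v]$ fails (by a small constant) for a two-label DAG, so the prefactor genuinely cannot be taken to be $1$ in general. Moreover the chain of lemmas \cref{lem:tian,lem:A,lem:BA,lem:BAB,lem:BAt} that drives \cref{prop:first_violation} relies on $P=\Pi_{i_1}\cdots\Pi_{i_k}$ being a genuine projector, which needs the $\Pi_i$ to commute. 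A full proof would therefore require a genuinely new, quantitative statement: that peeling one relevant violation off the resolvent costs at most a factor of order $m\delta$, with the spectral gap entering because by \cref{eq:spectral_gap} it lower-bounds the probability of a violation on the complement of the satisfying subspace. I would expect the induction to go through (a) trivially when the $\Pi_i$ commute, (b) for $\Abs{\tau}=1$ via \cref{prop:first_violation}, and (c), with a more explicit diagonalisation of the resolvent, for witness trees whose vertices carry at most two distinct labels; the general tree appears to need a new idea for quantifying the accumulated measurement disturbance.
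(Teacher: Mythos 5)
This statement is a \emph{conjecture} that the paper itself leaves open: the only things actually proved about it are the single-vertex case and the observation that it is consistent with the commutative results. Your proposal, which is an induction plan with an admitted missing step rather than a proof, is therefore in roughly the same state as the paper; the admitted gap (a quantitative estimate showing that peeling one violation off the iterated-measurement expression costs at most a factor $1/(m\delta)$ times $\prob[\Pi(\ell)]$) is exactly the open problem, and no amount of rearranging the resolvent bookkeeping is known to close it. That said, two concrete points in your write-up are off. First, your base case cites \cref{prop:first_violation}, whose bound $p_a\le\tr(\Pi_a)/d$ implies the conjectured $\frac{1}{m\delta}\tr(\Pi_a)/d$ only when $\delta\le 1/m$; since $\delta$ can exceed $1/m$ for non-commuting (and even for commuting) projectors, the correct base case is \cref{prop:first_violation2}, which proves precisely the $\frac{1}{m\delta}$-weighted bound by the decomposition $\meas^t(\id)\le G+(1-\delta)^t E$. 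That spectral-gap decomposition, not the Neumann-series/resolvent calculus of \cref{prop:first_violation}, is the paper's motivation for the $(m\delta)^{-\Abs{\tau}}$ prefactor, and it is the more natural template for any inductive step here, since it is exactly where $\delta$ enters. Second, and for the same reason, your claim that ``there is nothing to prove in the commutative case'' is not quite right: for commuting projectors $\delta\ge 1/m$ with possible strict inequality, in which case $(m\delta)^{-\Abs{\tau}}<1$ and the conjecture asserts something strictly \emph{stronger} than \cref{lem:tau-check} and \cref{cor:partial_DAG_probability}.

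Beyond that, your diagnosis of why the commutative machinery breaks (the telescoping inequality $(\id-\Pi_j)\tfrac{\id}{d}(\id-\Pi_j)\le\tfrac{\id}{d}$ no longer composes usefully across non-commuting resamplings, \cref{lem:tian} through \cref{lem:BAt} need $P$ to be a projector, and \cref{ex:violation} kills the unweighted bound already for two vertices) agrees with the paper's discussion. Your claim (c), that the induction goes through for trees with at most two distinct labels via ``a more explicit diagonalisation of the resolvent,'' is not substantiated by anything you or the paper provide; the paper proves only the single-vertex case and otherwise appeals to limited numerics, so you should either supply that two-label argument or drop the claim.
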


The motivation for this conjecture comes from the following result, which once
again proves the conjecture for the simplest case of trees containing a single
vertex.
\begin{proposition}\label{prop:first_violation2}
  Let $\{\Pi_i\}$ be a set of $m$ projectors on $\CC^d$, and define $\delta$
  to be the spectral gap of the associated Hamiltonian
  \begin{equation}
    H = \frac{1}{m}\sum_i\Pi_i.
  \end{equation}
  Consider the following iterated measurement process, starting from the
  maximally mixed state. In each step, a projector $\Pi_i$ is chosen
  independently uniformly at random, and the two-outcome measurement
  $\{\Pi_i,\id-\Pi_i\}$ is performed. This is repeated until a $\Pi_i$ outcome
  is obtained, at which point the process halts.

  Let $\rho_a$ denote the final state of the system given that it halted on
  outcome $\Pi_a$, and $p_a$ the probability that this occurs. Then the
  (unnormalised) density matrix corresponding to the process halting on
  outcome $\Pi_a$ satisfies
  \begin{subequations}
  \begin{equation}
    X_a = p_a\rho_a \leq \frac{1}{m\delta}\Pi_a/d,
  \end{equation}
  hence
  \begin{equation}
    p_a = \tr(X_a) \leq \frac{1}{m\delta}\frac{\tr(\Pi_a)}{d}.
  \end{equation}
  \end{subequations}
\end{proposition}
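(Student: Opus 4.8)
The plan is to re-run the proof of \cref{prop:first_violation} almost verbatim, inserting the spectral gap at the one place where it buys the extra factor $1/(m\delta)$. Vectorising the Neumann series for $X_a$ exactly as there reduces the claim to the operator inequality $B_a(A-B)^{-1}\ket{\id}\leq\tfrac{1}{m\delta}B_a\ket{\id}$, with the same notation $Q=\sum_i\Pi_i$, $B_i=\Pi_i^{*}\ox\Pi_i$, $B=\sum_iB_i$, $A=Q^{*}\ox\id+\id\ox Q$, and using $\ket{X_a}=B_a(A-B)^{-1}\tfrac{1}{d}\ket{\id}$ and $\ket{\Pi_a}=B_a\ket{\id}$. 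The one genuinely new ingredient is an operator bound on $Q^{-1}$: since $H=\tfrac{1}{m}Q$ has $P_0$ as its zero-energy ground space, the characterisation \cref{eq:spectral_gap} of $\delta$ says exactly that $\bra{\psi}Q\ket{\psi}\geq m\delta$ for every unit vector $\psi$ orthogonal to $P_0$, i.e.\ $Q\geq m\delta\,(\id-P_0)$; since $\ker Q=\bigcap_j\ker\Pi_j$ equals the range of $P_0$, passing to pseudoinverses gives $Q^{-1}\leq\tfrac{1}{m\delta}(\id-P_0)$. Conjugating by $\Pi_a$ and using $\Pi_aP_0=0$ (because $\bigcap_j\ker\Pi_j\subseteq\ker\Pi_a$) yields $\Pi_aQ^{-1}\Pi_a\leq\tfrac{1}{m\delta}\Pi_a$, and summing over $i$ gives $\sum_i\Pi_iQ^{-1}\Pi_i\leq\tfrac{1}{m\delta}Q$. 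These are the gap-improved replacements for the $k=1$ case of \cref{lem:tian}.

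With these in hand I re-examine the five lemmas behind \cref{prop:first_violation}. \cref{lem:A} is unchanged, and \cref{lem:BAB} remains the exact identity $BA^{-1}B\ket{\id}=\tfrac12 B\ket{\id}$ (likewise $B_aA^{-1}B\ket{\id}=\tfrac12 B_a\ket{\id}$), since its proof uses only that $\Pi_a$, resp.\ each $\Pi_i$, lies in the support of $Q$. However \cref{lem:BA} improves: from $B_aA^{-1}\ket{\id}=\tfrac12\ket{\Pi_aQ^{-1}\Pi_a}$ and $BA^{-1}\ket{\id}=\tfrac12\ket{\sum_i\Pi_iQ^{-1}\Pi_i}$, together with the two bounds above, one gets $B_aA^{-1}\ket{\id}\leq\tfrac{1}{2m\delta}B_a\ket{\id}$ and $BA^{-1}\ket{\id}\leq\tfrac{1}{2m\delta}B\ket{\id}$. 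Feeding the improved \cref{lem:BA} into the outermost factor and iterating the unchanged \cref{lem:BAB} $t-1$ times upgrades \cref{lem:BAt} to $(BA^{-1})^{t}\ket{\id}\leq\tfrac{1}{2^{t}m\delta}B\ket{\id}$ for $t\geq1$, so $\sum_{t\geq1}(BA^{-1})^{t}\ket{\id}\leq\tfrac{1}{m\delta}B\ket{\id}$. Expanding $(A-B)^{-1}=\sum_t(A^{-1}B)^{t}A^{-1}$ and using positivity of the superoperator $B_aA^{-1}$, exactly as in \cref{prop:first_violation},
\[
  B_a(A-B)^{-1}\ket{\id}
  = B_aA^{-1}\ket{\id}+B_aA^{-1}\!\sum_{t\geq1}(BA^{-1})^{t}\ket{\id}
  \leq \tfrac{1}{2m\delta}B_a\ket{\id}+\tfrac{1}{m\delta}B_aA^{-1}B\ket{\id}
  = \tfrac{1}{m\delta}B_a\ket{\id},
\]
using $B_aA^{-1}B\ket{\id}=\tfrac12 B_a\ket{\id}$ in the last step. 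Unvectorising gives $X_a\leq\tfrac{1}{m\delta}\Pi_a/d$, and taking the trace gives $p_a=\tr(X_a)\leq\tfrac{1}{m\delta}\tr(\Pi_a)/d$.

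The real content is the bookkeeping of checking that the factor $1/(m\delta)$ threads cleanly through the two terms of the Neumann-series decomposition to give exactly $1/(m\delta)$ rather than something weaker: this works only because one contribution, $B_aA^{-1}B\ket{\id}$, is an \emph{exact} identity onto which the $1/(m\delta)$ from the series tail multiplies with no loss, while the other already carries its own $1/(2m\delta)$ from the improved \cref{lem:BA}. The remaining points are the standard ones already implicit in the proof of \cref{prop:first_violation}: that $\ker Q$ is precisely the range of $P_0$, so the pseudoinverse bound $Q^{-1}\leq\tfrac{1}{m\delta}(\id-P_0)$ is legitimate; that the inverse Lyapunov map $A^{-1}$ (solving $Q\sigma+\sigma Q=\,\cdot\,$ on the support of $Q$) is positivity-preserving; and that the Lov\'asz conditions are in force, so that \cref{eq:spectral_gap} is the relevant description of $\delta$.
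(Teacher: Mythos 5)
Your proof is correct, but it takes a genuinely different route from the paper's. The paper proves \cref{prop:first_violation2} directly, with no vectorisation at all: writing $X_a=\frac{1}{m}\Pi_a\bigl(\sum_{t}\meas^t(\id/d)\bigr)\Pi_a$ with $\meas(\rho)=\frac{1}{m}\sum_i(\id-\Pi_i)\rho(\id-\Pi_i)$, it splits $\id=G+E$ into the kernel of $H$ and its orthogonal complement, computes $\meas(\alpha G+\beta E)=\alpha G+\beta(E-H)\leq\alpha G+\beta(1-\delta)E$, iterates to get $\meas^t(\id)\leq G+(1-\delta)^tE$ (using only that $\meas$ is CP, hence order-preserving), and sums the geometric series via $\Pi_aG=0$ and $\Pi_aE\Pi_a=\Pi_a$. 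That three-line spectral argument is precisely why the paper can remark that \cref{prop:first_violation2} is ``substantially easier to prove'' than \cref{prop:first_violation}. You instead rerun the Lyapunov/Neumann-series machinery of \cref{prop:first_violation}, replacing \cref{lem:tian} by the gap bound $Q\geq m\delta(\id-P_0)$, hence $Q^{-1}\leq\tfrac{1}{m\delta}(\id-P_0)$ and $\Pi_aQ^{-1}\Pi_a\leq\tfrac{1}{m\delta}\Pi_a$, and rethreading the constant through \cref{lem:BA,lem:BAB,lem:BAt}. The bookkeeping checks out: the exactness of \cref{lem:BAB} is indeed what keeps the two halves of the Neumann decomposition at $\tfrac{1}{2m\delta}$ each rather than degrading the final constant, and your route has the merit of exhibiting \cref{prop:first_violation,prop:first_violation2} as the same skeleton fed with two different bounds on $\Pi_aQ^{-1}\Pi_a$ ($\tfrac{1}{k}\Pi_a$ from \cref{lem:tian} versus $\tfrac{1}{m\delta}\Pi_a$ from the gap). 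The price is that you inherit the implicit assumptions of that skeleton --- positivity preservation of $A^{-1}$ (legitimate, since $A^{-1}(X)=\int_0^\infty e^{-Qt}Xe^{-Qt}\,dt$ on the support of $Q$) and the formal expansion of $(A-B)^{-1}$ --- none of which the direct spectral argument needs; both proofs share the implicit assumption that the ground energy of $H$ is zero, so that $\delta$ is its smallest nonzero eigenvalue.
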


For commuting projectors, we always have $\delta \geq 1/m$ (and equality holds
unless \emph{every} state outside the kernel of $H$ violates more than one
projector), and we recover \cref{lem:halting_measurement_process}. For
non-commuting projectors, $\delta$ can range between 0 and 1, but we would
expect the more difficult cases to be when $\delta < 1/m$, in which case
\cref{prop:first_violation2} is a weaker statement than that already proven in
\cref{prop:first_violation}.

Nonetheless, \cref{prop:first_violation2} is substantially easier to prove
than \cref{prop:first_violation}, which could suggest that
\cref{conj:cubitt-schwarz_exp} is the ``correct'' non-commutative
generalisation of the bound in \cref{lem:tau-check}, and
\cref{prop:first_violation} is a red-herring that only applies to the very
special case of single-vertex trees.

\begin{proof}
  Define for the linear map $\meas(\rho) \coloneqq
  \frac{1}{m}\sum_i(\id-\Pi_i)\rho(\id-\Pi_i)$, and let $\meas^t$ denote the
  $t$-fold composition of $\meas$. Then the unnormalised density matrix $X_a$
  is given by
  \begin{subequations}
  \begin{align}
    X_a &= \frac{1}{m} \Pi_a \sum_{t=0}^{\infty} \left(
      \sum_{i_1,\dots,i_t}\frac{1}{m}(\id-\Pi_{i_t}) \cdots \left(
        \sum_{i_1} \frac{1}{m}(\id-\Pi_{i_1}) \frac{\id}{d} (\id-\Pi_{i_1})
      \right)
      \cdots (\id-\Pi_{i_t}) \right) \Pi_a\\
    &=\frac{1}{m} \Pi_a \left(\sum_{t=0}^\infty \meas^t(\id/d) \right) \Pi_a.
      \label{eq:X_meas}
  \end{align}
  \end{subequations}

  Let $G$ be the projector onto the ground state subspace (kernel) of $H$, and
  $E = G^\perp$ be the projector onto the subspace of excited states (the
  support, or coimage, of $H$). Note that $\forall i: \Pi_i G = G\Pi_i = 0$
  and $\Pi_i E = E\Pi_i = \Pi_i$. For any scalars $\alpha,\beta$,
  \begin{subequations}
  \begin{align}
    \meas(\alpha G + \beta E)
    &= \frac{1}{m} \sum_{i=1}^m (\id-\Pi_i)(\alpha G + \beta E)(\id-\Pi_i)
    = \alpha G + \beta\frac{1}{m}\sum_{i=1}^m (E-\Pi_i)\\
    &= \alpha G + \beta E - H \leq \alpha G + \beta (1-\delta) E.
  \end{align}
  \end{subequations}
  Applying this operator inequality $t$ times starting from $\alpha = \beta =
  1$ gives
  \begin{equation}
    \meas^t(\id) \leq G + (1-\delta)^t E.
  \end{equation}
  Using this in \cref{eq:X_meas} leads to
  \begin{equation}
    X_a \leq \frac{1}{m}\sum_{t=0}^\infty
           \frac{1}{d} \Pi_a \left(G + (1-\delta)^t E \right) \Pi_a
        = \frac{1}{m}\sum_{t=0}^\infty \frac{1}{d}(1-\delta)^t \Pi_a
        = \frac{1}{m\delta}\Pi_a/d,
  \end{equation}
  which proves the proposition.
\end{proof}

\subsection{Expected number of violations}
\label{sec:non-commuting_expected_violations}
From the conjectures, we can derive bounds on the expected number of
violations seen by \cref{alg:quantum}.

\begin{definition}[$\epsilon$-strengthened Lov\'asz conditions]
  \label{def:strengthened_Lovasz_conditions}
  Let $\Pi_1,\Pi_2,\dots,\Pi_m$ be projectors that act on arbitrary subsets of
  $n$ qudits. We say that the set of projectors $\{\Pi_i\}$ satisfies the
  \keyword{$\epsilon$-strengthened Lov\'asz conditions} if there exist values
  $0 \leq x_1,x_2,\dots,x_m \leq 1$ such that
  \begin{equation}
    R(\Pi_i) \leq (1-\epsilon)x_i\cdot\prod_{\mathclap{\Pi_j\in \Gamma(\Pi_i)}} (1-x_j).
  \end{equation}
\end{definition}

\begin{theorem}\label{thm:expected_violations_poly}
  Let $\{\Pi_i\}$ be a set of $m$ projectors satisfying the
  $\epsilon$-strengthened Lov\'asz conditions of
  \cref{def:strengthened_Lovasz_conditions} for some $\epsilon > 0$. If
  \cref{conj:cubitt-schwarz_poly} is true, then there exists a constant $C$
  (which depends on $\epsilon$) such that the expected number of violations
  seen by \cref{alg:quantum} is bounded by
  \begin{equation}
    \expectation(\text{\textup{total number of violations}})
    \leq C\sum_{i=1}^m \frac{x_i}{1-x_i}.
  \end{equation}
\end{theorem}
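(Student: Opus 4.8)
The plan is to rerun the argument of the proof of \cref{thm:expected_violations} essentially verbatim, feeding in the weakened bound of \cref{conj:cubitt-schwarz_poly} in place of \cref{lem:tau-check}, and showing that the extra $\poly(\Abs{\tau})$ factor is harmless because the $\epsilon$-strengthening of the Lov\'asz conditions supplies a uniform exponential damping $(1-\epsilon)^{\Abs{\tau}}$ that dominates it.

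First, as before, let $N_a$ count the appearances of $\Pi_a$ in \cref{alg:quantum}'s log, and let $\mathcal{T}_a$ be the set of proper witness trees with root labelled by $\Pi_a$. Each appearance of $\Pi_a$ generates a distinct such tree, so $\expectation(N_a)=\sum_{\tau\in\mathcal{T}_a}\Pr(\text{$\tau$ occurs in the log})$. Applying \cref{conj:cubitt-schwarz_poly} and then $\Pr[\Pi_i]=R(\Pi_i)$ gives
\[
  \expectation(N_a)\leq\sum_{\tau\in\mathcal{T}_a}\poly(\Abs{\tau})\prod_{\Pi_i\in\tau}R(\Pi_i).
\]
Next invoke the $\epsilon$-strengthened Lov\'asz conditions of \cref{def:strengthened_Lovasz_conditions}: $R(\Pi_i)\leq(1-\epsilon)x'_i$ with $x'_i=x_i\prod_{\Pi_j\in\Gamma(\Pi_i)}(1-x_j)$, so that $\prod_{\Pi_i\in\tau}R(\Pi_i)\leq(1-\epsilon)^{\Abs{\tau}}\prod_{\Pi_i\in\tau}x'_i$. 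Then \cref{lem:Galton-Watson} lets us write $\prod_{\Pi_i\in\tau}x'_i=\frac{x_a}{1-x_a}\Pr(\tau_a)$, where $\Pr(\tau_a)$ is the probability the Galton--Watson process of \cref{sec:expected_violations} produces $\tau$, yielding
\[
  \expectation(N_a)\leq\frac{x_a}{1-x_a}\sum_{\tau\in\mathcal{T}_a}\poly(\Abs{\tau})(1-\epsilon)^{\Abs{\tau}}\Pr(\tau_a).
\]

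The only new ingredient is to bound the remaining sum. Every witness tree has at least one vertex, so $\Abs{\tau}\geq1$; and for $\epsilon>0$ the function $z\mapsto\poly(z)(1-\epsilon)^z$ tends to $0$ as $z\to\infty$ (the exponential decay beats any fixed polynomial; replace $\poly$ by a monotone positive majorant if necessary), hence attains a finite maximum $C=C(\epsilon)$ over the positive integers. As noted after \cref{conj:cubitt-schwarz_poly}, any polynomial dependence of the conjectured bound on $n,m,d,x_i,1/\delta$ simply carries through into $C$. Because the Galton--Watson process either outputs some tree in $\mathcal{T}_a$ or never halts, $\sum_{\tau\in\mathcal{T}_a}\Pr(\tau_a)\leq1$, and therefore
\[
  \sum_{\tau\in\mathcal{T}_a}\poly(\Abs{\tau})(1-\epsilon)^{\Abs{\tau}}\Pr(\tau_a)\leq C\sum_{\tau\in\mathcal{T}_a}\Pr(\tau_a)\leq C.
\]
Thus $\expectation(N_a)\leq C\,\frac{x_a}{1-x_a}$, and summing over $a$ gives $\expectation(\text{total number of violations})\leq C\sum_{i=1}^m\frac{x_i}{1-x_i}$.

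The main obstacle is purely conceptual: recognising that strengthening the Lov\'asz conditions by a factor $(1-\epsilon)$ per projector is exactly what is needed to absorb the polynomial loss in \cref{conj:cubitt-schwarz_poly}, since it produces a uniform exponential factor $(1-\epsilon)^{\Abs{\tau}}$ that out-competes $\poly(\Abs{\tau})$ regardless of tree size. Without the strengthening, the sum $\sum_{\tau}\poly(\Abs{\tau})\prod_i R(\Pi_i)$ need not converge even under the original Lov\'asz conditions, so the strengthening cannot be dispensed with in this step; everything else is a line-by-line repeat of the commutative proof.
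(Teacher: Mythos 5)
Your proposal is correct and follows essentially the same route as the paper's proof: apply \cref{conj:cubitt-schwarz_poly}, feed in the $(1-\epsilon)$-strengthened Lov\'asz conditions to get the damping factor $(1-\epsilon)^{\Abs{\tau}}$, bound $\poly(z)(1-\epsilon)^z$ by its maximum $C(\epsilon)$, and use $\sum_{\tau\in\mathcal{T}_a}\Pr(\tau)\leq 1$ from \cref{lem:Galton-Watson}. The paper phrases the final step as bounding the expectation of a functional of the total progeny of the Galton--Watson process, but the content is identical to yours.
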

\begin{proof}
  The proof is very similar to that of \cref{thm:expected_violations} in
  \cref{sec:expected_violations} for the commutative case. Let $N_a$ be the
  number of times a given projector $\Pi_a$ appears in \cref{alg:quantum}'s
  log, which is the same as the number of times $\Pi_a$ is measured to be
  violated. Let $\mathcal{T}_a$ denote all proper witness trees whose root
  vertex is labelled by $\Pi_a$. Then, using \cref{conj:cubitt-schwarz_poly},
  we have
  \begin{equation}
    \expectation(N_a)
    = \sum_{\tau\in\mathcal{T}_a}\Pr(\text{$\tau$ appears in the log})
    \leq \sum_{\tau\in\mathcal{T}_a}\poly(\Abs{\tau})\prod_{v\in\tau}\Pr[\Pi(v)].
  \end{equation}
  The probability $\Pr[\Pi_i]$ of a projector being violated on a random state
  is just given by its relative dimension $\Pr(\Pi_i) = R(\Pi_i)$. Since by
  assumption the projectors satisfy the Lov\'asz conditions of
  \cref{def:strengthened_Lovasz_conditions}, we have
  \begin{subequations}
  \begin{align}
    \expectation(N_a)
    &\leq \sum_{\tau\in\mathcal{T}_a}\poly(\Abs{\tau})\prod_{\Pi_i\in\tau} R(\Pi_i)
    \leq \sum_{\tau\in\mathcal{T}_a}
      \poly(\Abs{\tau})\prod_{\Pi_i\in\tau} (1-\epsilon) x'_i\\
    &\leq \frac{x_a}{1-x_a}\sum_{\tau\in\mathcal{T}_a}
      \poly(\Abs{\tau})(1-\epsilon)^{\Abs{\tau}}\Pr(\tau),
      \label{eq:poly_exp}
  \end{align}
  \end{subequations}
  where $\Pr(\tau_a)$ is the probability that the Galton-Watson process
  defined in \cref{lem:Galton-Watson} generates tree $\tau_a$. The inequality
  in \cref{eq:poly_exp} follows from \cref{lem:Galton-Watson}.

  The summand in \cref{eq:poly_exp} is just the expectation of a functional
  $f(x) = \poly(x)(1-\epsilon)^x$ over the \keyword{total progeny}
  $\Abs{\tau}$ of the Galton-Watson process. For our purposes, it sufficies to
  observe that this functional is always bounded. Let $C \coloneqq \max_{x\geq
    0} f(x)$. Then
  \begin{equation}
    \expectation(N_a)
    \leq \frac{x_a}{1-x_a}\sum_{\tau\in\mathcal{T}_a} C\Pr(\tau)
    \leq C\frac{x_a}{1-x_a},
  \end{equation}
  the final inequality coming from the fact that the Galton-Watson process
  either produces a tree in $\mathcal{T}_a$, or continues indefinitely. The
  theorem follows from summing over all projectors.
\end{proof}

In fact, to prove \cref{thm:expected_violations_poly}, all we require is that
the expectation of the functional $\poly(\abs{\tau})$ of the total progeny
$\tau$ of the Galton-Watson multi-type branching process be bounded. It is
therefore likely that the $1-\epsilon$ factor can be removed by a more careful
analysis of the distribution of total progeny for the branching process.

\begin{theorem}\label{thm:expected_violations_exp}
  Let $\{\Pi_i\}$ be a set of $m$ projectors with spectral gap $\delta$, which
  satisfy the $\epsilon$-strengthened Lov\'asz conditions for $\epsilon = 1 -
  m\delta$. If \cref{conj:cubitt-schwarz_exp} is true, then the expected
  number of violations seen by \cref{alg:quantum} is bounded by
  \begin{equation}
    \expectation(\text{\textup{total number of violations}})
    \leq \sum_{i=1}^m \frac{x_i}{1-x_i}.
  \end{equation}
\end{theorem}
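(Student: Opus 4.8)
The plan is to run the argument of \cref{thm:expected_violations} (equivalently its polynomial variant \cref{thm:expected_violations_poly}) essentially verbatim, with \cref{conj:cubitt-schwarz_exp} playing the role that \cref{cor:partial_DAG_probability} (resp.\ \cref{conj:cubitt-schwarz_poly}) played there. The whole content of the theorem is the observation that the exponential suppression factor $1/(m\delta)^{\Abs{\tau}}$ appearing in \cref{conj:cubitt-schwarz_exp} is tuned to cancel \emph{exactly} against the slack $(1-\epsilon)^{\Abs{\tau}}$ that the $\epsilon$-strengthened Lov\'asz conditions of \cref{def:strengthened_Lovasz_conditions} provide when $\epsilon = 1 - m\delta$, so that the remaining Galton--Watson bookkeeping goes through with no constant lost (unlike in \cref{thm:expected_violations_poly}, where a bounded-but-nontrivial functional $f$ produced the constant $C$).

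Concretely, first I would fix a projector $\Pi_a$, let $N_a$ be the number of times $\Pi_a$ occurs in \cref{alg:quantum}'s log, and let $\mathcal{T}_a$ be the set of proper witness trees with root labelled by $\Pi_a$, so that $\expectation(N_a) = \sum_{\tau\in\mathcal{T}_a}\Pr(\tau\text{ occurs in the log})$. Applying \cref{conj:cubitt-schwarz_exp} bounds each summand by $(m\delta)^{-\Abs{\tau}}\prod_{v\in\tau}\prob[\Pi(v)]$. Next I would substitute $\prob[\Pi_i] = R(\Pi_i)$ and invoke the $\epsilon$-strengthened Lov\'asz conditions with $\epsilon = 1 - m\delta$, which give $R(\Pi_i) \leq (1-\epsilon)x'_i = m\delta\,x'_i$; hence $\prod_{v\in\tau}\prob[\Pi(v)] = \prod_{\Pi_i\in\tau}R(\Pi_i) \leq (m\delta)^{\Abs{\tau}}\prod_{\Pi_i\in\tau}x'_i$. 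Multiplying by the factor $(m\delta)^{-\Abs{\tau}}$ from the conjecture, the two exponentials cancel, leaving $\expectation(N_a) \leq \sum_{\tau\in\mathcal{T}_a}\prod_{\Pi_i\in\tau}x'_i$. Finally, \cref{lem:Galton-Watson} rewrites $\prod_{\Pi_i\in\tau}x'_i = \frac{x_a}{1-x_a}\Pr(\tau)$ with $\Pr(\tau)$ the probability that the Galton--Watson process generates $\tau$; since that process either produces a tree in $\mathcal{T}_a$ or continues forever, $\sum_{\tau\in\mathcal{T}_a}\Pr(\tau) \leq 1$, so $\expectation(N_a) \leq x_a/(1-x_a)$, and summing over $a$ yields the bound $\sum_{i=1}^m x_i/(1-x_i)$.

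I do not anticipate a real obstacle once \cref{conj:cubitt-schwarz_exp} is granted: the argument is just the remark that $1/(m\delta)^{\Abs{\tau}}$ is the unique rate that makes the exponent in the strengthened Lov\'asz conditions telescope away cleanly. The only point worth spelling out is that $\epsilon = 1 - m\delta$ is an admissible choice in \cref{def:strengthened_Lovasz_conditions}: when $\delta < 1/m$ one has $\epsilon \in (0,1)$ and $R(\Pi_i) \leq (1-\epsilon)x'_i = m\delta\,x'_i$ is a genuine tightening of the Lov\'asz conditions, whereas for commuting projectors, where $\delta \geq 1/m$, one has $\epsilon \leq 0$, the condition is no stronger than the ordinary Lov\'asz conditions of \cref{def:Lovasz_conditions}, and the statement reduces exactly to \cref{thm:expected_violations} — consistent with the commutative result already proved. (The same reduction also connects \cref{prop:first_violation2}, the single-vertex case of \cref{conj:cubitt-schwarz_exp}, to \cref{lem:halting_measurement_process}.)
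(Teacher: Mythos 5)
Your proposal is correct and follows essentially the same route as the paper's proof: apply \cref{conj:cubitt-schwarz_exp} to each witness tree, use the $\epsilon$-strengthened Lov\'asz conditions with $\epsilon = 1-m\delta$ so that $(1-\epsilon)^{\Abs{\tau}} = (m\delta)^{\Abs{\tau}}$ cancels the conjecture's prefactor exactly, and then finish with \cref{lem:Galton-Watson} and $\sum_{\tau\in\mathcal{T}_a}\Pr(\tau)\leq 1$. The closing remarks about admissibility of $\epsilon$ and the reduction to the commutative case are a sensible sanity check but not needed for the argument.
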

\begin{proof}
  As in \cref{thm:expected_violations_poly},
  \begin{equation}
    \expectation(N_a)
    = \sum_{\tau\in\mathcal{T}_a}\Pr(\text{$\tau$ appears in the log})
      \leq \sum_{\tau\in\mathcal{T}_a}
      \frac{1}{(m\delta)^{\Abs{\tau}}}\prod_{v\in\tau}\Pr[\Pi(v)],
  \end{equation}
  this time using \cref{conj:cubitt-schwarz_exp}. Thus
  \begin{equation}
    \expectation(N_a)
    \leq \frac{x_a}{1-x_a}\sum_{\tau\in\mathcal{T}_a}
      \frac{(1-\epsilon)^{\Abs{\tau}}}{(m\delta)^{\Abs{\tau}}} \Pr(\tau)
    = \frac{x_a}{1-x_a}\sum_{\tau\in\mathcal{T}_a} \Pr(\tau)
    \leq \frac{x_a}{1-x_a},
  \end{equation}
  and the theorem follows by summing over all projectors.
\end{proof}

\subsection{Converging to a solution}\label{sec:converging}
In the non-commutative setting, there is an ambiguity in what it means for an
algorithm to construct a state satisfying the requirements of the QLLL of
\cref{thm:QLLL}. The non-constructive QLLL asserts the existence of a state
that does not violate any of the projectors. We could demand that the
algorithm converges to a state whose probability of violating any projector is
at most $\varepsilon$; we will call this ``weak convergence'', or
``convergence in energy''. Alternatively, we could demand that the algorithm
converges to a state that is $\varepsilon$-close (in some suitable distance
measure, say fidelity) to the subspace of states that do not violate any
projector; we will call this ``strong convergence'', or ``convergence in
fidelity''. In the classical and in the commutative cases, these are
equivalent, and there is a single, unambiguous definition of convergence.
However, non-commutativity of the projectors means that these two notions of
convergence are in general no longer equivalent in the quantum case.

\begin{definition}[Strong convergence, or convergence in fidelity]
  \label{def:strong-convergence}
  We say that an algorithm converges in the strong sense to a state fulfilling
  the QLLL if, given any $\varepsilon > 0$, there exists a $t$ such that the
  state $\rho$ produced by running the algorithm for time $t$ satisfies
  $\tr[P_0\rho] \geq 1-\varepsilon$, where $P_0$ is the projector onto the
  subspace of states defined by $\vspan\{\ket{\psi}:\forall i\,
    \Pi_i\ket{\psi}=0\}$; i.e.\ the state $\rho$ is $\varepsilon$-close in
  fidelity to the subspace of states fulfilling the QLLL requirements.
\end{definition}

\begin{definition}[Weak convergence, or convergence in energy]
  \label{def:weak-convergence}
  We say that an algorithm converges in the weak sense to a state fulfilling
  the QLLL if, given any $\varepsilon > 0$, there exists a $t$ such that the
  state $\rho$ produced by running the algorithm for time $t$ satisfies
  $\forall i\,\tr[\Pi_i\rho] \leq \varepsilon$.
\end{definition}

As our terminology suggests, strong convergence implies weak convergence, but
in general the converse does \emph{not} hold.
\begin{lemma}\label{lem:strong-stronger-than-weak}
  Strong convergence is strictly stronger than weak convergence.
\end{lemma}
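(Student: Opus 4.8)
The plan is to prove the two halves of the claim separately: (a) strong convergence implies weak convergence, and (b) weak convergence does not imply strong convergence. For part (a), suppose $\tr[P_0\rho] \geq 1-\varepsilon'$ for some small $\varepsilon'$ to be chosen. Decompose $\rho$ with respect to $P_0$ exactly as in \cref{eq:rho_decomposition}, writing $\rho = P_0\rho P_0 + (\text{cross terms}) + (\id-P_0)\rho(\id-P_0)$. Since $\Pi_i P_0 = P_0\Pi_i = 0$ for every $i$, the only surviving contribution to $\tr[\Pi_i\rho]$ is $\tr[\Pi_i(\id-P_0)\rho(\id-P_0)]$, and using $\Pi_i\leq\id$ this is bounded by $\tr[(\id-P_0)\rho] = 1-\tr[P_0\rho] \leq \varepsilon'$. (This is exactly the computation already carried out in \cref{eq:violation_upper_bound} inside the proof of \cref{thm:CP_map_convergence}, so one can simply cite it.) Hence taking $\varepsilon' = \varepsilon$ shows that $\forall i\, \tr[\Pi_i\rho]\leq\varepsilon$, i.e.\ strong convergence to within $\varepsilon$ gives weak convergence to within $\varepsilon$; so strong convergence implies weak convergence.

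For part (b), I would exhibit a small fixed example where a state has tiny energy with respect to every projector but is far in fidelity from $P_0$. The cleanest construction uses two non-commuting rank-one projectors on a single qubit whose $+1$-eigenvectors are nearly orthogonal to a common state: e.g.\ take $\Pi_1 = \proj{v_1}$ and $\Pi_2 = \proj{v_2}$ with $\ket{v_1},\ket{v_2}$ chosen so that $\bigl(\vspan\{\ket{\psi}:\Pi_1\ket{\psi}=\Pi_2\ket{\psi}=0\}\bigr) = \{0\}$, i.e.\ $P_0 = 0$, while there exists a state $\rho$ with $\tr[\Pi_1\rho],\tr[\Pi_2\rho]$ both small but nonzero (this is possible precisely because the projectors do not commute and their supports are ``spread out''). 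Concretely, one can take $\ket{v_1}=\ket{0}$, $\ket{v_2}=\cos\theta\ket{0}+\sin\theta\ket{1}$ for small $\theta$: then $P_0=0$ (no nonzero vector is killed by both), yet $\rho=\proj{1}$ has $\tr[\Pi_1\rho]=0$ and $\tr[\Pi_2\rho]=\sin^2\theta$, which is arbitrarily small. Thus $\rho$ is weakly $\varepsilon$-converged for any $\varepsilon\geq\sin^2\theta$ but has $\tr[P_0\rho]=0$, so it is maximally far from strong convergence. This separates the two notions.

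The main subtlety — rather than a genuine obstacle — is making sure the counterexample is not vacuous: with $P_0=0$ there is literally no state satisfying the QLLL, so one should check that this is still a legitimate instance for the purpose of the lemma (the lemma is a statement comparing the two convergence criteria as \emph{definitions}, not asserting that either is achievable). If one prefers an instance in which $P_0\neq 0$, the same phenomenon can be arranged on $\CC^2\otimes\CC^2$ by tensoring the above qubit example with an extra qubit carrying a satisfiable projector, so that $P_0$ is one-dimensional but a state of the form (far-from-$P_0$ on the first qubit) $\otimes$ (anything on the second) still has exponentially (in $1/\theta$) small energy on every term while having fidelity bounded away from $1$ with $P_0$. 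Either way, the combinatorics is trivial; the only thing to be careful about is the direction of the implication and the explicit choice of angle so that both energies can be driven to zero while the fidelity gap stays $\Omega(1)$.
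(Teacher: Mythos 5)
Your proposal is correct and follows essentially the same route as the paper: the forward implication via the $P_0$-decomposition with $\Pi_iP_0=0$ and $\Pi_i\leq\id$, and a counterexample built from two nearly parallel rank-one projectors whose common kernel misses a state of vanishingly small energy. The paper's counterexample ($\Pi_0=\proj{\psi}$, $\Pi_1=\proj{\psi'}$ with $\abs{\braket{\psi}{\psi'}}^2=1-\delta$ and $\rho=\proj{\psi^\perp}$) is just a reparametrisation of your single-qubit construction, so no further changes are needed.
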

\begin{proof}
  It is trivial to see that strong convergence implies weak convergence: if
  $\rho$ is such that $\tr[P_0\rho] \geq 1-\varepsilon/m$, then
  \begin{equation}
    \sum_i \tr[\Pi_i\rho]
    = \sum_i \tr[\Pi_i(\id-P_0)\rho]
    \leq \sum_i \tr[(\id-P_0)\rho] \leq \varepsilon
  \end{equation}
  (the first equality following from the fact that $\Pi_i P_0 = 0$ by
  definition, the middle inequality from the operator inequality
  $\Pi_i\leq\id$).

  To see that weak convergence is strictly weaker, consider the example of two
  almost-identical projectors acting on the same set of qudits:
  $\Pi_0=\proj{\psi}$ and $\Pi_1=\proj{\psi'}$, with
  $\abs{\braket{\psi}{\psi'}}^2 = 1-\delta$. Let
  \begin{equation}
    \ket{\psi^\perp} = \frac{(\id-\proj{\psi})\ket{\psi'}}
	                    {\sqrt{1-\delta}}
  \end{equation}
  be the state in the span of $\{\ket{\psi},\ket{\psi'}\}$ orthogonal to
  $\ket{\psi}$. For $\rho = \proj{\psi^\perp}$, we have
  \begin{gather}
    \tr[\Pi_0\rho] = \abs{\braket{\psi}{\psi^\perp}}^2 = 0,\\
    \tr[\Pi_1\rho]
      = \abs{\braket{\psi'}{\psi^\perp}}^2
      = \abs{\braKet{\psi'}{(\id-\proj{\psi})}{\psi'}}^2
      = \frac{\delta^2}{1-\delta},
  \end{gather}
  whereas $\tr[P_0\rho] = 0$. Therefore, letting $\delta$ tend to 0, an
  algorithm which converges to the state $\rho$ has converged to arbitrarily
  high precision in the weak sense, yet it is as far as possible from
  convergence in the strong sense.
\end{proof}

However, weak convergence \emph{does} imply strong convergence if we are
prepared to pay a price in the convergence time. That price is a dependence on
the spectral gap of \cref{def:spectral-gap}.
\begin{lemma}\label{lem:weak-strong-delta}
  If an algorithm converges weakly in time $\order{f(1/\varepsilon)}$, then it
  converges strongly in time $\order{f(1/m\delta\varepsilon)}$.
\end{lemma}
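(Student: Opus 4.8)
The plan is to bound the distance from the steady-state subspace (fidelity deficit) in terms of the energy, so that a weak-convergence guarantee translates into a strong-convergence guarantee with the stated slowdown. The key quantitative input is the spectral gap $\delta$ of \cref{def:spectral-gap}: by \cref{eq:spectral_gap}, for any state $\rho$ we have $\frac{1}{m}\sum_i\tr[\Pi_i\rho]\geq\delta\cdot\tr[(\id-P_0)\rho]$, since the left-hand side is minimised, over states supported on $\id-P_0$, at value $\delta$, and projecting $\rho$ onto $\id-P_0$ only removes the (non-negative) contribution of the $P_0$-block while the cross terms do not contribute because $\Pi_iP_0=0$. This is exactly the computation carried out in \cref{eq:violation_lower_bound} in the proof of \cref{thm:CP_map_convergence}.

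First I would suppose the algorithm converges weakly in time $\order{f(1/\varepsilon)}$, i.e.\ for any target $\varepsilon'>0$ running it for time $t=\order{f(1/\varepsilon')}$ produces a state $\rho$ with $\tr[\Pi_i\rho]\leq\varepsilon'$ for all $i$. Then $\frac{1}{m}\sum_i\tr[\Pi_i\rho]\leq\varepsilon'$, so by the spectral-gap inequality above, $\delta\cdot\tr[(\id-P_0)\rho]\leq\varepsilon'$, hence $\tr[P_0\rho]\geq 1-\varepsilon'/\delta$. To make the fidelity deficit at most $\varepsilon$ we therefore choose $\varepsilon'=\delta\varepsilon$, which requires running the algorithm for time $\order{f(1/\delta\varepsilon)}$.

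There is, however, a subtlety accounting for the factor of $m$ in the claimed bound $\order{f(1/m\delta\varepsilon)}$: weak convergence as in \cref{def:weak-convergence} controls each $\tr[\Pi_i\rho]$ individually by $\varepsilon'$, whereas the relevant quantity in the spectral-gap bound is the average $\frac{1}{m}\sum_i\tr[\Pi_i\rho]$. If one instead takes the natural ``energy'' version of weak convergence, namely $\frac{1}{m}\sum_i\tr[\Pi_i\rho]\leq\varepsilon'$ (which is what ``convergence in energy'' most naturally means, and which is implied by the per-projector bound with $\varepsilon'$ replaced by $m\varepsilon'$), then to get $\frac{1}{m}\sum_i\tr[\Pi_i\rho]\leq\delta\varepsilon$ from a per-projector bound one needs each $\tr[\Pi_i\rho]\leq\delta\varepsilon$, and going through the argument with the per-projector bound replaced by the averaged one, the target becomes $\varepsilon'=m\delta\varepsilon$ is too strong; rather, using the averaged hypothesis directly one needs average energy $\leq\delta\varepsilon$, so $\varepsilon' = \delta\varepsilon$, but converting back to the per-projector statement of \cref{def:weak-convergence} costs a factor $m$, giving time $\order{f(1/m\delta\varepsilon)}$. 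So the precise bookkeeping is: per-projector energy $\leq\varepsilon'/m$ implies average energy $\leq\varepsilon'/m$, which by the gap bound implies fidelity deficit $\leq\varepsilon'/(m\delta)$; setting this equal to $\varepsilon$ gives $\varepsilon'=m\delta\varepsilon$ and hence run-time $\order{f(1/m\delta\varepsilon)}$.

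The main obstacle is therefore not any deep technical point but getting the factors of $m$ and $\delta$ to land exactly where the statement claims; the core inequality $\tr[(\id-P_0)\rho]\leq\frac{1}{\delta}\cdot\frac{1}{m}\sum_i\tr[\Pi_i\rho]$ is immediate from \cref{eq:spectral_gap} together with the orthogonality relations $\Pi_iP_0=0$ and $\Pi_i\leq\id$, exactly as in \cref{eq:violation_lower_bound}. I would present the proof as: state the gap inequality with a one-line justification citing \cref{def:spectral-gap}, invoke the weak-convergence hypothesis with parameter chosen so that the resulting fidelity deficit is $\varepsilon$, and read off the run-time.
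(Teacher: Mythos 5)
Your approach is the same as the paper's: the entire content of the lemma is the spectral-gap inequality $\tr[(\id-P_0)\rho]\leq\frac{1}{\delta}\cdot\frac{1}{m}\sum_i\tr[\Pi_i\rho]$, obtained from the $P_0$-decomposition of $\rho$ together with $\Pi_iP_0=0$, exactly as in \cref{eq:violation_lower_bound}, combined with the weak-convergence hypothesis. Your first paragraph is a correct and complete version of that argument. Where you get tangled is the factor of $m$, so it is worth saying where it comes from in the paper's own proof: after running for time $\order{f(1/m\delta\varepsilon)}$ the paper passes directly to $\frac{1}{m}\sum_i\tr[\Pi_i\rho]\leq\delta\varepsilon$, i.e.\ it effectively reads the parameter in \cref{def:weak-convergence} as controlling the \emph{total} energy $\sum_i\tr[\Pi_i\rho]$ rather than each term separately (the same reading is implicit in \cref{lem:strong-stronger-than-weak}). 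With \cref{def:weak-convergence} taken literally as a per-projector bound, your clean derivation giving time $\order{f(1/\delta\varepsilon)}$ is the honest conclusion, and the discrepancy with the stated bound is a normalisation convention, not a mathematical obstacle. Your second paragraph's attempt to manufacture the $m$ by demanding a per-projector bound of $\varepsilon'/m$ does not work as written: under the hypothesis as stated, achieving per-projector energy $\varepsilon'/m$ costs time $\order{f(m/\varepsilon')}$, not $\order{f(1/\varepsilon')}$, so that bookkeeping lands you back at $\order{f(1/\delta\varepsilon)}$ rather than the claimed $\order{f(1/m\delta\varepsilon)}$; you should either adopt the total-energy reading of weak convergence or simply state the bound as $\order{f(1/\delta\varepsilon)}$.
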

\begin{proof}
  If we run the algorithm for time $\order{f(1/m\delta\varepsilon)}$ then, by
  \cref{def:weak-convergence} of weak convergence, the state $\rho$ produced
  by the algorithm must satisfy $\forall i: \tr[\Pi_i\rho] \leq
  m\delta\varepsilon$. Decomposing the state $\rho$ as
  \begin{equation}
    \rho = P_0\rho P_0 + P_0\rho(\id-P_0)
            + (\id-P_0)\rho P_0 + (\id-P_0)\rho(\id-P_0),
  \end{equation}
  and noting that $\Pi_iP_0 = 0$ by definition, we have
  \begin{subequations}\label{eq:orthogonal_part_upper_bound}
    \begin{align}
      \begin{split}
        \delta\varepsilon
        &\geq \frac{1}{m}\sum_i\tr[\Pi_i\rho]\\
        &= \frac{1}{m}\sum_i\Bigl(
          \tr[\Pi_iP_0\rho P_0] + \tr[\Pi_iP_0\rho(\id-P_0)]\\
        &\phantom{= \frac{1}{m}\sum_i\Bigl(}
          + \tr[\Pi_i(\id-P_0)\rho P_0] + \tr[\Pi_i(\id-P_0)\rho(\id-P_0)]
        \Bigr)
      \end{split}\raisetag{5.15em}\\
      &= \frac{1}{m}\sum_i\tr[\Pi_i(\id-P_0)\rho(\id-P_0)].
    \end{align}
  \end{subequations}
  But, by \cref{def:spectral-gap} of the spectral gap $\delta$,
  \begin{equation}\label{eq:orthogonal_part_lower_bound}
    \frac{1}{m}\sum_i\frac{\tr[\Pi_i(\id-P_0)\rho(\id-P_0)]}{\tr[(\id-P_0)\rho]}
    \geq \delta,
  \end{equation}
  thus from
  \cref{eq:orthogonal_part_upper_bound,eq:orthogonal_part_lower_bound} we
  obtain $\tr[(\id-P_0)\rho] \leq \varepsilon$, or
  \begin{equation}
    \tr[P_0\rho] \geq 1-\varepsilon,
  \end{equation}
  proving strong convergence in time $\order{f(1/m\delta\varepsilon)}$ as
  claimed.
\end{proof}

Which is the ``correct'' notion of convergence depends on what one is trying
to achieve. From a physics perspective, an algorithm for finding a state
fulfilling the Lov\'asz conditions can be interpreted as an algorithm for finding
the ground state of the Hamiltonian
\begin{equation}\label{eq:QLLL_Hamiltonian}
  H = \frac{1}{m}\sum_i\Pi_i.
\end{equation}
If we are interested in cooling a system to the ground state, producing a
state with low energy is sufficient and weak convergence is the appropriate
notion, since the parameter $\varepsilon$ in \cref{def:weak-convergence}
upper-bounds the energy of the state with respect to the Hamilton $H$. On the
other hand, if we are interested in ground state properties, we want the
algorithm to produce a state close in fidelity to the true ground state,
requiring strong convergence.

If the spectral gap scales inverse-polynomially, then the run-time remains
polynomial even for strong convergence. For some important cases of the QLLL,
the gap does indeed scale inverse-polynomially, and the run-time is provably
polynomial even for strong convergence. (For example, in the commutative case
the gap is constant and we have already seen that an efficient algorithm
always exists.) On the other hand, there are certainly cases for which the
spectral gap is exponentially small, and in that case the run-time we require
to ensure strong convergence will be exponentially large. However, an
inverse-linear dependence on the gap may be the best one could hope to
achieve, at least for algorithms that work by measuring the projectors, as it
takes expected time $1/\delta$ to see even \emph{one} violation if the system
is in the lowesr excited state (a state necessarily \emph{orthogonal} to the
ground-state subspace).

From a complexity theoretic perspective, weak convergence is the more natural
notion, as it corresponds to the canonical FQMA-complete problem of low-energy
state preparation~\citep{FBQP}. (FQMA is the quantum analogue of FNP, the
functional version of NP; whereas an NP or QMA problem asks whether there
\emph{exists} an input producing a ``yes'' answer, an FNP or FQMA problem
requires such an input to be prepared.) Similarly, the \kQSAT{} problem, as
defined and shown to be QMA$_0$-complete by \citet{QSAT}, promises that a
given $k$-local Hamiltonian either has a ground state (minimum eigenvalue
eigenstate) with \emph{exactly} zero energy, or has a ground state with energy
larger than $\alpha$ (where $\alpha$ is inverse polynomial in the problem
size). The \kQSAT{} problem is then to decide whether a zero-energy state
exists. The corresponding FQMA problem is to produce a state that witnesses
the existence of a zero-energy state. Setting $\varepsilon \leq \alpha$ and
running an algorithm until it weakly converges to precision $\varepsilon$
produces precisely such a witness.

On the other hand, formulating strong convergence as a well-defined
complexity-theoretic concept requires a little care. For all the standard
complexity classes, the projectors defining a problem instance must be
specified classically. Thus if the problem size is to be well-defined, the
projectors can only be specified to finite precision. However, there exist
instances which are arbitrarily close to each other (the matrix elements of
the two sets of projectors are arbitrarily close), yet a state satisfying the
strong convergence requirements for one instance is maximally far from
satisfying those conditions for the other instance. (The two-projector example
considered in \cref{lem:strong-stronger-than-weak} provides an instance of
this. By making $\delta$ arbitrarily small, the example can be made
arbitrarily close to a QLLL problem consisting of two identical projectors
$\Pi_0$. Yet the state $\ket{\psi^\perp}$, which satisfies the strong
convergence requirements when both projectors are $\Pi_0$, is as far as
possible for satisfying those requirements for the instance given in
\cref{lem:strong-stronger-than-weak}. In order for any algorithm to determine
that $\ket{\psi^\perp}$ is \emph{not} a valid solution, the projectors would
have to be specified to arbitrarily high precision.)

To make the strong convergence case into a well-defined complexity-theoretic
problem, we must either formulate it as a \keyword{weak-membership} problem
\citep{GroetschelLovaszSchrijver}, with a parameter $\delta$ specifying the
precision to which the answer must be given, or equivalently we can formulate
it as a promise problem, with a promise on the spectral gap (cf.\
\citet{CEW09}). Either way, this means that \emph{the problem instance itself}
sets a lower-bound on the spectral gap. But \cref{lem:weak-strong-delta} shows
that an efficient algorithm for weak convergence already implies an efficient
algorithm for the complexity-theoretic formulation of strong convergence in
this case. So in the most reasonable complexity-theoretic formulation, strong
convergence and weak convergence are equivalent.

\subsection{An efficient quantum algorithm}
In \cref{sec:constructive_QLLL_proof}, we used the bound from
\cref{thm:expected_violations} together with \cref{thm:convergence_time} to
prove that \cref{alg:QLLL_converger} converges efficiently to the desired
state. But \cref{thm:convergence_time} applies to any set of projectors; it
does not depend on commutativity. Therefore, the analogous non-commutative
bounds in \cref{thm:expected_violations_poly,thm:expected_violations_exp}
(which depend on \cref{conj:cubitt-schwarz_poly,conj:cubitt-schwarz_exp})
together with \cref{thm:convergence_time} imply that \cref{alg:QLLL_converger}
converges efficiently (in the weak sense) in the non-commutative case.
\Cref{lem:weak-strong-delta} extends this to strong convergence, proving the
following constructive results in the general, non-commutative case of the
Quantum Lov\'asz Local Lemma. (The existence parts of the following theorems
are true independent of the conjectures, by \cref{thm:QLLL}.)

\begin{theorem}[Constructive QLLL]\label{thm:non-commuting_QLLL_poly}
  Let $\{\Pi_i\}$ be a set of $m$ projectors projectors acting on subsets of
  $n$ qudits, with spectral gap $\delta$ (\cref{def:spectral-gap}). If
  $\{\Pi_i\}$ satisfy the $\epsilon$-strengthened Lov\'asz conditions
  (\cref{def:strengthened_Lovasz_conditions}) for any $\epsilon > 0$, then
  there exists a joint state $\rho$ of the qudits such that $\forall i:
  \tr[\Pi_i\rho] = 0$.

  Moreover, if \cref{conj:cubitt-schwarz_poly} holds, there is a quantum
  algorithm that constructs a state $\rho'$ such that $\forall
  i\,\tr[\Pi_i\rho'] \leq \varepsilon$, in time
  \begin{equation}
    \biggOrder{n + \frac{m}{\varepsilon}\sum_{i=1}^m\frac{x_i}{1-x_i}
                   \cdot\bigAbs{[i]}}
  \end{equation}
  where $\Abs{[i]}$ is the number of qudits on which the projector $\Pi_i$
  acts non-trivially, thus $\rho'$ has probability at most $\varepsilon$ of
  violating any of the constraints defined by the $\Pi_i$.

  The same algorithm constructs a state $\rho''$ such that $\tr[P_0\rho'']
  \geq 1-\varepsilon$, where $P_0$ is the projector onto the subspace
  $\vspan\{\ket{\psi}:\forall i\,\Pi_i\ket{\psi}=0\}$, in time
  \begin{equation}
    \biggOrder{n + \frac{1}{\delta\varepsilon}\sum_{i=1}^m\frac{x_i}{1-x_i}
                   \cdot\bigAbs{[i]}},
  \end{equation}
  thus $\rho''$ is $\varepsilon$-close in fidelity to the subspace of states
  satisfying $\forall i\,\Pi_i\ket{\psi}=0$.
\end{theorem}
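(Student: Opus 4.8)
The plan is to assemble \cref{thm:non-commuting_QLLL_poly} from pieces already in place, exactly mirroring the logical skeleton used in the commutative case in \cref{sec:constructive_QLLL_proof}, but feeding in the conditional bound from \cref{thm:expected_violations_poly} instead of the unconditional \cref{thm:expected_violations}. The existence statement ($\exists\rho$ with $\tr[\Pi_i\rho]=0$ for all $i$) requires no work: the $\epsilon$-strengthened Lov\'asz conditions of \cref{def:strengthened_Lovasz_conditions} imply the ordinary Lov\'asz conditions of \cref{def:Lovasz_conditions} (just absorb the $1-\epsilon$ factor), so \cref{thm:QLLL} directly gives $R(\bigcap_i X_i)>0$, i.e.\ a zero-energy state exists. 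This part is therefore \emph{unconditional}, as the theorem statement already notes.

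For the algorithmic part, first I would invoke \cref{conj:cubitt-schwarz_poly} to enter the regime of \cref{thm:expected_violations_poly}: since $\{\Pi_i\}$ satisfies the $\epsilon$-strengthened Lov\'asz conditions, that theorem yields a finite bound $E \leq C\sum_i x_i/(1-x_i)$ on the expected number of violations seen by \cref{alg:quantum}, with $C=C(\epsilon)$ a constant. Crucially, \cref{thm:convergence_time} is stated for an \emph{arbitrary} set of projectors and makes no use of commutativity, so it applies verbatim: running \cref{alg:QLLL_converger} for time $t=mE/\varepsilon$ produces a state $\rho'$ with $\tr[\Pi_i\rho']\leq\varepsilon$ for all $i$. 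Plugging in $E=\bigOrder{\sum_i x_i/(1-x_i)}$ and accounting, exactly as in \cref{sec:constructive_QLLL_proof}, for the $\order{n}$ cost of preparing the initial assignment and the $\order{\bigAbs{[i]}}$ resampling cost per violation of $\Pi_i$, gives the claimed weak-convergence run-time $\bigOrder{n+\frac{m}{\varepsilon}\sum_i\frac{x_i}{1-x_i}\bigAbs{[i]}}$. The constant $C$ and any further polynomial dependence on $n,m,d,x_i,1/\delta$ hidden in \cref{conj:cubitt-schwarz_poly} are swept into the $\order{\cdot}$, as flagged in the remark after that conjecture.

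The strong-convergence claim then follows by applying \cref{lem:weak-strong-delta}: an algorithm converging weakly in time $\bigOrder{f(1/\varepsilon)}$ converges strongly in time $\bigOrder{f(1/m\delta\varepsilon)}$. Here $f(1/\varepsilon)=\frac{m}{\varepsilon}\sum_i\frac{x_i}{1-x_i}\bigAbs{[i]}$ (ignoring the additive $n$, which is handled separately as a one-off cost), so substituting $\varepsilon\mapsto m\delta\varepsilon$ turns the $m/\varepsilon$ prefactor into $m/(m\delta\varepsilon)=1/(\delta\varepsilon)$, yielding the stated bound $\bigOrder{n+\frac{1}{\delta\varepsilon}\sum_i\frac{x_i}{1-x_i}\bigAbs{[i]}}$ for producing $\rho''$ with $\tr[P_0\rho'']\geq 1-\varepsilon$. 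That it is the \emph{same} algorithm (namely \cref{alg:QLLL_converger}) run for longer, rather than a different one, is immediate from the proof of \cref{lem:weak-strong-delta}, which only reinterprets the output of the weak-convergence run.

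The main obstacle is entirely upstream and has already been isolated: everything here is conditional on \cref{conj:cubitt-schwarz_poly}, the polynomial-factor bound on the probability of a witness tree occurring in the log, which (as \cref{ex:violation} shows) genuinely fails in the naive form that held in the commutative case, and which we can only establish unconditionally for single-vertex trees (\cref{prop:first_violation}). Granting that conjecture, the proof of \cref{thm:non-commuting_QLLL_poly} itself is essentially bookkeeping: chaining \cref{thm:expected_violations_poly}, \cref{thm:convergence_time}, and \cref{lem:weak-strong-delta}, and tracking the run-time constants through each substitution. There is no new quantum or combinatorial content — the real work was done in proving those three ingredients.
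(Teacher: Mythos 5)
Your proposal is correct and follows exactly the route the paper takes: existence unconditionally from \cref{thm:QLLL} (since the $\epsilon$-strengthened conditions imply the ordinary ones), weak convergence by feeding the conditional bound of \cref{thm:expected_violations_poly} into the commutativity-free \cref{thm:convergence_time}, and strong convergence via \cref{lem:weak-strong-delta} with the substitution $\varepsilon\mapsto m\delta\varepsilon$. Your bookkeeping of the run-time constants matches the paper's (rather terser) argument in the paragraph preceding the theorem.
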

It may well be possible to remove the $1-\epsilon$ strengthening of the
Lov\'asz conditions, by a more refined analysis of the branching process in
\cref{thm:expected_violations_poly}. (See discussion after the proof of that
theorem.)

\begin{theorem}[Constructive QLLL, strengthened Lov\'asz conditions]
  \label{thm:non-commuting_QLLL_exp}\hfil\linebreak%
  Let $\{\Pi_i\}$ be a set of $m$ projectors projectors acting on subsets of
  $n$ qudits, with spectral gap $\delta$ (\cref{def:spectral-gap}). If
  $\{\Pi_i\}$ satisfy the $\epsilon$-strengthened Lov\'asz conditions
  (\cref{def:strengthened_Lovasz_conditions}) for $\epsilon = 1-m\delta$, then
  there exists a joint state $\rho$ of the qudits such that $\forall i\,
  \tr[\Pi_i\rho] = 0$.

  Moreover, if \cref{conj:cubitt-schwarz_exp} holds, there is a quantum
  algorithm that constructs a state $\rho'$ such that $\tr[\Pi_i\rho'] \leq
  \varepsilon$, in time
  \begin{equation}
    \biggOrder{n + \frac{m}{\varepsilon}\sum_{i=1}^m\frac{x_i}{1-x_i}
                   \cdot\bigAbs{[i]}}
  \end{equation}
  where $\Abs{[i]}$ is the number of qudits on which the projector $\Pi_i$
  acts non-trivially, thus $\rho'$ has probability at most $\varepsilon$ of
  violating any of the constraints defined by the $\Pi_i$.

  The same algorithm constructs a state $\rho''$ such that $\tr[P_0\rho'']
  \geq 1-\varepsilon$, where $P_0$ is the projector onto the subspace
  $\vspan\{\ket{\psi}:\forall i\,\Pi_i\ket{\psi}=0\}$, in time
  \begin{equation}
    \biggOrder{n + \frac{1}{\delta\varepsilon}\sum_{i=1}^m\frac{x_i}{1-x_i}
                   \cdot\bigAbs{[i]}},
  \end{equation}
  thus $\rho''$ is $\varepsilon$-close in fidelity to the subspace of states
  satisfying $\forall i\,\Pi_i\ket{\psi}=0$.
\end{theorem}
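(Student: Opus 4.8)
The plan is to obtain \cref{thm:non-commuting_QLLL_exp} with no genuinely new argument, by assembling pieces already established, in exact parallel with \cref{thm:non-commuting_QLLL_poly} but feeding in the sharper, exponentially-weakened bound of \cref{conj:cubitt-schwarz_exp} in place of the polynomial one. The proof splits into three parts: existence, weak (energy) convergence, and strong (fidelity) convergence. For existence, observe that the $\epsilon$-strengthened Lov\'asz conditions of \cref{def:strengthened_Lovasz_conditions} with $\epsilon = 1-m\delta$ read $R(\Pi_i)\le m\delta\cdot x_i\prod_{\Pi_j\in\Gamma(\Pi_i)}(1-x_j)$; in the regime $\delta\le 1/m$ (the regime in which the strengthening actually bites) these imply the ordinary Lov\'asz conditions of \cref{def:Lovasz_conditions}, so the non-constructive QLLL (\cref{thm:QLLL}) already yields a joint state $\rho$ with $\tr[\Pi_i\rho]=0$ for all $i$. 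This first part uses neither the conjecture nor the algorithm.

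For weak convergence, assume \cref{conj:cubitt-schwarz_exp}. Then \cref{thm:expected_violations_exp} applies verbatim --- its hypothesis is precisely the $\epsilon$-strengthened conditions with $\epsilon=1-m\delta$ --- and bounds the expected number of violations seen by \cref{alg:quantum} by $E\le\sum_{i=1}^m x_i/(1-x_i)$, with no extra constant: in the exponential bound the factor $(1-\epsilon)^{\Abs{\tau}}/(m\delta)^{\Abs{\tau}}$ collapses to $1$ and the Galton--Watson total-progeny sum is at most $1$. Since \cref{thm:convergence_time} makes no use of commutativity, running \cref{alg:QLLL_converger} for $t=mE/\varepsilon$ iterations produces a state $\rho'$ with $\tr[\Pi_i\rho']\le\varepsilon$ for every $i$. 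Adding the per-iteration bookkeeping exactly as in \cref{sec:constructive_QLLL_proof} --- $\order{n}$ to draw the initial assignment and $\order{\bigAbs{[i]}}$ to resample whenever $\Pi_i$ is violated --- gives the stated run-time $\biggOrder{n+\tfrac{m}{\varepsilon}\sum_i\tfrac{x_i}{1-x_i}\bigAbs{[i]}}$, and weak convergence to precision $\varepsilon$ is exactly the bound $\tr[\Pi_i\rho']\le\varepsilon$.

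For strong convergence, write the weak-convergence run-time as $\order{f(1/\varepsilon)}$ with $f(y)=n+m\,y\sum_i\frac{x_i}{1-x_i}\bigAbs{[i]}$. \Cref{lem:weak-strong-delta} then upgrades weak convergence in time $\order{f(1/\varepsilon)}$ to strong convergence in time $\order{f(1/m\delta\varepsilon)}$; substituting $y=1/(m\delta\varepsilon)$ turns the factor $m/\varepsilon$ into $1/\delta\varepsilon$ and yields the claimed $\biggOrder{n+\tfrac{1}{\delta\varepsilon}\sum_i\tfrac{x_i}{1-x_i}\bigAbs{[i]}}$ together with $\tr[P_0\rho'']\ge1-\varepsilon$, which is the asserted fidelity bound. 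The same algorithm (\cref{alg:QLLL_converger}) produces both $\rho'$ and $\rho''$; only the chosen running time differs.

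The point to stress is that \cref{thm:non-commuting_QLLL_exp} carries no new technical content beyond \cref{thm:expected_violations_exp,thm:convergence_time,lem:weak-strong-delta}: the entire difficulty has been concentrated into \cref{conj:cubitt-schwarz_exp}, which we can establish only for single-vertex witness trees (via \cref{prop:first_violation2}). So the main obstacle is the conjecture itself; conditional on it, everything downstream --- including this theorem --- is a routine combination of the Moser--Tardos-style witness-tree counting, the averaging trick of \cref{alg:QLLL_converger}, and the gap-dependent passage from energy to fidelity.
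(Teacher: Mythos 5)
Your proposal is correct and follows exactly the route the paper takes (the paper gives this proof only implicitly, in the paragraph preceding the two non-commutative theorems): existence from \cref{thm:QLLL}, weak convergence from \cref{thm:expected_violations_exp} combined with \cref{thm:convergence_time} and the bookkeeping of \cref{sec:constructive_QLLL_proof}, and strong convergence via \cref{lem:weak-strong-delta} with the substitution $1/\varepsilon \mapsto 1/(m\delta\varepsilon)$. Your explicit remark that the $(1-\epsilon)^{\Abs{\tau}}/(m\delta)^{\Abs{\tau}}$ factor collapses to $1$ is precisely the content of the paper's proof of \cref{thm:expected_violations_exp}, so nothing is missing.
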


Note that all the arguments of \cref{sec:CP_map_convergence} go through for
the non-commutative case, so that
\cref{thm:non-commuting_QLLL_poly,thm:non-commuting_QLLL_exp} lead directly to
CP map convergence results for more general classes of maps. This generalises
\cref{thm:CP_map_convergence,cor:ground-state_convergence} to non-commutative
local Hamiltonians, and their associated dissipative state engineering CP
maps.

\section{Conclusions}\label{sec:conclusions}
We have proven that a simple quantum algorithm (\cref{alg:QLLL_converger})
efficiently constructs a state satisfying the requirements of the Quantum
Lov\'asz Local Lemma (\cref{thm:QLLL}) of \cite{QLLL} in the setting of
commuting constraints. Not only does this give an efficient algorithm for
constructing the quantum state whose existence is guaranteed by the
commutative QLLL. In fact, since we do not assume the QLLL in order to prove
that the algorithm finds the state, this also gives an independent,
constructive proof of the commutative QLLL itself.

Until \citet{Beck91} provided the first algorithm, it was not a priori clear
whether the combinatorial objects whose existence was guaranteed by the
classical LLL could also be constructed efficiently. In the quantum case, it
was perhaps even less obvious whether the states guaranteed to exist by the
QLLL could be prepared efficiently, as those states can have a highly complex
entanglement structure. Our result gives a new quantum algorithm for
efficiently constructing these complex states in the commutative case. Or, in
physics terms, our result provides a new method of cooling to the ground state
of certain many-body Hamiltonians with commuting local terms.

In the non-commutative setting, we can only prove the constructive QLLL modulo
a technical conjecture that the probability of witness trees is at most
polynomially weaker than the bound we have proven in the commuting case
(\cref{conj:cubitt-schwarz_poly}). If we impose stronger Lov\'asz conditions
(which reduce to the usual conditions in the commutative case), we can prove a
constructive QLLL using a weaker conjecture (\cref{conj:cubitt-schwarz_exp}),
which weakens the commutative bound by an exponential factor in the
non-commutative setting. We have given a simple counter-example showing that
the bound from the commutative case does not hold in general. The violation of
the commuting bound is directly attributable to the measurement-disturbance
effect of non-commuting quantum measurements, which means that even
``satisfied'' outcomes can disturb the state in an undesirable way.

We can prove both our conjectures in the simplest case of single-vertex trees.
This is already non-trivial, as it shows that the ``satisfied'' measurements
that can cause so much trouble in the non-commutative case can effectively be
ignored up to the first violation. However, as our counter-example
demonstrates, the case of multiple violations is substantially different, and
proving the conjectures in general is likely to require new ideas and
techniques. Our conjectures are supported by numerical evidence for some small
multi-vertex trees, though the numerics we have done are very limited.
However, it is very likely that the conjectures \emph{must} hold if the
natural quantum generalisation of Moser's algorithm (\cref{alg:quantum}) is to
work. Since the probability of witness trees occurring in the algorithm's log
is directly related to the expected number of violations seen by the algorithm
(\cref{thm:expected_violations,thm:expected_violations_poly,thm:expected_violations_exp}),
if the witness tree probability does not shrink fast enough with the size of
the tree, the expected number of violations will be unbounded. Thus disproving
our conjectures would strongly suggest that an entirely different approach to
the one pioneered by Moser in the classical setting is required in the
non-commutative quantum setting. (Or that there is no efficient constructive
version of the general QLLL.)

In the classical case, \cite{MoserTardos} impose a slight restriction on the
events that feature in the LLL, requiring that they be determined by an
underlying set of random variables. In the quantum case, we imposed the
analogous restriction, requiring that the subspaces in the QLLL are defined on
an underlying set of qudits. \cite{KashyapSzegedy} have generalised the
original \cite{MoserTardos} results, and removed this restriction in the
classical case. Their results also prove a constructive algorithm right up to
the \keyword{Shearer bound}, the tightest possible version of the Lov\'asz
conditions. Finally, they also generalise an earlier result of \cite{HSS10},
showing that the Moser algorithm is efficient in the number of variables even
if the number of events is super-polynomial. It seems likely that these
results can be generalised to the quantum setting, at least in the commutative
case. In particular, this would remove the assumption of an underlying tensor
product structure, proving a constructive version of the QLLL for general
subspaces, matching the formulation of \cite{QLLL}.

Although not expressed in this way in the published versions of the papers,
Moser pointed out that his constructive proof \citep{Moser} of the symmetric
Lov\'asz Local Lemma (\cref{cor:symmetric_LLL}) can be formulated as an
elegant compression argument \citep{Moser_notes,TerryTau_blog}: the sequence
of random bits used by the algorithm can be recovered perfectly from the data
in the execution log and final output, but if the length of the log grows
indefinitely, then this data can be compressed into fewer bits than the
entropy. The strong converse of Shannon's noiseless coding theorem (see e.g.\
\citet{Cover+Thomas}) imposes an exponential suppression of the probability of
compressing below the entropy, implying a linear bound on the expected length
of the log, hence also the run-time. This compression argument can be extended
to the general Lov\'asz Local Lemma, and it is reasonably clear
\citep{Moser_notes} that with a little effort it could even give tight
constants in the general LLL (\cref{thm:LLL}). Indeed, the witness trees and
coupling argument of \citet{MoserTardos} essentially contains an implicit
compression argument, but elegantly side-steps the necessity of designing an
explicit compression scheme for the log. Our witness tree and quantum coupling
argument again contains an implicit compression argument. But the compression
argument can also be generalised to the quantum case explicitly, either making
use of the strong converse of the Schumacher noiseless coding theorem
\citep{WinterPhD,OgawaNagaoka}, or using a more general information-theoretic
analysis~\citep{Or+Itai}.

\enlargethispage{2em}
\paragraph{Acknowledgements}
We greatfully acknowledge insightful discussions with Julia Kempe on the topic
of the QLLL, and particularly Or Sattath both for valuable discussions and for
repeatedly pointing out errors in earlier versions of our proofs. We would
also like to thank Robin Moser for taking the time to discuss his beautiful
proof with us, and for sharing valuable insights into the classical LLL, and
Itai Arad for sharing additional resuts on the QLLL. TSC would like to thank
Andreas Winter, Aram Harrow and David Perez-Garcia for valuable discussions,
James Martin for instruction in coupling arguments, and Frank Verstraete and
the University of Vienna for their hospitality throughout the visit during
which some of this work were carried out. Early parts of this work were
carried out whilst TSC was at the University of Bristol, supported by a
Leverhulme Early-Career fellowship. TSC is now supported by a Juan de la
Cierva fellowship, the EU project QUEVADIS, and by Spanish grants QUITEMAD,
I-MATH, and MTM2008-01366. MS acknowledges support by the Austrian AMS
Bildungskarenz programme and the Austrian SFB project FoQuS (F4014).

\bibliographystyle{abbrvnat}
\bibliography{QLLL}

\end{document}